\newcommand{\Se}{\mathcal{S}}
\newtheorem{thm}{Theorem}[section]
\newtheorem {asp}{Assumption}[section]
\newtheorem{rmk}{Remark}[section]
\newtheorem{cor}{Corollary}[section]
\numberwithin{equation}{section}
\DeclareMathOperator{\Conv}{Conv}
\newcommand{\eps}{\varepsilon}
\newcommand{\M}{\mathcal{M}}
\newcommand{\E}{\mathbb{E}}
\newcommand{\BE}{\mathbf{E}}
\newcommand{\BB}{\mathbf{B}}
\newcommand{\BX}{\mathbf{X}}
\newcommand{\bx}{\mathbf{x}}
\newcommand{\by}{\mathbf{y}}
\newcommand{\bc}{\mathbf{c}}
\newcommand{\N}{\mathbb{N}}
\newcommand{\CN}{\mathcal{N}}
\newcommand{\PP}{\mathbb{P}}
\newcommand{\R}{\mathbb{R}}
\newcommand{\Lom}{\mathcal{L}}
\numberwithin{equation}{section}
\newcommand{\bed}{\begin{displaymath}}
\newcommand{\eed}{\end{displaymath}}
\newcommand{\bea}{\bed\begin{array}{rl}}
\newcommand{\eea}{\end{array}\eed}
\newcommand{\barray}{\begin{array}{ll}}
\newcommand{\earray}{\end{array}}
\newcommand{\diag}{{\rm diag}}
\def\disp{\displaystyle}
\newcommand{\1}{\boldsymbol{1}}
\newcommand{\dist}{\mathrm{dist}}
\def\bar{\overline}
\def\hat{\widehat}
\def\a.s{\text{\;a.s.\;}}
\def\supp{\text{supp\,}}
\def\dist{{\rm dist}}
\title[Stationary distributions]{Stationary distributions of persistent ecological systems}
\author[A. Hening]{Alexandru Hening* }
\thanks{*Corresponding Author}
\address{Department of Mathematics\\
Texas A\&M University\\
Mailstop 3368\\
College Station, TX 77843-3368\\
United States
}
\address{Department of Mathematics\\
Tufts University\\
Bromfield-Pearson Hall\\
503 Boston Avenue\\
Medford, MA 02155\\
United States
}
\email{al.hening@gmail.com}
\author[Y. Li]{Yao Li}
\address{Department of Mathematics and Statistics\\
University of Massachusetts Amherst\\
710 N Pleasant Street\\
Amherst, MA, 01003\\
United States
}
\email{yaoli@math.umass.edu}
\keywords{Persistence; Lotka--Volterra; Beddington-DeAngelis; rock-paper-scissors; invasion rate; random environmental fluctuations; stationary distribution; Monte Carlo.}
\subjclass[2010]{92D25, 37H15, 60H10, 60J05, 60J99, 65C05, 60H35, 37M25}
\begin{document}
\maketitle
\begin{abstract}
We analyze ecological systems that are influenced by random environmental fluctuations. We first provide general conditions which ensure that the species coexist and the system converges to a unique invariant probability measure (stationary distribution). Since it is usually impossible to characterize this invariant probability measure analytically, we develop a powerful method for numerically approximating invariant probability measures. This allows us to shed light upon how the various parameters of the ecosystem impact the stationary distribution.

We analyze different types of environmental fluctuations. At first we study ecosystems modeled by stochastic differential equations. In the second setting we look at piecewise deterministic Markov processes. These are processes where one follows a system of differential equations for a random time, after which the environmental state changes, and one follows a different set of differential equations -- this procedure then gets repeated indefinitely. Finally, we look at stochastic differential equations with switching, which take into account both the white noise fluctuations and the random environmental switches.

As applications of our theoretical and numerical analysis, we look at competitive Lotka--Volterra, Beddington-DeAngelis predator-prey, and rock-paper-scissors dynamics. We  highlight new biological insights by analyzing the stationary distributions of the ecosystems and by seeing how various types of environmental fluctuations influence the long term fate of populations.
\end{abstract}
\tableofcontents

\section{Introduction}\label{s:intro}

One of the fundamental questions in population biology is understanding under what conditions interacting species coexist. It is well documented that one has to look carefully at the interplay between biotic interactions and environmental fluctuations when trying to determine criteria for coexistence or extinction. Sometimes biotic effects can result in species going extinct. However, if one adds the effects of the environment, extinction might be reversed into coexistence. These phenomena have been seen in competitive settings as well as in settings where prey share common predators \citep{CW81, AHR98, H77}. In other instances, deterministic systems that coexist become extinct once one takes into account environmental fluctuations \citep{HGT94}.
One successful way of analyzing the interplay between biotic interactions and environmental noise is by modelling the populations as discrete or continuous-time Markov processes. The problem of coexistence or extinction then becomes equivalent to studying the asymptotic behaviour of these Markov processes.

Throughout the years, the way ecologists think of environmental stochasticity has changed from seeing it as something that obfuscates the real patterns, and is therefore a nuisance, to viewing it as a mechanism that can create many new interesting behaviours \citep{BC18}. There are many different ways of modeling random environmental fluctuations. It is quite common for continuous time dynamics to focus on white noise fluctuations and to model the dynamics by a system of stochastic differential equations (SDE). For some systems, the randomness might not be best modelled by SDE \citep{T77} -- different situations will require different types of environmental stochasticity.

A natural way of analyzing the coexistence of species is looking at the average per-capita growth rate of a population when rare. Intuitively, if this growth rate is positive, the respective population increases when rare and can invade, while if it is negative the population decreases and goes extinct. If there are two populations, coexistence is ensured as long as each population can invade when it is rare and the other population is stationary \citep{T77, CE89, EHS15}. This criterion breaks down when one has more than two interacting populations \citep{SBA11}.

There is a general theory of permanence for deterministic models \citep{H81, H84, HJ89, HS98, ST11}. It can be shown that a sufficient condition for persistence is the existence of a fixed set of weights associated with the interacting populations, such that this weighted combination of the populations's invasion rates is positive for any invariant measure supported by the boundary (i.e. associated to a sub-collection of populations) -- see \cite{H81}. This coexistence theory has been generalized to stochastic difference equations \citep{SBA11, BS19, CHN19}, stochastic differential equations \citep{SBA11,HN16, CHN19}, and recently to general Markov processes \citep{B18}. The theory is not as well developed for ecosystems that involve both a continuous and a discrete component or ecosystems systems exhibiting multiple timescales. We close this gap by providing a number of new persistence results.

One natural class of processes that model the joint dynamics of the environment and the species densities is that of piecewise deterministic Markov processes (PDMP). The basic intuition behind PDMP is that due to different environmental conditions the functional way species interact can change. \cite{TL16} show that the predation behavior can vary with the environmental conditions and therefore change predator-prey cycles. Another example is that of plankton ecosystems where changing environmental conditions can facilitate coexistence \citep{H61,LC16,LK01}.
Since the environment is random, its changes (or switches) cannot be predicted in a deterministic way. For a PDMP, the process follows a deterministic system of differential equations for a random time, after which the environment changes, and the process switches to a different set of ordinary differential equations (ODE), and follows the dynamics given by this ODE for a random time. The procedure then gets repeated indefinitely.

A second class of processes we analyze is the one given by stochastic differential equations with switching (SSDE). These processes involve a discrete component that keeps track of the environment and which changes at random times. In a fixed environmental state the system is modelled by stochastic differential equations. This way we can capture the more realistic behaviour of two types of environmental fluctuations:
\begin{itemize}
  \item major changes of the environment (daily or seasonal changes),
  \item fluctuations within each environment.
\end{itemize}

We provide an in depth analysis in the deterministic/SDE/PDMP/SSDE settings for three ecological models. The first model is the classical Lotka--Volterra competition system with two species. The second model is a predator-prey system with Beddington-DeAngelis functional response. The third model is a system with rock-paper-scissors dynamics. For these three examples we look at the difference between the ODE, SDE, PDMP and SSDE frameworks.

Even though there are now powerful analytical results to study when species coexist, almost nothing is known about the stationary distribution of coexisting species. We make progress in this direction by developing a rigorous method of approximating the invariant probability measures of SDE, PDMP and SSDE. We prove rigorously that the approximations converge to the correct invariant probability measure. These new approximation methods are significantly more accurate than the usual Monte Carlo results. We make use of them in order to shed light on what the equilibrium distribution of species looks like.

The paper is organized as follows. In Sections \ref{s:SDE}, \ref{s:PDMP} and \ref{s:SSDE} we present persistence results for the frameworks of SDE, PDMP and SSDE. We describe our numerical methods for computing invariant probability measures in Section \ref{s:num_approx}. Section \ref{s:LV} provides an in depth analysis of a two-species Lotka--Volterra competition system. A predator-prey system with Beddington-DeAngelis functional response is explored in Section \ref{s:BD}. Results for three species competing according to rock-paper-scissors dynamics appear in Section \ref{s:RPS}. Finally, Section \ref{s:disc} is concerned with a discussion of our results.

\section{Stochastic differential equations} \label{s:SDE}
Consider the general nonlinear systems of the form
\begin{equation}\label{e:system}
dX_i(t)=X_i(t) f_i(\BX(t))dt+X_i(t)g_i(\BX(t))dE_i(t), ~i=1,\dots,n
\end{equation}
where
$\BE(t)=(E_1(t),\dots, E_n(t))^T=\Gamma^\top\BB(t)$
$\Gamma$ is a $n\times n$ matrix such that
$\Gamma^\top\Gamma=\Sigma=(\sigma_{ij})_{n\times n}$
and $\BB(t)=(B_1(t),\dots, B_n(t))$ is a vector of independent standard Brownian motions. The system \eqref{e:system} describes the dynamics of $n$ interacting populations whose densities at time $t\geq 0$ are given by $\BX(t):=(X_1(t), \dots, X_n(t))$.
We will denote by $\PP_\by(\cdot)=\PP(~\cdot~|~\BX(0)=\by)$ and $\E_\by[\cdot]=\E[~\cdot~|~\BX(0)=\by]$ the probability and expected value given that the process starts at $\BX(0)=\by$.

The drift term of our system $X_i(t)
f_i(\BX(t))$ is due to the deterministic dynamics (abiotic and biotic factors)
while the diffusion term $X_i(t)g_i(\BX(t))dE_i(t)$ is due to the effects of random environmental fluctuations. If the population densities at some point in time are $\bx$ then $f_i(\bx)$ is the per capita growth rate of the $i$th population/species if there are no environmental fluctuations. The matrix $\Gamma^T\Gamma$ captures the covariance structure of the environmental fluctuations.
\begin{rmk}
We note that just adding a stochastic fluctuating term to a
deterministic model has some short comings because it does not usually
give a mechanism for how different species are influenced by the environment. Instead, following the fundamental work by \cite{T77} we
see the SDE models as ``approximations for more realistic, but often
analytically intractable, models''. In particular, SDE can be seen as
scaling limits, or approximations, of stochastic difference
equations.
\end{rmk}

  The random environmental fluctuations make it impossible, under natural assumptions, for \eqref{e:system} to have non-trivial fixed points. Whenever $f_i(\bx_0)=0$, if $g_i(\bx_0)\neq 0$ the diffusion term will push $\BX$ away from $\bx_0$. This is one reason why fixed points are not usually useful concepts when studying ecosystems that are exposed to environmental fluctuations.
  If $\BX(0)=\bx\in \R^{n,\circ}_+:=(0,\infty)^n $ we say the population $X_i$ goes \textit{extinct} if for all $\bx\in\R^{n,\circ}_+$
\[
\PP_\bx\left\{\lim_{t\to\infty}X_i(t)=0\right\}=1.
\]
  A natural definition of persistence in a stochastic environment is the one given by \cite{C82}: the species $X_1,\dots, X_n$ \textit{persist in probability} if
for any $\eps>0$, there exists $\delta>0$ such that for any $\BX(0)=\by \in \R^{n,\circ}_+$ we have with probability 1 that
\[
\liminf_{t\to\infty}\PP_\by\left\{X_j(t)>\delta, j=1,\dots,n\right\} \geq 1-\eps.
\]
This definition says that with high probability the species stay away from the extinction set $\partial \R_+^{n,\circ}:= \R_+^n\setminus \R_+^{n,\circ}$. For any $\eta>0$ let
\[
\Se_\eta:=\{\bx,\in\R_+^{n,\circ}~:~\min_i x_i\leq \eta\}
\]
be the subset of $\R_+^{n,\circ}$ where at least one species is within $\eta$ of extinction.
For any $t\in \N$ define the \textit{normalized occupation measure}
\[
\Pi_t(B):=\frac{1}{t}\int_{0}^t\1\{\BX(s)\in(B)\}\,ds
\]
where $\1$ is the indicator function and $B$ is any Borel subset of $\R_+^{n,\circ}$. We note that $\Pi_t$ is a random probability measure and $\Pi_t(B)$ tells us the proportion of time the system spends in $B$ up to time $t$. We say $\BX(t)$ is \textit{almost surely stochastically persistent} \citep{BS19} if for all $\eps>0$ there exists $\eta(\eps)=\eta>0$ such that for all $\bx\in \R_+^{n,\circ}$
\[
\liminf_{t\to\infty} \Pi_t(\R_+^{n,\circ}\setminus \Se_\eta)>1-\eps, ~\BX(0)=\bx.
\]

In this paper we are concerned with a stronger version of persistence, which does not only tell us that the species stay away from extinction, but that they also converge to some type of random equilibrium.

The probability measure $\pi$ is an \textit{invariant probability measure} if, whenever one starts the process with initial conditions distributed according to $\pi$, then for any time $t\geq 0$ the distribution of $\BX(t)$ is given by $\pi$. In other words, $\pi$ is the random equivalent of a fixed point, or more generally, an attractor. A community of species which has an invariant probability measure $\pi$ has, loosely speaking, a random equilibrium characterized by $\pi$.

The process $\BX$ is said to be \textit{strongly stochastically persistent} if it has a unique invariant probability measure $\pi^*$ on $\R^{n,\circ}_+$ and
\begin{equation}
\lim\limits_{t\to\infty} \|P_\BX(t, \mathbf{x}, \cdot)-\pi^*(\cdot)\|_{\text{TV}}=0, \;\mathbf{x}\in\R^{n,\circ}_+
\end{equation}
where $\|\cdot,\cdot\|_{\text{TV}}$ is the total variation norm and $P_\BX(t, \mathbf{x}, A)=\PP(\BX(t)\in A~|~\BX(0)=\bx)$ is the transition probability of $(\BX(t))_{t\geq 0}$. We note that strong stochastic persistence implies persistence in probability and almost sure stochastic persistence \citep{HN16, BS19}.

Building a theory of stochastic persistence therefore boils down to finding conditions which imply the existence of a unique invariant probability measure $\pi^*$ which lives on the persistence set $\R^{n,\circ}_+$. We need a few more concepts in order to present this theory.
Let $\Conv\M$ denote the set of invariant measures of $\BX(t)$ whose support is contained in $\partial\R^n_+$. The set of extreme points of $\Conv\M$, denoted by $\M$, is the set of ergodic invariant measures with support on the boundary $\partial\R^n_+$. If a measure lives on the boundary $\partial\R^n_+$ it describes a strict subcommunity of species (at least one of the $n$ species is absent/extinct) that are at a random equilibrium.

If $\mu\in\M$ is an invariant measure and $\BX$ spends a lot of time close to its support, $\supp(\mu)$, then it will get attracted or repelled in the $i$th direction according to the \textit{Lyapunov exponent} (or \textit{invasion rate})
\begin{equation}\label{e:lyapunov_sde}
\lambda_i(\mu)=\int_{\partial\R^n_+}\left(f_i(\bx)-\frac{\sigma_{ii} g_i^2(\bx)}{2}\right)\mu(d\bx).
\end{equation}
\textit{\textbf{Biological interpretation:} Intuitively, $\lambda_i(\mu)$ tells us what happens if we introduce species $i$ at a very low density into the subcommunity whose equilibrium is described by $\mu$. If $\lambda_i(\mu)>0$ species $i$ tends to persist while if $\lambda_i(\mu)<0$ species $i$ tends to go extinct. The quantity $\lambda_i(\mu)$ averages the log growth rate of the dynamics of species $i$ around the random equilibrium given by $\mu$.
}

The following theorem \citep{SBA11, HN16, B18, CHN19} gives a powerful criterion for persistence.

\begin{thm}\label{t:p_sde}
If for any $\mu\in\Conv(\M)$,
$$\max_{i=1,\dots,n}\lambda_i(\mu)>0,$$
and the technical assumptions from Appendix \ref{s:a_SDE} hold, then the system is strongly stochastically persistent: there exists a unique invariant measure $\mu^*$ with support on $\R^{n,\circ}_+$, and $\BX$ converges to $\mu^*$ in total variation
$$
\lim\limits_{t\to\infty} \left\|P(t,\mathbf{x}, \cdot)-\mu^*(\cdot)\right\|_{TV}=0, \bx\in\R^{n,\circ}_+.
$$
Furthermore, the rate of convergence is exponential.
\end{thm}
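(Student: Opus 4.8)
The statement is the continuous-time analogue of the stochastic persistence results of \cite{SBA11, HN16, B18, CHN19}, so the plan is to follow the two-stage program: first use the invasion-rate hypothesis to show that mass cannot escape to $\partial\R^n_+$, and then upgrade this to uniqueness and exponential convergence via a Foster--Lyapunov/Harris argument. The first move is to convert the hypothesis into fixed weights. Since each $\lambda_i(\mu)$ is affine in $\mu$ and $\Conv\M$ is convex and weak-$*$ compact, I apply Sion's minimax theorem to the bilinear map $(\bp,\mu)\mapsto\sum_i p_i\lambda_i(\mu)$ on the product of the probability simplex and $\Conv\M$. Because $\max_{\bp}\sum_i p_i\lambda_i(\mu)=\max_i\lambda_i(\mu)$, the hypothesis gives $\min_{\mu}\max_i\lambda_i(\mu)>0$, hence there are weights $p_1,\dots,p_n\ge 0$, $\sum_i p_i=1$, and a constant $\rho>0$ with
\[
\sum_{i=1}^n p_i\lambda_i(\mu)\ge 2\rho\qquad\text{for all }\mu\in\M.
\]

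Next I would analyze the function $U(\bx)=\sum_i p_i\ln x_i$, whose generator is $\Lom U(\bx)=\sum_i p_i\big(f_i(\bx)-\tfrac12\sigma_{ii}g_i^2(\bx)\big)=:h(\bx)$, so that $\int h\,d\mu=\sum_i p_i\lambda_i(\mu)\ge 2\rho$ for every boundary ergodic measure. It\^o's formula gives the decomposition
\[
U(\BX(t))-U(\BX(0))=\int_0^t h(\BX(s))\,ds+M_t=t\!\int h\,d\Pi_t+M_t,
\]
where $M_t$ is a martingale whose quadratic variation grows at most linearly under the Appendix \ref{s:a_SDE} assumptions, so $M_t/t\to0$ a.s. If persistence failed, I could extract times $t_k\to\infty$ along which $\Pi_{t_k}$ converges weakly to a measure charging $\partial\R^n_+$; its boundary part decomposes into elements of $\M$ and hence integrates $h$ to at least $2\rho$. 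Then the displayed identity forces $U(\BX(t_k))\to+\infty$, i.e. $\BX(t_k)$ escapes to infinity, which contradicts the confinement (tightness-at-infinity) estimate guaranteed by the technical assumptions. This contradiction yields almost sure stochastic persistence and, in particular, that no invariant measure charges the boundary.

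For uniqueness and the exponential rate I would package the above into a single Foster--Lyapunov function: combining $\prod_i x_i^{-p_i}=e^{-U}$, which blows up at $\partial\R^n_+$ and satisfies a geometric drift inequality near the boundary by the computation above, with the confinement function controlling large $\bx$, one obtains $V$ with $\Lom V\le -cV+d\,\1_K$ for some compact $K\subset\R^{n,\circ}_+$. The technical assumptions (nondegeneracy/H\"ormander condition on $\Gamma$ and the $g_i$) make the transition kernel strong Feller and irreducible on $\R^{n,\circ}_+$, so every compact subset of the interior is petite and all invariant measures are mutually equivalent. The Meyn--Tweedie geometric ergodicity theorem (equivalently Harris' theorem) then yields a unique invariant measure $\mu^*$ on $\R^{n,\circ}_+$ together with exponential convergence in total variation.

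The main obstacle is the step that turns the averaged, boundary-only inequality $\int h\,d\mu\ge 2\rho$ into a genuine drift inequality for $e^{-U}$ in a full neighborhood of $\partial\R^n_+$. Because $U$ is unbounded both at the boundary and at infinity, the occupation-measure contradiction and the drift estimate must be localized face-by-face on $\partial\R^n_+$ and then glued, and one must rule out the degenerate possibility that $\BX$ lingers near the boundary along a subsequence without its occupation measure settling onto a single ergodic component. This localization and the accompanying uniform continuity estimates are exactly where the careful machinery of \cite{SBA11, HN16, CHN19} is required.
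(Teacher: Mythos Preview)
The paper does not prove this theorem; it is stated as a known result and attributed to \cite{SBA11, HN16, B18, CHN19} without any argument beyond the citation. Your sketch is a faithful outline of the strategy actually carried out in those references: the minimax step producing fixed weights $p_i$, the logarithmic function $U(\bx)=\sum_i p_i\ln x_i$ with $\Lom U=h$, the occupation-measure contradiction (in the spirit of \cite{SBA11}) combined with the Foster--Lyapunov function of the form $(1+\bc^\top\bx)^{1+\gamma}\prod_i x_i^{-\delta p_i}$ and Harris' theorem (as in \cite{HN16}). You have also correctly flagged the genuine technical crux, namely turning the averaged inequality on boundary ergodic measures into a uniform drift estimate in a neighborhood of $\partial\R^n_+$; this is precisely the content of the key lemmas in \cite{HN16, CHN19}.
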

\textit{\textbf{Biological Interpretation:} If each subcommunity of species that persists, and is characterized by an invariant probability measure $\mu\in \Conv(\M)$, can be invaded by at least one species that is not part of the subcommunity, then the full community of $n$ species persists. When the process gets close to the extinction boundary, the species which are close to extinction will grow/decay exponentially fast according to their invasion rates. Since at least one invasion rate is positive, the process gets pushed away from the boundary and make extinction impossible.
}
\subsection{Single species ecosystems} Suppose the species is governed by
\[
dX(t) = X(t) f(X(t))dt + g(X(t))dB(t)
\]
Then, if the Lyapunov exponent is positive, i.e.
\[
f(0) - \frac{g^2(0)}{2}>0
\]
by Theorem \ref{t:p_sde} the species persists and converges to a unique invariant probability measure $\mu^*$.

\section{Piecewise deterministic Markov processes}\label{s:PDMP}

As before, we look at a system of $n$ unstructured interacting populations, let $X_i(t)$ denote the density of the $i$th population at time $t\geq 0$, and set $\BX(t)=(X_1(t),\dots,X_n(t))$. Suppose $(r(t))$ is a continuous-time process taking values in the finite state space $\CN=\{1,\dots,n_0\}.$ This Markov chain keeps track of the environment, so if $r(t)=i\in\CN$ this means that at time $t$ the dynamics takes place in environment $i$. Once one knows in which environment the system is, the dynamics are given by a system of ODE. The PDMP can therefore be written

\begin{equation}\label{e1-pdm}
dX_i(t)=X_i(t)f_i(\BX(t),r(t))dt, i=1, \dots, n,
\end{equation}
where $f_i$ now depends on both the species densities and the environmental state $r(t)$.
In order to have a well-defined system one has to specify the switching-mechanism, e.g. the dynamics of the process $(r(t))$.
We assume that the switching intensity of $r(t)$ depends on the state of $\BX(t)$ as follows
\begin{equation}\label{e:tran}\begin{array}{ll}
&\disp \PP\{r(t+\Delta)=j~|~r(t)=i, \BX(s),r(s), s\leq t\}=q_{ij}(\BX(t))\Delta+o(\Delta) \text{ if } i\ne j \
\hbox{ and }\\
&\disp \PP\{r(t+\Delta)=i~|~r(t)=i, \BX(s),r(s), s\leq t\}=1+q_{ii}(\BX(t))\Delta+o(\Delta).
\end{array}
\end{equation}
where $q_{ii}(\bx):=-\sum_{j\ne i}q_{ij}(\bx)$. We assume that $q_{ij}(\bx)$ is a bounded continuous function for each $i,j\in\CN$ and the matrix $Q(\bx)=(q_{ij}(\bx))_{n_0\times n_0}$ is irreducible. It is well-known that a process $(\BX(t),r(t))$ satisfying \eqref{e1-pdm} and \eqref{e:tran}
is a strong Markov process \citep{D84}.

The simplest case would be to assume that the transition matrix $Q(\bx)=(q_{ij}(\bx))_{n_0\times n_0} = (q_{ij})_{n_0\times n_0}$ is independent of the densities of the species. In this case the jump times will be exponentially distributed and independent of the process $\BX$. Nevertheless, it is useful from an ecological standpoint to allow the state of the environment $r(t)$ to be influenced by the population $\BX$. This is because there can be feedbacks between the environment and the species living in it. One example would be ecosystem engineers like beavers or oysters -- these change the structure of the environment and thereby create a population-environment feedback \citep{JLS94, CWH09, MLPS16}. Another example would be a fire-vegetation feedback \citep{SL12}.

Piecewise deterministic Markov processes have been used recently by \cite{BL16, HN20} to prove how competitive exclusion can be reversed into coexistence. The authors look at a two dimensional competitive Lotka-Volterra system in a fluctuating environment. They show that the random switching between two environments that are both favorable to the same species, e.g. the favored species persists and the unfavored species goes extinct, can lead to the extinction of this favored species and the persistence of the unfavored species, to the coexistence of the two competing species, or to bistability, where depending on the initial condition one species persists and the other goes extinct.

In \cite{HS19} the authors look at a system that switches between two deterministic classical Lotka-Volterra predator-prey systems -- the assumption is that there are no intraspecific competition terms for the prey or predator species. Even though for each deterministic predator-prey system the predator and the prey densities form closed periodic orbits, it is shown that the switching makes the system leave any compact set. Moreover, in the switched system, the predator and prey densities oscillate between $0$ and $\infty$.

The above examples show how PDMP can exhibit dynamics that are radically different from those of each fixed environment.
\subsection{Mathematical framework}
The quantity $\PP_{\bx,r}(A)$ will denote the probability of event $A$ if $(\BX(0), r(0))=(\bx,k)$. Call $\mu$ an invariant measure for the process $\BX$ if $\mu(\cdot,\cdot)$ is a measure such that for any $k\in \CN$ $\mu(\cdot,k)$ is a Borel probability measure on $\R_+^n$ and, if one starts the process with initial conditions distributed according to $\mu(\cdot,\cdot)$, then for any time $t\geq 0$ the distribution of $(\BX(t), r(t))$ is given by $\mu(\cdot,\cdot)$.

Let $\Conv\M$ denote the set of invariant measures of $(\BX(t),r(t))$ whose support is contained in $\partial\R^n_+\times\CN$. The set of extreme points of $\Conv\M$, denoted by $\M$, is the set of ergodic invariant measures with support on the boundary $\partial\R^n_+\times\CN$.

If $\mu\in\M$ is an invariant measure and $\BX$ spends a lot of time close to its support, $\supp(\mu)$, then it will get attracted or repelled in the $i$th direction according to the \textit{Lyapunov exponent} (or \textit{invasion rate})
\begin{equation}\label{e:lyapunov_pdmp}
\lambda_i(\mu)=\sum_{k\in\CN}\int_{\partial\R^n_+}f_i(\bx, k)\mu(d\bx,k).
\end{equation}
\textit{\textbf{Biological Interpretation:} The invasion rate $\lambda_i(\mu)$ is the average per-capita growth rate of species $i$ when it is introduced at a low density into the subcommunity of species characterized by the invariant probability measure $\mu$. The averaging is done by weighing the different growth rates $f_i(\bx, k)$ according to the measure $\mu$. Note that $f_i(\bx, k)$ is the growth rate when the densities of the $n$ species are $\bx$ and the environment is in state $k$. According to equation \eqref{e:lyapunov_pdmp} the averaging has to be done over species densities as well as the different environments.
}

 The following theorem tells us when there the system exhibits coexistence -- see \cite{BL16,B18, HS19, HN20} for proofs in particular settings.

\begin{thm}\label{t:p_pdmp}
If for any $\mu\in\Conv(\M)$,
$$\max_{i=1,\dots,n}\lambda_i(\mu)>0,$$
i.e. $\mu$ is a repeller, then the system is strongly stochastically persistent, i.e. 
for any $\epsilon>0$, there exists a $\delta>0$ such that
\begin{equation}
\liminf\limits_{t\to\infty} \PP_{\bx,r}\{X_i(t)\geq\delta, i=1,\dots,n\}\geq1-\eps, \bx\in\R_+^{n,\circ}, r\in \CN.
\end{equation}
Under additional irreducibility conditions (see Appendix \ref{s:a_PDMP}), there exists a unique invariant measure $\mu^*$ with support on $\R^{n,\circ}_+\times\CN$, and $\BX$ converges to $\mu^*$ in the following sense
$$
\lim\limits_{t\to\infty} \left\|P(t,(\mathbf{x}, r), \cdot)-\mu^*(\cdot)\right\|_{TV}=0, (\bx,r)\in\R^{n,\circ}_+\times\CN
$$
where $\|\cdot,\cdot\|_{\text{TV}}$ is the total variation norm and $P(t,(\mathbf{x}, r),\cdot)$ is the transition probability of $(\BX(t), r(t))_{t\geq 0}$. Furthermore, the rate of convergence is exponential.
\end{thm}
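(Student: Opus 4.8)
The plan is to follow the general framework for stochastic persistence of Markov processes from \cite{B18}, specialized to the degenerate PDMP setting, in close parallel with Theorem \ref{t:p_sde}. Write $Z(t)=(\BX(t),r(t))$ for the strong Markov process on $\R^n_+\times\CN$, whose extended generator acts on functions $\varphi(\bx,k)$ by
\begin{equation}
\mathcal{L}\varphi(\bx,k)=\sum_{i=1}^n x_i f_i(\bx,k)\,\partial_{x_i}\varphi(\bx,k)+\sum_{j\in\CN}q_{kj}(\bx)\bigl(\varphi(\bx,j)-\varphi(\bx,k)\bigr).
\end{equation}
First I would record the structural hypotheses of Appendix \ref{s:a_PDMP}: a Lyapunov function at infinity that rules out explosion and makes the family of empirical occupation measures tight, together with the Feller property. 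These guarantee that $\Conv\M$ is a nonempty, convex, weak-$*$ compact set of invariant measures, and that every weak-$*$ limit point of the interior empirical measures is invariant.

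Next I would turn the hypothesis $\max_i\lambda_i(\mu)>0$ into a statement about a single Lyapunov function. Because $\mu\mapsto\lambda_i(\mu)$ is affine and $\Conv\M$ is convex and compact, a minimax (Sion) argument applied to the bilinear form $(p,\mu)\mapsto\sum_i p_i\lambda_i(\mu)$ over the probability simplex, exactly as in \cite{B18,HN16}, produces strictly positive weights $p_1,\dots,p_n$ and a constant $\kappa>0$ with
\begin{equation}
\sum_{i=1}^n p_i\,\lambda_i(\mu)\ge\kappa,\qquad \mu\in\Conv\M .
\end{equation}

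The persistence estimate is then the heart of the argument. Set $U(\bx)=-\sum_i p_i\log x_i$; since $U$ does not depend on $k$, the jump part of $\mathcal{L}U$ vanishes and $\mathcal{L}U(\bx,k)=-\sum_i p_i f_i(\bx,k)$, so along trajectories
\begin{equation}
\frac{1}{t}\bigl(U(\BX(0))-U(\BX(t))\bigr)=\frac{1}{t}\int_0^t\sum_{i=1}^n p_i f_i(\BX(s),r(s))\,ds .
\end{equation}
If the process spent asymptotically all of its time near $\partial\R^n_+$, the occupation measures would concentrate on a boundary measure $\mu\in\Conv\M$ and the right-hand side would approach $\sum_i p_i\lambda_i(\mu)\ge\kappa>0$, forcing $U(\BX(t))\to-\infty$, i.e. $\prod_i X_i^{p_i}(t)\to\infty$ --- which contradicts being near the boundary, where this product is small. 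Upgrading this heuristic to a Foster--Lyapunov drift inequality $\mathcal{L}V\le-cV+b\,\1_K$ on an interior compact set $K$ (with $V$ built from $U$ and the infinity Lyapunov function) yields the quantitative $\liminf$ bound claimed in the first conclusion.

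Finally, for uniqueness of $\mu^*$ and exponential convergence in total variation I would appeal to Harris' theorem (Meyn--Tweedie): the drift inequality already furnishes the return-to-compacts estimate, so it remains to establish a minorization (local Doeblin) condition on compact subsets of $\R^{n,\circ}_+\times\CN$. I expect this to be the main obstacle, and the reason for the extra irreducibility hypotheses of Appendix \ref{s:a_PDMP}. A PDMP has no diffusive smoothing, so ellipticity is unavailable; instead one must verify a H\"ormander-type bracket condition on the family of vector fields $\{\bx\mapsto(x_i f_i(\bx,k))_i:k\in\CN\}$ together with accessibility of an interior point, in order to show that the transition kernel has an absolutely continuous component spreading mass over an open set. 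This hypoellipticity-type analysis, rather than the boundary-repulsion argument --- which mirrors the SDE case of Theorem \ref{t:p_sde} --- is where the genuine difficulty lies.
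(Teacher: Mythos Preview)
The paper does not supply its own proof of this theorem; it simply states the result and refers the reader to \cite{BL16,B18,HS19,HN20} for proofs in particular settings. Your sketch follows precisely the \cite{B18} framework the paper cites---tightness via a Lyapunov function at infinity, the minimax step producing positive weights $p_i$, the logarithmic boundary function $U(\bx)=-\sum_i p_i\log x_i$, a Foster--Lyapunov drift, and then Harris' theorem with the minorization coming from the accessibility and bracket hypotheses of Appendix~\ref{s:a_PDMP}---so your approach is exactly the one the paper invokes.
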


\subsection{Single species systems} Consider a single species whose dynamics is given by
\[
\frac{dX}{dt}(t)= X(t) f(X(t),r(t)).
\]
Since the matrix $q(0)$ is irreducible, the associated Markov chain has a stationary distribution $\nu:=(\nu_1,\dots,\nu_{n_0})$ - this is the Markov chain we get if we start with $X(0)=0$.
Note that the extinction set is $\{0\}\times \CN$. The only invariant probability measure supported on this set is $\mu_0:=\delta_0\times \nu$.
As a result of Theorem \ref{t:p_pdmp}, if
\[
\lambda(\mu_0) = \sum_{k=1}^{n_0}\nu_kf(0,k)>0
\]
and some irreducibility conditions hold (see Appendix \ref{s:a_PDMP} or \cite{BBMZ15, B18}) then there is a unique invariant probability measure $\mu^*$ and $X(t)$ converges in total variation to $\mu^*$.

\textit{\textbf{Biological Interpretation:} If the species persists in both environments then it will persist with the environmental switching. If the species does not persist in any environment, then it will also not persist with the environmental switching. However, it is possible to have source and sink environments and still have persistence in the fluctuating system. Suppose there are only two environments, and that $f(0,1)<0, f(0,2)>0$. Then, $\{0\}$ is an attractor for $f(x,1)$ and the species cannot persist in environment $1$. Similarly, $\{0\}$ is not an attractor for $f(x,2)$ and the species persists in environment $2$. However, as long as $\nu_1 f(0,1) + \nu_2 f(0,2)>0$, or equivalently
\[
\frac{\nu_1}{\nu_2}< \left|\frac{f(0,2)}{f(0,1)}\right|
\]
persistence is possible. If the fraction of time spent in the sink environment over the fraction of the time spent in the source environment is smaller than the fraction of the per capita growth rates at $0$ in the source and sink environments, then the system with switching is persistent.
}
\section{Stochastic differential equations with switching}\label{s:SSDE}
In this section we analyze ecological systems that can be modelled by stochastic differential equations with switching (SSDE). These processes are similar to PDMP. The difference is that now, instead of switching between systems of ODE, one switches between systems of SDE.
One gets
\begin{equation}\label{e:SSDE}
dX_i(t)=X_i(t)f_i(\BX(t),r(t))dt + X_i g_i(\BX(t),r(t))dE_i, i=1,\dots,n
\end{equation}
where $\BE(t)=(E_1(t),\dots, E_n(t))^T=\Gamma^\top\BB(t)$ for an $n\times n$ matrix
$\Gamma$ such that
$\Gamma^\top\Gamma=\Sigma=(\sigma_{ij})_{n\times n}$, $\BB(t)=(B_1(t),\dots, B_n(t))$ is a vector of independent standard Brownian motions and. The process $r(t)$ keeps track of the environment state and lives in $\CN=\{1,\dots,n_0\}.$ The switching intensity of the discrete process $r(t)$ will be modelled by
\begin{equation}\label{e:tran2}\begin{array}{ll}
&\disp \PP\{r(t+\Delta)=j~|~r(t)=i, \BX(s),r(s), s\leq t\}=q_{ij}(\BX(t))\Delta+o(\Delta) \text{ if } i\ne j \
\hbox{ and }\\
&\disp \PP\{r(t+\Delta)=i~|~r(t)=i, \BX(s),r(s), s\leq t\}=1+q_{ii}(\BX(t))\Delta+o(\Delta).
\end{array}
\end{equation}
One assumes, as in the PDMP setting, that $q(\cdot)$ is a bounded function which depends continuously on $\bx$. Furthermore, we assume that $q(\bx)$ is irreducible for all $\bx\in \R_+^n$. General properties for these processes have been studied thoroughly \citep{YZ09, ZY09, NYZ17}. However, there are few results regarding the persistence of ecological systems modelled by SSDE. We present below the framework and a powerful persistence result.

\subsection{Mathematical framework}
The quantity $\PP_{\bx,r}(A)$ will denote the probability of event $A$ if $(\BX(0), r(0))=(\bx,k)$. Call $\mu$ an invariant measure for the process $\BX$ if $\mu(\cdot,\cdot)$ is a measure such that for any $k\in \CN$ $\mu(\cdot,k)$ is a Borel probability measure on $\R_+^n$ and, if one starts the process with initial conditions distributed according to $\mu(\cdot,\cdot)$, then for any time $t\geq 0$ the distribution of $(\BX(t), r(t))$ is given by $\mu(\cdot,\cdot)$.

Let $\Conv\M$ denote the set of invariant measures of $(\BX(t),r(t))$ whose support is contained in $\partial\R^n_+\times\CN$. The set of extreme points of $\Conv\M$, denoted by $\M$, is the set of ergodic invariant measures with support on the boundary $\partial\R^n_+\times\CN$.

If $\mu\in\M$ is an invariant measure and $\BX$ spends a lot of time close to its support, $\supp(\mu)$, then it will get attracted or repelled in the $i$th direction according to the \textit{Lyapunov exponent} (or \textit{invasion rate})
\begin{equation}\label{e:lyapunov_ssde}
\lambda_i(\mu)=\sum_{k\in\CN}\int_{\partial\R^n_+}\left(f_i(\bx, k)-\frac{\sigma_{ii}g_i^2(\bx,k)}{2}\right)\mu(d\bx,k).
\end{equation}
We note that if we let $g_i=0$ in \eqref{e:lyapunov_ssde} we get the expression \eqref{e:lyapunov_pdmp} that holds for PDMP. If we have only one switching state, $n_0=1$, \eqref{e:lyapunov_ssde} to the SDE setting from \eqref{e:lyapunov_sde}. In this sense

 Using the methods of \cite{B18} or generalizing \cite{HN16} one can prove the following persistence result.

\begin{thm}\label{t:p_ssde}
If for any $\mu\in\Conv(\M)$,
$$\max_{i=1,\dots,n}\lambda_i(\mu)>0,$$
then the system is strongly stochastically persistent, i.e. 
for any $\eps>0$, there exists a $\delta>0$ such that
\begin{equation}
\liminf\limits_{t\to\infty} \PP_{\bx,r}\{X_i(t)\geq\delta, i=1,\dots,n\}\geq1-\eps, \bx\in\R_+^{n,\circ}, r\in \CN.
\end{equation}
Under additional irreducibility conditions, there exists a unique invariant measure $\mu^*$ with support on $\R^{n,\circ}_+\times\CN$, and $\BX$ converges to $\mu^*$ in the total variation
$$
\lim\limits_{t\to\infty} \left\|P(t,(\mathbf{x}, r), \cdot)-\mu^*(\cdot)\right\|_{TV}=0, (\bx,r)\in\R^{n,\circ}_+\times\CN.
$$
Furthermore, the rate of convergence is exponential.
\end{thm}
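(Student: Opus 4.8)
The plan is to merge the Lyapunov-exponent (invasion-rate) method used for the SDE case in \cite{HN16} with the general Markov-process persistence framework of \cite{B18}, the only structural novelty being that the generator of $(\BX(t),r(t))$ now carries both a diffusion part and a switching (jump) part:
\[
\mathcal{L}h(\bx,k)=\sum_i x_i f_i(\bx,k)\partial_{x_i}h(\bx,k)+\tfrac12\sum_{i,j}\sigma_{ij}x_ix_jg_i(\bx,k)g_j(\bx,k)\partial^2_{x_ix_j}h(\bx,k)+\sum_{l\in\CN}q_{kl}(\bx)\big(h(\bx,l)-h(\bx,k)\big).
\]
Setting $g_i\equiv0$ recovers the PDMP generator behind Theorem \ref{t:p_pdmp} and taking $n_0=1$ recovers the SDE generator behind Theorem \ref{t:p_sde}, which is why the two techniques can be genuinely combined. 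First I would record the standing moment/technical assumptions (the analogues of those used in the SDE and PDMP settings) that make the process non-explosive, Feller and strong Markov, and that supply a norm-like Lyapunov function $V$ with $\mathcal{L}V\le c_1-c_2V$; this produces tightness of the normalized occupation measures $\{\Pi_t\}$ and hence the existence of invariant probability measures together with control in the far field.

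The heart of the argument is to preclude the escape of mass to the extinction set $\partial\R^n_+\times\CN$. Since $\mu\mapsto\lambda_i(\mu)$ is linear and $\Conv\M$ is weak-$*$ compact and convex, the hypothesis $\max_i\lambda_i(\mu)>0$ on $\Conv\M$ is equivalent, by a separating-hyperplane (theorem-of-the-alternative) argument as in \cite{HN16,B18}, to the existence of weights $p_1,\dots,p_n>0$ and a constant $\delta_0>0$ with $\sum_i p_i\lambda_i(\mu)\ge\delta_0$ for every $\mu\in\Conv\M$. Applying $\mathcal{L}$ to $S(\bx):=-\sum_i p_i\log x_i$ annihilates the switching term (as $S$ does not depend on $k$) and, by It\^o, gives $\mathcal{L}S(\bx,k)=-\sum_i p_i\big(f_i(\bx,k)-\tfrac{\sigma_{ii}}{2}g_i^2(\bx,k)\big)$, whose $\mu$-average is precisely $-\sum_i p_i\lambda_i(\mu)\le-\delta_0$ by \eqref{e:lyapunov_ssde}. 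The key step is not to demand that $\mathcal{L}S$ be negative pointwise (it is not), but to show that its time-averages along the process fall below $-\delta_0/2$ whenever the empirical measure concentrates near the boundary: were the process to spend an asymptotically positive fraction of time in $\Se_\eta$, a subsequence of its occupation measures would converge to some $\mu\in\Conv\M$, and the ergodic theorem applied to $\log X_i$ would force $\sum_i p_i\lambda_i(\mu)\le0$, contradicting the weighted bound. Making this rigorous through the Lyapunov function $V_\theta:=\prod_i x_i^{-\theta p_i}=\exp(\theta S)$ for small $\theta>0$, combined with the far-field estimate from Step~1, yields a global drift condition and the almost sure stochastic persistence bound $\liminf_t\Pi_t(\R_+^{n,\circ}\setminus\Se_\eta)>1-\eps$, which in turn delivers the stated lower bound on $\liminf_t\PP_{\bx,r}\{X_i(t)\ge\delta\}$.

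To upgrade persistence to a unique stationary distribution with exponential mixing, I would use that the above drift condition makes the process positively recurrent to a compact set $K\subset\R_+^{n,\circ}\times\CN$. Under the additional irreducibility hypotheses one establishes a minorization on $K$: a Stroock--Varadhan-type support theorem for switching diffusions (hypoellipticity of the diffusion in, or reachable from, at least one environment, together with the irreducibility of $Q(\bx)$ so that every environment is accessible) shows that $K$ is petite and the process is irreducible and aperiodic on the interior; see \cite{YZ09,NYZ17}. Existence and uniqueness of $\mu^*$ on $\R_+^{n,\circ}\times\CN$ and total-variation convergence then follow from Harris' theorem, while the exponential rate follows from the geometric-ergodicity (Down--Meyn--Tweedie) criterion, since $V_\theta$ supplies a drift of the required exponential type.

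I expect the main obstacle to be the averaging step of the second paragraph, namely converting the merely $\mu$-averaged positivity of the invasion rates into a bona fide Lyapunov decay for $(\BX,r)$: because $\mathcal{L}S$ is not sign-definite, one must control the occupation measures near $\partial\R^n_+\times\CN$ and carry out the averaging simultaneously over the species densities \emph{and} the discrete environment, which is exactly what the sum over $k\in\CN$ in \eqref{e:lyapunov_ssde} encodes. A secondary difficulty is the minorization for the hybrid system: since the noise may be degenerate within a single environment, reachability of interior open sets must be extracted from the interplay between the diffusions and the switching rather than from any one environment on its own.
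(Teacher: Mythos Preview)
Your proposal is correct and follows exactly the route the paper indicates: the paper does not give a self-contained proof of this theorem but merely states that it ``can be prove[d] using the methods of \cite{B18} or generalizing \cite{HN16}'', and your sketch is precisely such a generalization---the weighted log-Lyapunov function $S(\bx)=-\sum_i p_i\log x_i$ obtained from the separating-hyperplane reduction, the exponential Lyapunov $V_\theta=e^{\theta S}$, tightness from a norm-like function, and Harris/Meyn--Tweedie for the upgrade to a unique $\mu^*$ with exponential total-variation convergence. If anything, your outline is more detailed than what the paper itself provides.
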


\subsection{Single species systems} Consider a single species whose dynamics is given by
\[
dX(t)= X(t) f(X(t),r(t))dt + X(t)g(X(t),r(t))dB(t).
\]
As argued in the PDMP example, the matrix $q(0)$ is irreducible, so the associated Markov chain has a stationary distribution $\nu^0:=(\nu_1^0,\dots,\nu_{n_0}^0)$.
Once again, the only invariant probability measure living one the extinction set (or boundary) is $\mu_0:=\delta_0\times \nu^0$.
As a result of Theorem \ref{t:p_ssde}, if
\begin{equation}\label{e:mu_0}
\lambda(\mu_0) = \sum_{k=1}^{n_0}\nu_k^0\left(f(0,k)-\frac{g^2(0,k)}{2}\right)>0
\end{equation}
and some weak irreducibility conditions hold \citep{B18} then the species persists, there is a unique invariant probability measure $\mu^*$, and $X(t)$ converges in total variation to $\mu^*$.

\textit{\textbf{Biological Interpretation:} The stochastic per-capita growth rate at $0$ of the switching system, $\lambda(\mu_0)$ is a weighted combination of the stochastic per-capita growth rates at $0$ of the system in each fixed environment, $f(0,k)-\frac{g^2(0,k)}{2}$. The weights are given by the fractions of time the process $r(t)$ spends in each environmental state $k\in \{1,\dots,n_0\}$ when one starts the system at extinction ($X(0)=0$). Say the species goes extinct for $k\in M\subsetneq \{1,\dots,n_0\}$
\[
f(0,k)-\frac{g^2(0,k)}{2}<0 , k\in M.
\]
It is still possible to have persistence when
\[
\sum_{k\notin M }\nu_k^0f(0,k)-\frac{g^2(0,k)}{2} > \sum_{k\in M}\nu_k^0 \left|f(0,k)-\frac{g^2(0,k)}{2}\right|.
\]
Even if some environments are sinks the species can still persist as long as it spends enough time in the source environments.
}

\section{Numerical computation of invariant probability measures}\label{s:num_approx}

The dynamics of populations are random due to the inherent stochastic
nature of environmental fluctuations. As discussed in the above sections,
ecological systems can be modeled by SDE, SSDE, or PDMP. Recent developments have made it
possible to get robust conditions for the persistence and extinction of
these models \citep{HN16, B18, BS19, CHN19}. In certain cases one can show that the
populations converge to a unique invariant probability measure or
steady state. If the ecosystem consists of a single species, e.g. the stochastic dynamics is one-dimensional, there are well known ways in which one can describe the invariant probability measure of a persistent system. If the
dimension is greater or equal to two, i.e. the system has at least two
species or there is more than one environment (for PDMP and SSDE), one can almost never describe the invariant probability
measure analytically. Characteristics of the steady state distribution
could provide valuable information regarding the distribution of
species to theoretical ecologists. We propose to use a new method
introduced in \cite{li2019data} to numerically approximate invariant probability
measures of ecological systems in an efficient fashion.

\subsection{Invariant probability measures of SDE}

For simplicity, we use the following SDE to illustrate the numerical method:
\begin{equation}\label{e:sde1}
dX_i(t)=X_if_i(\BX(t))\,dt + X_i(t) g_i(\BX(t))\,dB_i(t), i=1,\dots,n.
\end{equation}
where $f_i, g_i:\R_+^n\to\R_+$ are continuous  and $(B_1,\dots,B_n)$
is an $n$-dimensional Brownian motion. Without loss of generality, we
assume that $\mathbf{X}$ admits a unique solution and is strongly
stochastically persistent. This implies that $\mathbf{X}$ has a
unique invariant probability measure on $(0, \infty)^{n}$. Let $\bar f_i(\bx):= x_i f_i(\bx)$ and $\bar g_i(\bx):= x_i g_i(\bx).$

Under natural assumptions the distribution of $\BX(t)$ has a probability density function $\ell(t,\bx)$ such that for any measurable set $A$
\[
\PP(\BX(t)\in A) = \int_A \ell(t,\by)\,d\by.
\]
The time evolution of $\ell(t,\bx)$ is described by the \textit{Fokker-Planck equation}
\begin{equation}\label{e:FP}
\begin{split}
\ell_t &= \mathcal{L} \ell (\bx) = -\sum_{i=1}^n \partial_{x_i}(\bar f_i \ell) + \frac{1}{2}\sum_{i,j=1}^n \partial_{x_ix_j}(D_{i,j}\ell)\\
\ell(0,\bx)&=\ell_0(\bx)
\end{split}
\end{equation}
where $D=(\bar g(\bx))^T\Gamma^T\Gamma\bar g(\bx)$ and $\ell_0(\bx)$ is the probability density function
of $\BX(0)$. A invariant probability density function $\ell_*$ is a
probability density function that satisfies the \textit{stationary
  Fokker-Planck equation}
\begin{equation}\label{e:FPS}
   \mathcal{L}\ell_*=0.
\end{equation}
Note that if a solution $\ell_*$ of \eqref{e:FPS} exists, and satisfies $\ell_*\geq 0$ and $\int _{\R_+^{n}}\ell^*(\bx)\,d\bx$ then it defines an invariant probability measure $\pi$.

There are usually two approaches for numerically finding $\ell_*$. The
first one is to solve the Fokker-Planck equation numerically. The main
problem is the boundary condition of the PDE. Since $(0, \infty)^{n}$ is an
unbounded domain, one will have to set $(0, L)^{n}$ as the
numerical domain for a sufficiently large $L$, give zero boundary condition
at $\{x_{i} = L\}$ and a reflecting boundary condition at $\{x_{i} =
0\}$. This method will cause many problems. First, the resultant
linear system only has an identically zero solution. One needs to add a constraint
that the integral of the solution is $1$, and find the least
square solution instead. Second, the diffusion term in most
ecological models vanishes at $\{x_{i} = 0\}$. This singular boundary
condition can cause more error if a reflecting boundary condition is
used. Third, since $L$ has to be very large in practice, this can easily
make the computational cost too high.

A second approach is
using Monte Carlo simulations. For this method one collects $N$
samples of $\BX(t)$, and counts the number of samples in each bin of a grid to
estimate the probability density function. In order to get a high accuracy, one needs to use
a very large number $N$ of samples. This approach also suffers from
the curse of dimensionality: if the dimension $n$ is large then one
needs $N$ to be unrealistically larger.

The solution is to use a data-driven method to compute the invariant
probability measure \cite{li2019data}. This approach works for stochastic difference equations, stochastic differential
equations, stochastic differential equations with switching, and
piecewise deterministic Markov processes. The idea is to use a
combination of the PDE and Monte Carlo (MC) methods. One first
generates a reference solution using MC. This reference solution is then used as a replacement of
the boundary conditions from the PDE method. This new method will
combine the high accuracy of the PDE method and the flexibility of MC
simulations. We explain how to do this for equation \eqref{e:sde1} in a two-dimensional domain
$\mathcal{D}=[a_0,b_0]\times [a_1,b_1]\subset \R_+^{2,\circ}$.

 Construct an $N\times M$ grid with grid size $h=(b_0-a_0)/N =
 (b_1-a_1)/M$. Approximate $\ell$ at the center of each of the
 $N\times M$ boxes $O_{i,j}=[a_0+(i-1)h,a_0+ih]\times
 [a_1+(j-1)h,a_1+jh]$. Denote by ${\bm \ell}^{N,M}$ the numerical
 solution on $\mathcal{D}$ - note that ${\bm \ell}$ can be seen as
 a vector in $\R_+^{NM}$. An entry $\ell_{i,j}$ of ${\bm \ell}$
 will be an approximation of the density function $\ell$ at the center
 of the box $O_{i,j}$. If we discretize the Fokker-Planck equation
 with respect to the centers of the boxes we will get a linear
 constraint $$A{\bm \ell}=0$$ where $A$ is a matrix. The matrix $A$ is the
 so-called \textit{discretized Fokker-Planck operator}. If $D$ is diagonal, a row of equation
 $A{\bm \ell}=0$ reads
\begin{align*}
 &- \frac{1}{h}(\bar f_{1}^{i+1,j}\ell_{i +1, j} - \bar f_{1}^{i+1,j}\ell_{i+1,j})
 - \frac{1}{h}(\bar f_{2}^{i,j+1}\ell_{i, j+1} - \bar f_{2}^{i,j-1}\ell_{i,j-1})
 + \frac{1}{h^{2}}(D_{1,1}^{i+1,j} \ell_{i+1,i} \\
+&
 D_{1,1}^{i-1,j}\ell_{i-1,j}+D_{2,2}^{i,j+1}\ell_{i,j+1} +
 D_{2,2}^{i,j-1}\ell_{i,j-1} - 2 D_{1,1}^{i,j}\ell_{i,j} -
   2D_{2,2}^{i,j}\ell_{i,j})  = 0 \,,
\end{align*}
where $\bar f_{1}^{i,j}$ (resp. $\bar f_{2}, D_{11}, D_{22}$) is the value of
$\bar f_{1}$  (resp. $\bar f_{2}, D_{11}, D_{22}$) at the center of box
$O_{ij}$.

\textbf{Step 1:} Get a reference solution using MC. Define $(\tilde
\BX_n)_{n=1}^N$ for $n=1,\dots, T$ to be the numerical trajectory of
the chain $ \BX(n\delta)$ where $\delta>0$ is the time step of the MC
simulation. Define $\mathbf{v} = \{v_{i, j}\}_{i= 1, j = 1}^{i = M, j
  = N}$ with $v_{i,j} = \frac{1}{Th^2}\sum_{n=1}^T
\mathbf{1}_{O_{i,j}}(\tilde X_n)$. One can show that $\mathbf{v}$ is an
approximate solution to \eqref{e:sde1}.

\textbf{Step 2:} Solve the optimization problem:
\begin{align}
\label{optimization}
  \mbox{min} &\|{\bm \ell}-\mathbf{v}\|_2\\
\mbox{subject to} & A{\bm \ell} =0 \,.
\end{align}

It follows from Theorem \ref{convergence} that the optimization problem
will reduce the error in $\mathbf{v}$, because it projects the
error term from $\mathbb{R}^{NM}$ to $\mathrm{Ker}(A)$, whose
dimension is much smaller. As a result, the norm of the error term can
be significantly reduced by this projection.

\textit{ Intuitively the algorithm we propose works as follows: First get a
  low-accuracy numerical solution using Monte Carlo simulation. Once
  we have this reference solution, we solve an optimization problem,
  which looks for the least squares solution with respect to the
  reference solution, under the constraint given by the numerical
  discretization scheme.This method is more efficient because it does not need to look at
boundary conditions for PDE.}

\subsection{Invariant probability measures of SSDE and PDMP}

In the previous section we explained how our data-driven method works for SDE. We next explain how to extend this to PDMP and SSDE. Take the SSDE
\begin{equation}
  \label{SSDE}
dX_i(t)=X_i(t)f_i(\BX(t),r(t))dt + X_i g_i(\BX(t),r(t))dE_i, i=1,\dots,n
\end{equation}
introduced in Section \ref{s:SSDE} as an example, where $r(t)$ is an independent continuous time Markov chain on $\mathcal{N} = \{1, \cdots, n_{0}\}$. Let $\bar f_i = x_i f_i$ and $\bar g_i = x_ig_i$. One can see \citep{YZ09}
that the probability density function of the invariant probability
measure of equation \eqref{SSDE}, denoted by $\ell_{*}(k, \mathbf{x})$,
must satisfy
\begin{equation}
\label{switchFPE}
  0 = \mathcal{L} \ell_{*}(k, \mathbf{x}) = - \sum_{i =
    1}^{n}\partial_{x_{i}}(\bar f_{i}(k, \mathbf{x}) \ell_{*}(k,
  \mathbf{x})) + \frac{1}{2}\sum_{i,j =
    1}^{n} \partial_{x_{i}x_{j}}(D_{i,j}(k, \mathbf{x}) \ell_{*}(k,
  \mathbf{x})) + \sum_{i = 1}^{n_{0}} \ell_{*}(i,
  \mathbf{x})q_{ik}(\mathbf{x})  \,
\end{equation}
together with $\ell_*(k, \mathbf{x}) \ge 0$ and $\int_{\R_{+}^n} \ell_*(k, \mathbf{x})dx =1$ where $D=\bar g^T\Gamma^T\Gamma \bar g$.
When computing $\ell_{*}$ numerically, we can still divide the state
space $(0, \infty)^{n}\times \mathcal{N}$ into bins.

We look at a 2d
SSDE example. We have a numerical domain $\mathcal{D} = [a_{0},
b_{0}]\times [a_{1}, b_{1}] \times \mathcal{N}$. As in the previous
subsection, the $(i,j)$-th box in the grid is denoted by $O_{i.j}$. After constructing an
$N \times M$ grid on $\mathcal{D}$, a numerical solution ${\bm \ell}
:= {\bm \ell}^{MN}$ has the form ${\bm \ell} = \{\ell_{i,j,k}\}_{i = 1,
j = 1, k = 1}^{i = N, j = M, k = n_{0}}$, where $\ell_{i,j,k}$
represents the probability density function $\ell$ at the center of box
$O_{i,j}$ with state $k$.

The first step is still to run a long trajectory to get
the reference solution $\mathbf{v}$, as described in the previous
subsection. Then we generate a linear constraint $A {\bm \ell} = 0$,
where $A$ comes from a discretization of equation
\eqref{switchFPE}. The next step is to solve the constrained optimization problem
$$\mbox{min} \| {\bm \ell} - \mathbf{v} \|_{2}$$
subject to the constraint $$A {\bm \ell} = 0.$$
This gives a numerical invariant probability density function ${\bm
  \ell}$ which will be a good approximation to the solution of \eqref{switchFPE}.

The case of PDMP is more delicate. Although theoretically the data-driven
solver should still work, in practice the invariant probability measure
can be too singular for a numerical scheme to capture its
density. If for instance, the probability density function takes a
very high value in one grid, but zero value in its neighbor grid, then
some numerical artifacts can be observed. Our simulation shows that
when the invariant probability measure of the PDMP model is not too
singular, the data-driven solver can still successfully reduce the
error in the reference solution given by the Monte Carlo
simulation. On the other hand, if the invariant probability measure is very singular,
usually the solution produced by a Monte Carlo simulation is accurate enough. This happens because the MC simulation result usually gives a clear dichotomy: a small box
will either get enough sample points to keep a satisfactory accuracy,
or zero sample points which imply a zero probability density.

\subsection{Block data-driven solver}

For higher dimensional problems, it is important to divide the domain
into many subdomains to reduce the scale of the numerical linear
algebra problem. This is called the block data-driven solver. The motivation is that (i) the data-driven
Fokker-Planck solver does not rely on the boundary condition, and (ii)
the optimization problem ``pushes'' most error terms to the boundary
of the domain. (See our discussion in \cite{dobson2019efficient}.) As a result we can
solve a similar optimization problem on each subdomain in parallel,
and merge all ``local solutions'' together. Finally, we need to deal with
the interface error on the boundary of each subdomain. This is done by
(i) letting the subdomains overlap with each other by a few grids, and
(ii) redrawing subdomains such that the new subdomain covers the old
subdomain interfaces, and (iii) running the block data-driven solver again. This
approach works for SDE, PDMP, and SSDE. We
refer to  \cite{dobson2019efficient} for the full details.

\subsection{Analysis of the algorithm}
We give proofs that our algorithm converges in Appendix
\ref{s:a_conv}. According to Theorem \ref{convergence}, the numerical
error depends on both the grid size and the quality of the Monte Carlo
sampler. The empirical error is better than the theoretical prediction
because the error term concentrates at the domain boundary and can be
eliminated by other approaches, such as the block data-driven solver
introduced above.

The quality of the Monte Carlo sampler plays an important role in the
data-driven solver. Monte Carlo sampling from the invariant
probability measure $\mu^{*}$ usually means running a numerical
trajectory of the SDE (or PDMP, SSDE) for a long time. However, a
numerical trajectory is only an approximation of a true SDE
trajectory. The invariant probability measure of the Monte Carlo
sampler, denoted by $\hat{\mu}$, is usually different from
$\mu^{*}$. Hence an analysis of the sensitivity of $\mu^{*}$ against
the numerical approximation is extremely important. In this paper, we
adopt the sensitivity analysis algorithm proposed in
\cite{dobson2019using} to study the sensitivity of the invariant
probability measures in our examples. The idea is to use the extrapolation method
to estimate the finite time error. Then the rate of contraction of the
Markov kernel, which
can be numerically computed by using couplings, is used to
extend the estimate to infinite time. We refer readers to Appendix \ref{s:a_conv}
for the full details.

\section{Lotka--Volterra competitive dynamics}\label{s:LV}
\subsection{ODE}\label{s:LV_ODE}
The classical Lotka--Volterra competitive system has the form
\begin{equation}\label{e:ODE_2d_k}
\begin{array}{lll}\disp d X_1(t)=X_1(t)\left(\ell_1-a_{11} X_1(t) -a_{12} X_2(t)\right)dt ,
\\
dX_2(t) =X_2(t)\left(\ell_2 - a_{22}X_2(t)-a_{21}X_1(t)\right)\,dt.
\end{array}
\end{equation}
Here $\ell_i>0$ is the per-capita growth rate of species $i$ and $a_{ij}>0$ is the per-capita competition rate between species $i$ and $j$. It is well known that both species persist when
\[
\ell_i - a_{ij}\frac{\ell_j}{a_{jj}}>0, i\neq j, i\in\{1,2\}.
\]

\subsection{SDE}\label{s:LV_SDE}
Suppose there are environmental fluctuations. Then \eqref{e:ODE_2d_k} becomes
\begin{equation}\label{e:SDE_2d_k}
\begin{array}{lll}\disp d X_1(t)=X_1(t)\left(\ell_1-a_{11} X_1(t) -a_{12} X_2(t)\right)dt +\sigma_1 X_1(t)\,dB_1(t) ,
\\
dX_2(t) =X_2(t)\left(\ell_2 - a_{22}X_2(t)-a_{21}X_1(t)\right)\,dt+ \sigma_2 X_2(t)\,dB_2(t).
\end{array}
\end{equation}
The dynamics for this system is well-known \citep{KO81, EHS15, HN17b, HN17, HN16}.
Species $i$ persists when species $j\neq i$ is absent if
\[
\lambda_i(\delta_0) = \ell_i - \frac{\sigma_i^2}{2}>0.
\]
If the above condition holds species $i$ has a unique invariant measure when it evolves on its own, $\mu_i$.
The conditions for coexistence, in the fixed environment $k$, are that for all $i, j\in\{1,2\}$ with $i\neq j$ we have
\begin{equation}
\label{eq7-3}
  \lambda_i(\mu_j) = \ell_i -\frac{\sigma_i^2}{2} -a_{ij} \frac{\ell_j-\frac{\sigma_j^2}{2}}{a_{jj}} >0.
\end{equation}
\subsection{PDMP}\label{s:LV_PDMP} Suppose we have two competing species that interact in a habitat where the environment switches between two possible states. Then
\begin{equation}\label{e:PDMP_2d}
\begin{array}{lll}\disp d X_1(t)=X_1(t)\left(\ell_1(r(t))-a_{11}(r(t)) X_1(t) -a_{12}(r(t)) X_2(t)\right)dt ,
\\
dX_2(t) =X_2(t)\left(\ell_2(r(t)) - a_{22}(r(t))X_2(t)-a_{21}(r(t))X_1(t)\right)\,dt.
\end{array}
\end{equation}
Suppose $q_1, q_2$ are the switching rates and $(\nu_1,\nu_2)$ the stationary distribution of the Markov chain $r(t)$.
If
\[
\nu_1\ell_i(1)+\nu_2\ell_i(2)>0, i=1,2
\]
each species can persist on its own. Furthermore, if species $i$ is on its own it has a unique invariant measure $\mu_i$ on $(0,\infty)\times \{1,2\}$ \citep{B18}. If $p_i(1):=\frac{\ell_i(1)}{a_{11}(1)}\neq \frac{\ell_i(2)}{a_{11}(2)}:=p_i(2)$ the probability measure $\mu_i$ has a density with respect to Lebesgue measure, in the sense that
\[
\mu_i(dx,k)= h_i(x,k)\1_{[p_i(1),p_i(2)]}(x)\,dx
\]
where $\1_{[p_i(1),p_i(2)]}(x)$ is the indicator of the interval $[p_i(1),p_i(2)]$ and $h_i(x,k)$ can be found explicitly as a function of the model parameters.
The invasion rates can also be computed by
\[
\lambda_j(\mu_i) = \sum_k \int_0^\infty h_i(x,k)\1_{[p_i(1),p_i(2)]}(x)\,dx.
\]

\begin{thm}
Suppose both species survive on their own, i.e. $\nu_1\ell_i(1)+\nu_2\ell_i(2)>0, i=1,2$. If $\lambda_1(\mu_2)>0, \lambda_2(\mu_1)>0$ the two species persist and the system converges to a unique stationary distribution $\mu^*$ on $(0,\infty)^2\times\{1,2\}$.
\end{thm}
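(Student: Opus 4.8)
The plan is to deduce everything from Theorem \ref{t:p_pdmp}, so the core task is to verify its hypothesis that $\max_{i}\lambda_i(\mu)>0$ for every $\mu\in\Conv(\M)$, and then to check the irreducibility conditions of Appendix \ref{s:a_PDMP} that upgrade persistence to uniqueness and convergence. First I would identify $\M$. The boundary $\partial\R^2_+\times\{1,2\}$ splits into the origin, the open $X_1$-axis, and the open $X_2$-axis, each crossed with $\CN$. The origin carries the single invariant measure $\mu_0=\delta_0\times\nu$. On the $X_1$-axis species $2$ is absent and species $1$ runs the one-dimensional competition PDMP; since $\nu_1\ell_1(1)+\nu_2\ell_1(2)>0$ this marginal persists and has the unique invariant measure $\mu_1$ supported on $[p_1(1),p_1(2)]\times\CN$ described before the theorem, and symmetrically one obtains $\mu_2$. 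Hence $\M=\{\mu_0,\mu_1,\mu_2\}$ and any $\mu\in\Conv(\M)$ is $\mu=p_0\mu_0+p_1\mu_1+p_2\mu_2$ with $p_i\ge0$ and $p_0+p_1+p_2=1$.

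Next I would record the invasion rates at the three vertices from \eqref{e:lyapunov_pdmp}. The survival-at-the-origin hypothesis gives $\lambda_1(\mu_0)=\nu_1\ell_1(1)+\nu_2\ell_1(2)>0$ and $\lambda_2(\mu_0)=\nu_1\ell_2(1)+\nu_2\ell_2(2)>0$. Because $\mu_1$ is the invariant measure of the single-species process and $X_1$ stays in the compact interval $[p_1(1),p_1(2)]$, the ergodic theorem forces $t^{-1}\log X_1(t)\to\lambda_1(\mu_1)$ while the left side tends to $0$, so $\lambda_1(\mu_1)=0$; the remaining rate $\lambda_2(\mu_1)>0$ is a standing hypothesis. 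Symmetrically $\lambda_2(\mu_2)=0$ and $\lambda_1(\mu_2)>0$.

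The key step is to propagate positivity from the vertices to all of $\Conv(\M)$; this needs an argument because $\mu\mapsto\max_i\lambda_i(\mu)$ is convex, so positivity at the extreme points is not automatic in the interior. Since each $\lambda_i$ is an integral against $\mu$ it is affine in $(p_0,p_1,p_2)$, and using $\lambda_1(\mu_1)=\lambda_2(\mu_2)=0$,
\[
\lambda_1(\mu)=p_0\lambda_1(\mu_0)+p_2\lambda_1(\mu_2),\qquad \lambda_2(\mu)=p_0\lambda_2(\mu_0)+p_1\lambda_2(\mu_1).
\]
Both right-hand sides are non-negative sums of strictly positive invasion rates, so $\lambda_1(\mu)=0$ forces $p_0=p_2=0$, i.e. $\mu=\mu_1$, whereas $\lambda_2(\mu)=0$ forces $p_0=p_1=0$, i.e. $\mu=\mu_2$. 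These are incompatible, so $\lambda_1(\mu)$ and $\lambda_2(\mu)$ never vanish simultaneously and $\max_i\lambda_i(\mu)>0$ on all of $\Conv(\M)$. Theorem \ref{t:p_pdmp} then delivers the persistence (liminf) bound.

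For uniqueness of $\mu^*$ and total-variation convergence I would verify the accessibility/irreducibility hypotheses of Appendix \ref{s:a_PDMP}: one checks that the two drift fields $F_k(\bx)=(x_1(\ell_1(k)-a_{11}(k)x_1-a_{12}(k)x_2),\,x_2(\ell_2(k)-a_{21}(k)x_1-a_{22}(k)x_2))$, $k=1,2$, satisfy a H\"ormander-type bracket condition on $(0,\infty)^2$, so that the process admits a smooth density and a point in the interior accessible from everywhere. I expect this to be the main obstacle: the Lyapunov computation above is purely algebraic, whereas showing that alternating the two flows spreads mass throughout the interior is where the real work lies, and the nondegeneracy $p_i(1)\ne p_i(2)$ that already makes $\mu_i$ absolutely continuous is precisely what fuels it.
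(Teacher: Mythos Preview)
Your proposal is correct and is precisely the argument the paper has in mind: the theorem is stated without proof and is meant as a direct application of Theorem~\ref{t:p_pdmp}, with the full details deferred to \cite{BL16, MZ16, MP16, HN20}. Your identification of $\M=\{\mu_0,\mu_1,\mu_2\}$, the use of $\lambda_i(\mu_i)=0$ together with affinity of $\lambda_i$ to handle all of $\Conv(\M)$, and your flagging of the bracket/accessibility condition as the genuine technical point all match what those references carry out.
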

The work of \cite{BL16, MZ16, MP16, HN20} contains the full classification of the dynamics. \cite{BL16, HN20} show how competitive exclusion can become coexistence due to the fluctuations of the environment, even though in both environments species $1$ goes extinct and species $2$ persists.

\subsection{Numerical examples}

We numerically analyze the finite time trajectory and the invariant probability
measure for the SDE, PDMP, and SSDE versions of the Lotka-Volterra competition model.

1) For the
SDE model, we take two different sets of parameters: $l_{1} = 4, l_{2}
= 2, a_{11} = 2,
a_{12} = 2, a_{21} = 0.4, a_{22} = 1.2$
(Parameter $1$) and $l_{1} = 2, l_{2} = 4, a_{11} = 0.8, a_{12} = 1.6,
a_{21} = 1, a_{22} = 5$ (Parameter $2$). The magnitude of the white noise fluctuations is
$\sigma_{1} = \sigma_{2} = 1$. It
follows from Equation \eqref{eq7-3} that two species can co-exist for
both parameter sets, with a stable equilibrium $(0.5, 1.5)$ and $(1.5,
0.5)$ respectively. Figure
\ref{fig1LV} Top shows trajectories of equation \eqref{e:SDE_2d_k} with
both parameter sets. Figure \ref{fig1LV} Bottom shows two invariant probability density functions of \eqref{e:SDE_2d_k} on
$(0, 2)^{2}$ with respect to two parameter sets. The mesh size of our
numerical computation is $800 \times 800$. The reference solution is
obtained from a Monte Carlo simulation with $10^{8}$ sample
points. One can see that the invariant probability density functions
tend to concentrate near the stable equilibrium.

\textit{\textbf{Biological Interpretation:} The white noise environmental fluctuations turn the stable equilibrium of the two species into stationary distributions that concentrate around the deterministic equilibria. How spread the invariant measure is seems to be related to the intraspecific competition rates. As expected, the higher the intraspecific competition rate is for species $i$, the smaller the spread of the stationary distribution is in direction $i$ - see \ref{fig1LV}. High intraspecific competition makes it harder for the species to leave the area that is close to the deterministic stable equilibria.
}

2) We next consider the PDMP version of the Lotka-Volterra competition
model. The number of environments is taken to be $n_{0}=2$. The model
switches between two parameter sets introduced above. We consider the
invariant probability measure for PDMP with respect to three different
speed of switching, slow, medium, and fast. Rates of random switching
are $q_{12} = 2.5, q_{21} = 4$ for slow switching, $q_{12} = 5, q_{21}
= 8$ for medium switching, and $q_{12} = 10, q_{21} = 16$ for fast
switching. Three invariant probability measures are demonstrated in
Figure \ref{fig2LV}. We can see that when the switching is slow, two marginal
invariant probability measures with respect to two states are very
different. From Figure \ref{fig2LV} Top one can see how trajectories
travelling between two stable equilibria with respect to two parameter
sets. One interesting phenomenon is that, with slow switching rates, deterministic trajectories
have enough time to approach to the invariant manifold corresponding
to the largest eigenvalue. (See in particular Figure \ref{fig1LV} top
left panel.) This makes the invariant probability measure very singular. When the switching becomes faster, a deterministic trajectory
can only move a shorter time between switchings. As a result, two
marginal invariant probability measures are more similar to each
other, as seen in Figure \ref{fig2LV} Mid and Bottom. The numerical domain is still
$(0, 2)^{2}$ with a mesh size $600 \times 600$. The reference solution is obtained from a Monte Carlo
simulation with $10^{8}$ sample points.

\textit{\textbf{Biological Interpretation:} If the switching is slow the species spend a long time in each environment. Because of this they have time to get close the equilibria. This is in line with the explanation for the paradox of the plankton by \cite{H61}. As switching becomes faster, the dynamics spends less time in a fixed environment, and the species do not have time to get close to the stable equilibrium from that environment. The invariant probability function from \ref{fig2LV} Middle reflects this fact - it is less singular than when the switching is slow. However, as we increase the switching and we go from intermediate switching to fast switching, the stationary distribution once again becomes more singular. For very fast switching the dynamics approaches the one given by a mixed deterministic ODE and therefore the mass of the stationary distribution concentrates around the equilibrium of a mixed ODE.
}

3) The PDMP model that jumps between two systems without coexistence can
still admit an invariant probability measure supported in the first
quadrant \citep{HN20}. We exhibit such an example in Figure \ref{fig2p}. The parameters
are $a_{21} = 1$ in state $1$ and $a_{12} = 1.6$ in state $2$. All the
other parameters are the same as in the previous examples. One can check that coexistence is not possible in environment $1$ or $2$ - in one environment species $1$ persists and species $2$ goes extinct while the opposite happens in the other environment. We consider the
invariant probability measure for the PDMP for three different
switching rates - the same as in the example from Figure \ref{fig2LV}. We can see here a
similar phenomenon. The two marginal distributions in the two states get
closer to each other as the switching rate becomes faster. The numerical
domain is $(0.6, 1.6) \times (0, 1)$ with a mesh size $600 \times
600$. The reference solution is obtained from a Monte Carlo
simulation with $10^{9}$ sample points.

The parameters in Figure
\ref{fig2b} are $l_{1} = 1, l_{2} = 0.5, a_{11} = 1, a_{12} = 0.2,
a_{21} = 1$, and $a_{22} = 0.2$ at state $1$, $l_{1} = 7.8, l_{2} =
15.2, a_{11} = 2, a_{12} = 0.4, a_{21} = 4$, and $a_{22} =0.8$ at
state $2$. The rate of switching is $q_{12} = 1.4$ and $q_{21} =
5.0$. It is easy to check that species $1$ goes extinct in both
environments. However, the PDMP model has coexistence for the specific
switching rates we picked - see \cite{HN20} for a proof. The invariant probability measure of the PDMP model
is presented in Figure \ref{fig2b} Top. One can see that the invariant
probability measure concentrates heavily near the extinction set and
near a line segment in state $2$. The numerical
domain is $(0, 4) \times (0, 16)$ with a mesh size $500 \times
2000$. The invariant probability measure is obtained from a Monte Carlo
simulation with $10^{9}$ sample points. Data-driven PDE solver is not
available for this example because the invariant probability density
function is too singular.

\textit{\textbf{Biological Interpretation:} In contrast to Hutchinson's explanation \citep{H61} we see that coexistence is possible for a wide range of switching rates. For a slow switching regime the invariant distribution in the two environments is more singular - the species tend to move towards the equilibrium on the $x$ axis in one environment and towards the equilibrium on the $y$ axis in the other environment. For fast switching, even though the time to extinction is much longer, we get once again that, in contrast to the prediction by \cite{H61}, coexistence is possible. In this case the dynamics of the two species is close to the mixed dynamics from the two environments.
}

4) We look at the SSDE model. Parameters of the deterministic parts
are taken to be the same as before. The strength of the white noise fluctuations is $\sigma_{1} =
\sigma_{2} = 0.5$ in both environmental  states. Once again, we study the invariant
probability measure under three different switching rates. The rates of random switching
are picked to be $q_{12} = 0.5, q_{21} = 0.8$ for slow switching, $q_{12} = 2.5, q_{21}
= 4$ for medium switching, and $q_{12} = 10, q_{21} = 16$ for fast
switching. The numerical result is shown in Figure \ref{fig3LV}. A phenomenon similar to what happened in the previous example can be observed: the two marginal invariant
probability measures move towards each other when the speed of
switching becomes higher. The numerical domain is $(0, 2)^{2}$ with a mesh size $600 \times 600$. The reference solution is obtained from a Monte Carlo
simulation with $10^{8}$ sample points.

 We also compute the invariant
probability measure of the SSDE model corresponding to the PDMP model
from Figure \ref{fig2b}, in which the same species goes extinct in both
environments. All parameters are the same as in the PDMP model. The
magnitude of the noise is $\sigma_{1} = \sigma_{2} = 0.1$. In the presence
of noise, the concentration close to the extinction set is more
significant.

\textit{\textbf{Biological Interpretation:} In this setting we note that the invariant probability measure is more spread out - the white noise fluctuations diffuse the equilibrium distribution of the species. As the switching rates increase, the invariant probability measures in the two environmental states become more and more similar. Intuitively, the invariant probability measures converge to the invariant probability measure of an SDE.
}

The last step is the sensitivity analysis. We use the Milstein scheme to
simulate SDE and SSDE. Hence the estimator of the mean finite time
error is
$$
  I = \frac{1}{N}\sum_{i = 0}^{N - 1}\rho(\hat{X}^{dt}_{i,T},
  \hat{X}^{2dt}_{i,T}) \,.
$$
Parameters in our simulation are $dt = 0.001$, $N = 10^{6}$. For the SDE
model, we only run simulation using the first parameter set. The time
span is $T = 3$ for SDE, and $T = 2$ for SSDE. The estimator $I$ is
$0.00511$ for SDE, and $0.00497$ for SSDE. Then we run coupling method to show the speed
of convergence. The exponential tails of the coupling times are showcased in
Figure \ref{fig4}. We can see that the coupling time distribution decays exponentially in each case. This gives us an estimate of
$\alpha \approx 0.35$ for the SDE and $\alpha \approx 0.29$ for the SSDE. Therefore, we
have $\mathrm{d}_{w}( \mu^{*}, \hat{\mu}) \approx 0.00786$ for the SDE, and
$\mathrm{d}_{w}( \mu^{*}, \hat{\mu}) \approx 0.0070$ for the SSDE. We
conclude that the numerical results for the approximation of invariant probability
measures are reliable for our SDE and SSDE examples.

 \begin{figure}[!htbp]
 	\begin{center}
 		\includegraphics[width = \linewidth]{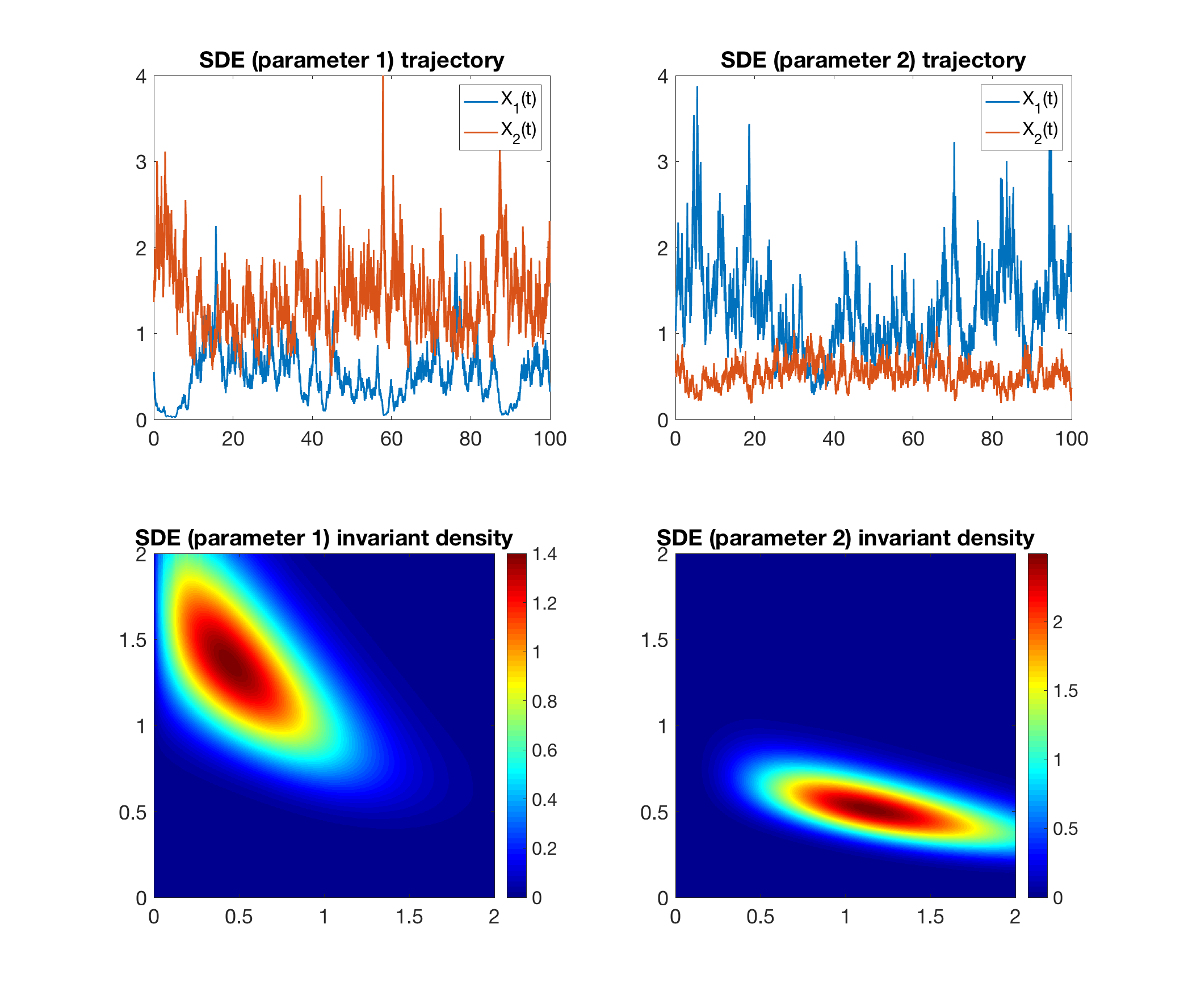}
 		\caption{Trajectories and invariant probability
                  measures of the Lotka-Volterra SDE model. {\bf Top}: Trajectories of $X_{1}(t)$ and
                  $X_{2}(t)$ for both parameter sets. {\bf Bottom}:
                  Invariant probability measures for both parameter
                  sets. }
\label{fig1LV}\end{center}
 \end{figure}

\begin{figure}[!htbp]
 	\begin{center}
 		\includegraphics[width = \linewidth]{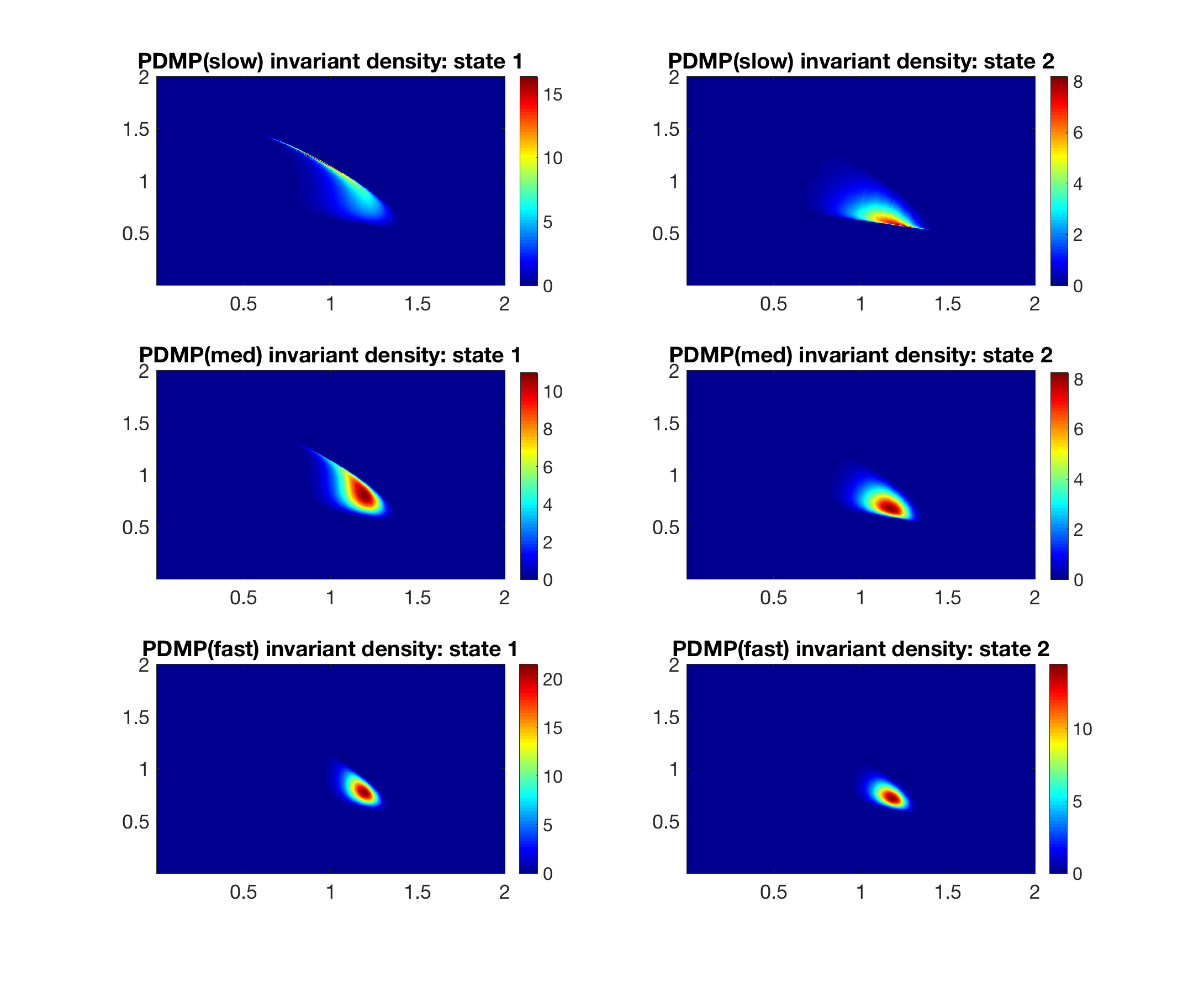}
 		\caption{Invariant probability measures of
                  Lotka-Volterra PDMP model. The deterministic system exhibits coexistence in each environmental state. Top to bottom: Invariant probability measures
                 for three different sets of switching rates. }
\label{fig2LV}\end{center}
 \end{figure}

\begin{figure}[!htbp]
 	\begin{center}
 		\includegraphics[width = \linewidth]{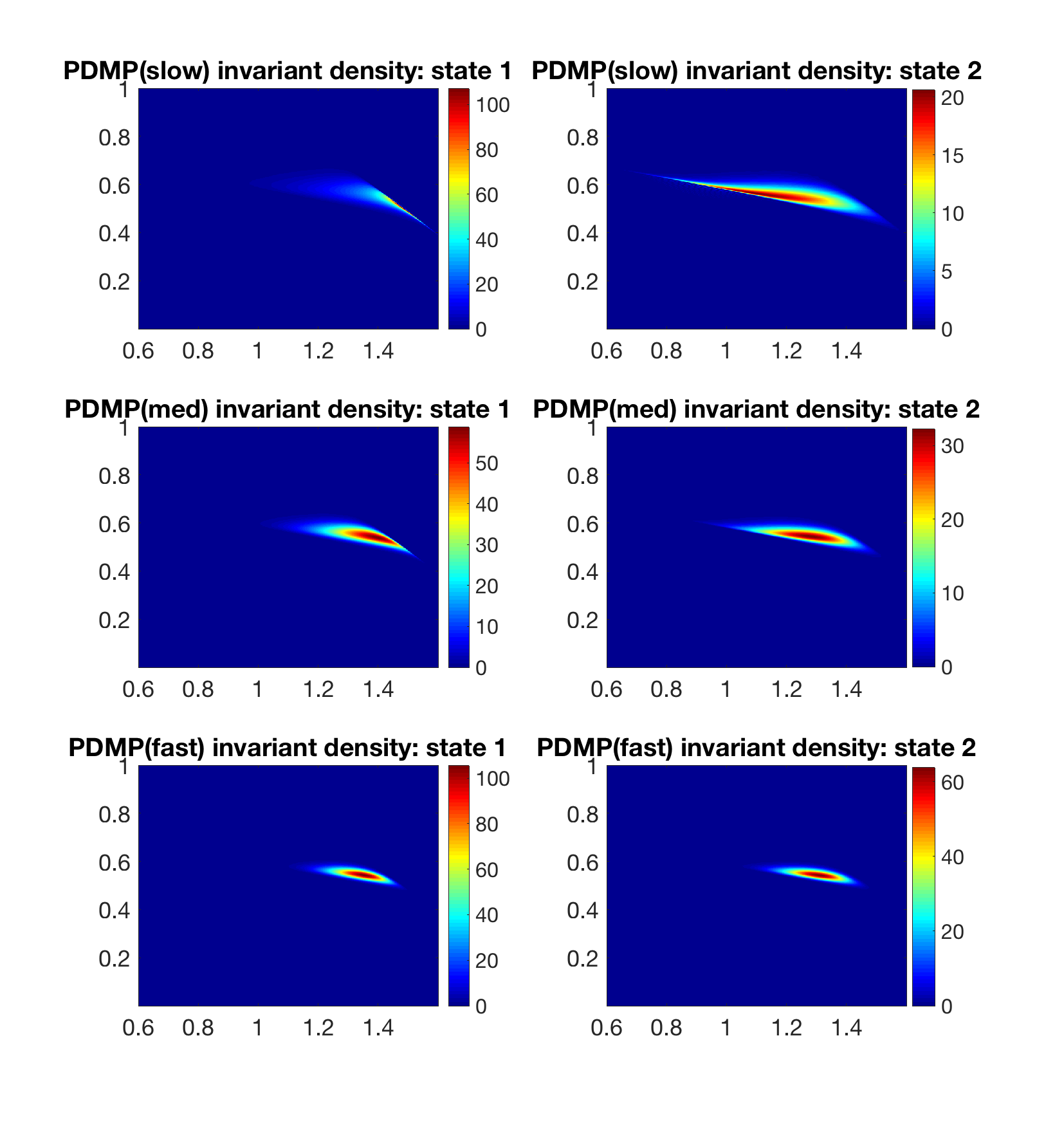}
 		\caption{Invariant probability measures for the
                  Lotka-Volterra PDMP model. The deterministic system does not exhibit coexistence in any of the two environmental states - one species persists in environment 1 and the other in environment 2. Top to bottom: Invariant probability measures
                 for three different sets of switching rates. }
\label{fig2p}\end{center}
 \end{figure}

\begin{figure}[!htbp]
 	\begin{center}
 		\includegraphics[width = \linewidth]{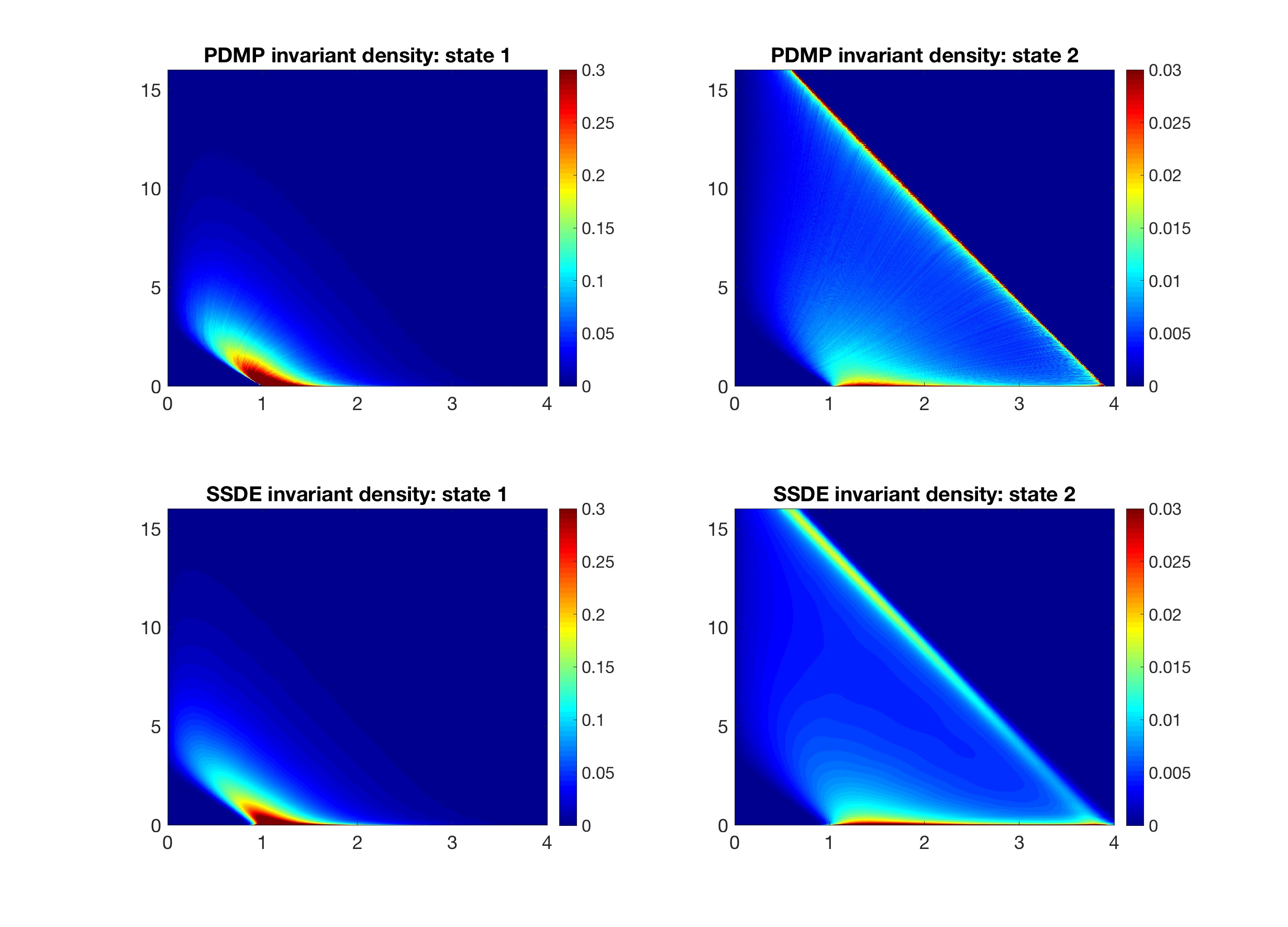}
 		\caption{Invariant probability measures of the
                  Lotka-Volterra PDMP model and SSDE model. Species
                  $1$ goes extinct in both deterministic environments. Top: Invariant probability measures
                 of the PDMP model at two states. Bottom:  Invariant probability measures
                 of the SSDE model at two states}
\label{fig2b}\end{center}
 \end{figure}

\begin{figure}[!htbp]
 	\begin{center}
 		\includegraphics[width = \linewidth]{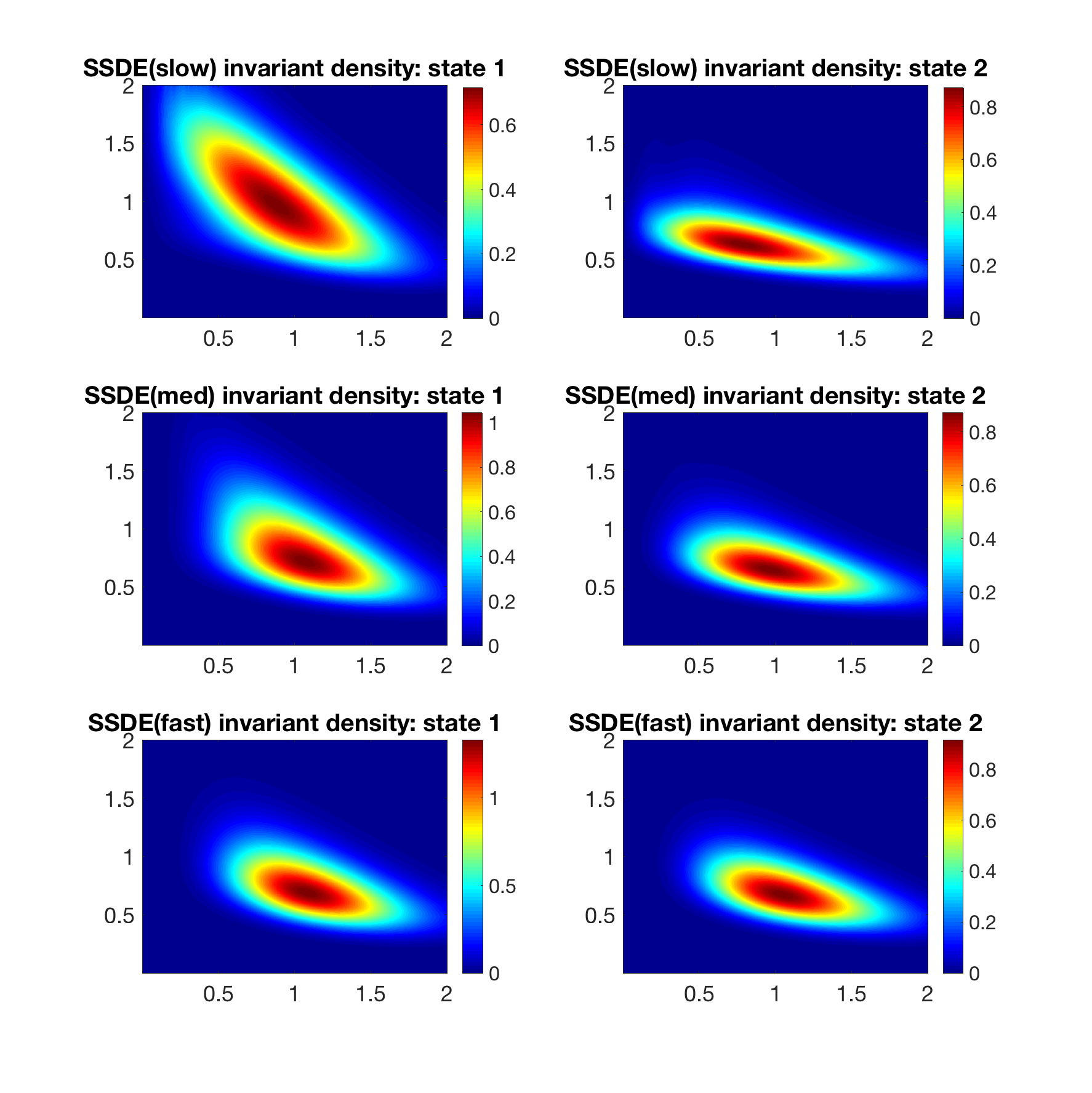}
 		\caption{Top to bottom: Invariant probability measures
                of the SSDE Lotka-Volterra model for three different sets of switching rates. }
\label{fig3LV}\end{center}
 \end{figure}

\begin{figure}[!htbp]
 	\begin{center}
 q 		\includegraphics[width = \linewidth]{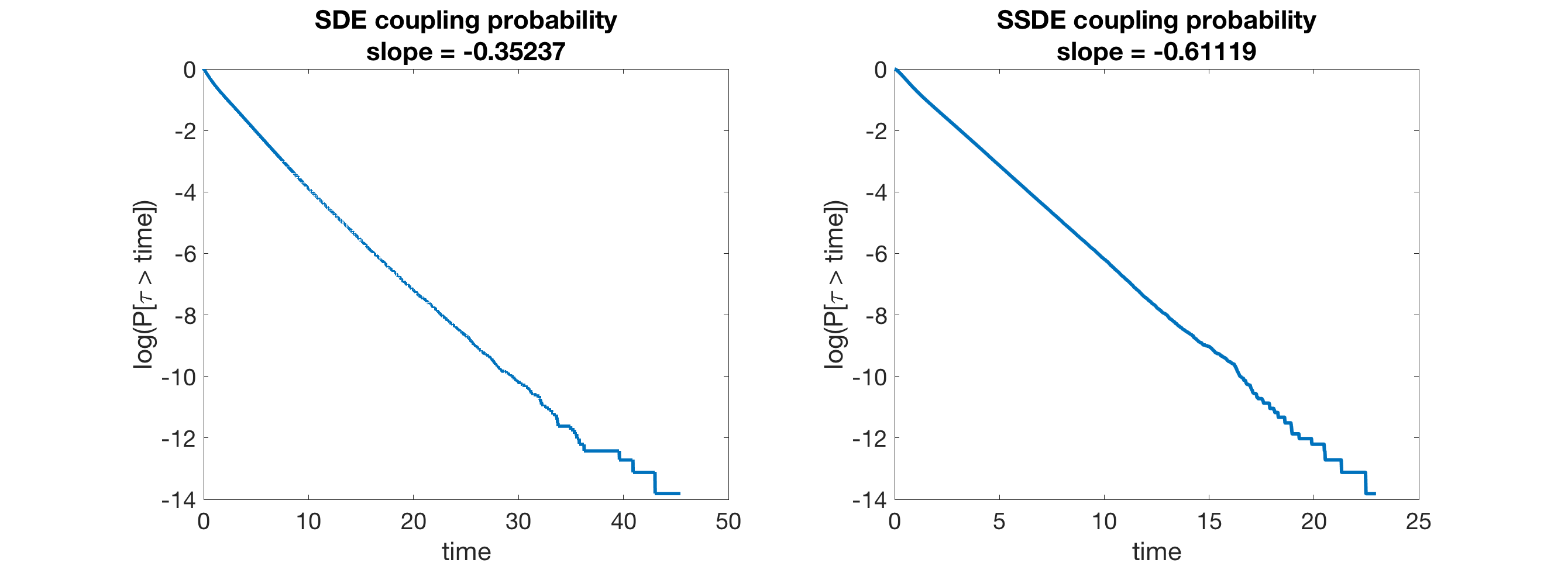}
 		\caption{Sensitivity analysis for the Lotka--Volterra
                  model. Note the exponential tails of the coupling time for the
                  SDE and SSDE models. }
\label{fig4}\end{center}
 \end{figure}

\section{Beddington-DeAngelis predator-prey dynamics}\label{s:BD}
\subsection{ODE}
In 1975 \cite{B75} and \cite{D75} came up with a new functional response to better explain some predator-prey interactions. Since then, these models have been used extensively in ecology.
The system is governed by
\begin{equation}\label{e:PP_ODE}
\begin{split}
   \frac{dX}{dt}(t) &= X(t)\left(a_1-b_1X(t) - \frac{c_1Y(t)}{m_1+m_2X(t)+m_3Y(t)}\right) \\
     \frac{dY}{dt}(t) &=  Y(t)\left(-a_2-b_2Y(t)+ \frac{c_2X(t)}{m_1+m_2X(t)+m_3Y(t)}\right)
\end{split}
\end{equation}
The dynamics has been classified in the deterministic setting in a series of papers \citep{CC01, H11, LT11}.
The following result by \cite{LT11} shows when one has persistence (or permanence).
\begin{thm}\label{t:pp_ode}
Suppose one of the following two conditions is satisfied
\begin{itemize}
  \item $c_2 > a_2 m_2$ and $(c_2-a_2m_2)\left(\frac{a_1}{b_1}-\frac{c_1}{b_1m_3}-\frac{a_2 m_3}{b_2m_2}\right)>a_2 m_1$
  \item $a_1m_3>c_1+\frac{b_1a_2m_3^2}{b_2m_2}$ and $(c_2-a_2m_2)\left(\frac{a_1}{b_1}-\frac{c_1}{b_1m_3}-\frac{a_2 m_3}{b_2m_2}\right)>a_2 m_1$.
  Then the system \eqref{e:PP_ODE} is permanent, i.e., there exist $0<\ell\leq L$ such that
  \[
  \min\left\{\liminf_{t\to\infty} X(t),\liminf_{t\to\infty} Y(t)\right\}\geq \ell, ~~\max\left\{\limsup_{t\to\infty} X(t), \limsup_{t\to\infty}Y(t)\right\}\leq L.
  \]
\end{itemize}
\end{thm}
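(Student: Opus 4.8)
The plan is to prove permanence by the standard two-step programme for dissipative predator-prey systems: first trap every interior orbit in a compact region bounded away from infinity (the upper constant $L$), and then show both coordinate axes are uniformly repelling (the lower constant $\ell$).

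First I would establish dissipativity. Dropping the nonnegative predation loss gives $\dot X \le X(a_1 - b_1 X)$, so comparison with the logistic equation yields $\limsup_{t\to\infty} X(t) \le a_1/b_1 =: M_X$. For the predator, the Beddington--DeAngelis response satisfies $\frac{c_2 X}{m_1 + m_2 X + m_3 Y} \le c_2/m_2$, so $\dot Y \le Y(-a_2 - b_2 Y + c_2/m_2)$, and under $c_2 > a_2 m_2$ this gives $\limsup_{t\to\infty} Y(t) \le \frac{c_2 - a_2 m_2}{b_2 m_2} =: M_Y > 0$. It is worth recording that along solutions the predation terms in $\frac{d}{dt}(c_2 X + c_1 Y)$ cancel identically, leaving $c_2 a_1 X - c_2 b_1 X^2 - c_1 a_2 Y - c_1 b_2 Y^2$, which furnishes an alternative clean proof of ultimate boundedness of the total weighted population. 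Either way one obtains the upper constant $L := \max(M_X, M_Y)$.

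Next I would produce the prey floor. Once $Y \le M_Y$, bounding the predation loss by $\frac{c_1 Y}{m_1 + m_2 X + m_3 Y} \le c_1/m_3$ gives $\dot X \ge X(a_1 - b_1 X - c_1/m_3)$, and comparison delivers a positive lower bound for $\liminf_{t\to\infty} X(t)$ as soon as the associated carrying capacity is positive; inserting the predator bound $M_Y$ into the estimate yields the conservative constant $X_{\min} := \frac{a_1}{b_1} - \frac{c_1}{b_1 m_3} - \frac{a_2 m_3}{b_2 m_2}$, whose exact value matters only through its positivity. The second bullet's hypothesis $a_1 m_3 > c_1 + \frac{b_1 a_2 m_3^2}{b_2 m_2}$ is precisely $X_{\min} > 0$, so either stated condition guarantees a strictly positive prey floor; moreover the two alternatives are equivalent once the shared inequality is assumed, since $(c_2 - a_2 m_2)X_{\min} > a_2 m_1 > 0$ forces $c_2 - a_2 m_2$ and $X_{\min}$ to carry the same (positive) sign.

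The heart of the argument, and the step I expect to be the main obstacle, is the predator floor. Near $Y = 0$ the entire prey axis is invariant and predator-free, so no naive differential inequality controls $Y$ and a genuine uniform-persistence argument is required. The clean route is to note that eventually $X \ge X_{\min}$, that $x \mapsto \frac{c_2 x}{m_1 + m_2 x}$ is increasing, and hence that whenever $Y$ is small the predator per-capita growth rate obeys $\frac{\dot Y}{Y} = -a_2 + \frac{c_2 X}{m_1 + m_2 X + m_3 Y} \ge -a_2 + \frac{c_2 X_{\min}}{m_1 + m_2 X_{\min} + m_3 Y}$, which tends to $-a_2 + \frac{c_2 X_{\min}}{m_1 + m_2 X_{\min}} > 0$ as $Y \downarrow 0$ exactly because $(c_2 - a_2 m_2)X_{\min} > a_2 m_1$ --- the shared inequality of both hypotheses. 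Thus the predator invasion rate is bounded below by a positive constant whenever the predator is rare, and I would conclude $\liminf_{t\to\infty} Y(t) \ge \ell_Y > 0$ either by an acyclicity/average-Lyapunov persistence theorem (the only boundary invariant sets being the origin and $(a_1/b_1,0)$, which form an acyclic chain) or directly, by showing that $Y \to 0$ cannot be reconciled with $\liminf_{t\to\infty}\frac{1}{t}\log Y(t) > 0$. Setting $\ell := \min(X_{\min}, \ell_Y)$ then gives permanence. I should flag that it is the conservative floor $X_{\min}$, rather than the sharp prey-only equilibrium value $a_1/b_1$, that makes the stated invasion inequality sufficient — so these conditions are sufficient but not the sharp threshold — and that all estimates are only eventual, which is why the conclusion is phrased with $\liminf$ and $\limsup$.
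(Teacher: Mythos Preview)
The paper does not prove this theorem; it is quoted without proof from \cite{LT11} as deterministic background for the stochastic analysis that follows, so there is no in-paper argument to compare against.

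Your proof is the standard dissipativity-plus-uniform-persistence argument and is correct in outline. Two small presentational points are worth tightening. First, the predator per-capita growth rate is $-a_2 - b_2 Y + \dfrac{c_2 X}{m_1+m_2X+m_3Y}$; you dropped the $-b_2 Y$ term, which is harmless for the $Y\downarrow 0$ limit but should be written and then discarded explicitly. Second, the crude bound $\dfrac{c_1 Y}{m_1+m_2X+m_3Y}\le \dfrac{c_1}{m_3}$ already gives the prey floor $\dfrac{a_1}{b_1}-\dfrac{c_1}{b_1 m_3}$, which is strictly larger than your $X_{\min}$; the extra subtraction of $\dfrac{a_2 m_3}{b_2 m_2}$ is not what ``inserting the predator bound $M_Y$'' produces, it is simply a smaller (hence safe) surrogate chosen to match the hypothesis. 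Since $x\mapsto \dfrac{c_2 x}{m_1+m_2 x}$ is increasing, using the smaller $X_{\min}$ in the invasion estimate is legitimate and lands exactly on the shared inequality $(c_2-a_2 m_2)X_{\min}>a_2 m_1$, as you note. Your observation that the two bulleted hypotheses are in fact equivalent (each forces both $c_2>a_2 m_2$ and $X_{\min}>0$ once the shared inequality holds) is correct and a nice clarification.
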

Moreover, if there is no intraspecific competition for the predators, $b_2=0$, then by \cite{H03} we have the following result.
\begin{thm}\label{t:pp_ode2}
The system \eqref{e:PP_ODE} with $b_2=0$ has a unique limit cycle if and only if
\[
\frac{m_2a_2}{c_2}<\left(1+ \frac{m_1b_1}{m_2a_1}\right)^{-1}
\]
and
\[
\frac{\frac{c_1}{m_3a_1}-\frac{c_2}{m_1a_1}}{x_*+y_*+\frac{m_1b_1}{m_2a_1}}> \frac{x_*+y_*+\frac{m_1b_1}{m_2a_1}}{y_*}
\]
where $(x_*, y_*)$ are the positive solutions to
\[
1-x_* - \frac{\frac{c_1}{m_3a_1}y_*}{x_*+y_*+\frac{m_1b_1}{m_2a_1}} = 0, \frac{x_*}{x_*+y_*+\frac{m_1b_1}{m_2a_1}} = \frac{m_2a_2}{c_2}.
\]
\end{thm}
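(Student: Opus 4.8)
The plan is to follow the classical route for planar predator--prey models: (i) rescale to cut down the number of parameters, (ii) show that the first hypothesis is exactly the feasibility (invasion) condition for the interior equilibrium, (iii) produce a limit cycle via the Poincaré--Bendixson theorem once that equilibrium is a repeller, and (iv) prove uniqueness by recasting \eqref{e:PP_ODE} (with $b_2=0$) as a generalized Liénard system and applying a Zhang--Zhifen-type uniqueness criterion. The combinations $\frac{c_1}{m_3a_1}$, $\frac{m_1b_1}{m_2a_1}$, $\frac{m_2a_2}{c_2}$ appearing in the statement dictate the scaling $x=b_1X/a_1$, $y=\frac{m_3b_1}{m_2a_1}Y$, and $\tau=a_1t$; after this reduction the prey and predator nullclines become precisely the two displayed equations defining $(x_*,y_*)$. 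From the second relation $\frac{x_*}{x_*+y_*+\frac{m_1b_1}{m_2a_1}}=\frac{m_2a_2}{c_2}$ with $0<x_*<1$, one checks that a positive root exists and is unique exactly when $\frac{m_2a_2}{c_2}<\big(1+\frac{m_1b_1}{m_2a_1}\big)^{-1}$. This is no coincidence: the right-hand side is the rescaled functional response at the prey-only equilibrium $(a_1/b_1,0)$, so the first hypothesis is the predator-invasion condition $-a_2+\frac{c_2(a_1/b_1)}{m_1+m_2(a_1/b_1)}>0$. When it fails there is no interior rest point, and since a limit cycle must enclose an equilibrium this already yields one direction of the ``only if''.

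Next I would set up Poincaré--Bendixson. Because $b_2=0$ there is no self-limiting term for the predator, so boundedness is not automatic; however the coupling terms cancel in $W:=c_2X+c_1Y$, giving $\dot W=c_2X(a_1-b_1X)-c_1a_2Y$. Choosing $0<\alpha<a_2$ makes the $Y$-contribution of $\dot W+\alpha W$ nonpositive while the $X$-contribution is a downward parabola, so $\dot W+\alpha W\le C$ for a constant $C$ and every forward orbit enters a fixed compact set. Combined with the invasion condition, a standard average-Lyapunov/acyclicity argument gives uniform persistence, so that $\omega$-limit sets of interior orbits stay in the open quadrant and cannot run into the boundary saddles at the origin and at $(a_1/b_1,0)$.

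It remains to read off the role of the second hypothesis. Computing the Jacobian at $(x_*,y_*)$ and substituting the two nullcline identities, the off-diagonal entries are $\partial_yP|_*<0$ and $\partial_xQ|_*>0$ and the determinant is positive by the geometry of the nullclines, so $(x_*,y_*)$ is a focus or node whose stability is governed by the sign of the trace. That trace collapses to $-x_*+\frac{x_*y_*}{\,(x_*+y_*+\frac{m_1b_1}{m_2a_1})^2}\,\kappa$, where $\kappa$ is the difference of the rescaled predation and conversion coefficients; setting this positive and clearing the factor $x_*$ reproduces exactly the second displayed inequality. Thus when both hypotheses hold the equilibrium is a repeller and Poincaré--Bendixson forces at least one limit cycle, whereas when the trace is nonpositive the equilibrium is locally and (via the machinery below) globally asymptotically stable, so no periodic orbit exists --- completing the ``only if'' direction.

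The main obstacle is uniqueness, precisely because the $Y$ in the denominator of the Beddington--DeAngelis response means \eqref{e:PP_ODE} is \emph{not} of Gause type, so the standard Kuang--Freedman reduction does not apply verbatim. Following \cite{H03}, I would introduce a change of variables (a logarithmic substitution in the prey density together with the predator nullcline relation to eliminate the functional response) that, after a positive time change, recasts the flow as a Liénard system $\dot u=\varphi(v)-F(u)$, $\dot v=-g(u)$. Uniqueness of the limit cycle then follows from a Zhang--Zhifen-type theorem provided the ratio $\tfrac{d}{du}\!\big(F'(u)/g(u)\big)$ has constant sign on the relevant interval. Verifying this monotonicity is where the $y$-dependence of the response must be tamed, and it is the genuinely delicate step; once it is in hand, ``at most one'' and ``at least one'' combine to give exactly one limit cycle, and the two directions assemble into the stated equivalence.
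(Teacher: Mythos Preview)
The paper does not prove this theorem at all: it is stated as a known result, attributed in the sentence immediately preceding it to \cite{H03}, and no argument is given. So there is nothing in the paper to compare your proposal against.

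Your sketch follows the standard route for this class of planar predator--prey problems (rescale, identify the first inequality as the predator-invasion/feasibility condition, use dissipativity plus Poincar\'e--Bendixson for existence once the interior equilibrium is unstable, and reduce to a Li\'enard system for uniqueness via a Zhang--Zhifen criterion), and you correctly flag that the $Y$-dependence in the Beddington--DeAngelis denominator prevents a direct Gause-type reduction. You even cite \cite{H03} yourself for the Li\'enard transformation. This is essentially the strategy of the reference the paper invokes; the paper itself contributes no independent argument here.
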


\subsection{SDE}
\cite{DDY16} analyze the SDE equivalent of \eqref{e:PP_ODE}. If $X(t), Y(t)$ denote the prey and predator densities, the dynamics follows the SDE
\begin{equation}\label{e:PP_SDE}
\begin{split}
   dX(t) &= X(t)\left(a_1-b_1X(t) - \frac{c_1Y(t)}{m_1+m_2X(t)+m_3Y(t)}\right)\,dt+\alpha X(t) dB_1(t) \\
     dY(t) &=  Y(t)\left(-a_2-b_2Y(t)+ \frac{c_2X(t)}{m_1+m_2X(t)+m_3Y(t)}\right)\,dt+\beta Y(t) dB_2(t)
\end{split}
\end{equation}
If $$\lambda_X(\delta_0)=a_1-\frac{\alpha^2}{2}>0$$ species $X$ persists on its own and has a unique stationary distribution $\mu_x$ on $(0,\infty)$. In order to study the coexistence of the two species we have to look at the Lyapunov exponent
\[
\lambda_Y(\mu_x) = -a_2-\frac{\beta^2}{2} + \int_{(0,\infty)} \frac{c_2 x}{m_1+m_2 x}\mu_x(dx) =  -a_2-\frac{\beta^2}{2} + C\int_{(0,\infty)} \frac{c_2x^{q}e^{-ax}}{m_1+m_2x}
\]
where $a:=\frac{2b_1}{\alpha^2}>0, q:=\frac{2a_1}{\alpha^2}-1>0$ and $C$ is a normalizing constant. The following result, which follows from Theorem \ref{t:p_sde} (or \cite{DDY16}), tells us when the species coexist.
\begin{thm}\label{t:pp_sde}
Suppose $\lambda_X(\delta_0)=a_1-\frac{\alpha^2}{2}>0$ and
\[
\lambda_Y(\mu_x)= -a_2-\frac{\beta^2}{2} + C\int_{(0,\infty)} \frac{c_2x^{q}e^{-ax}}{m_1+m_2x}>0.
\]
Then both species persist and the process converges exponentially fast to the unique invariant probability measure $\mu^*$ on $\R_+^{2,\circ}$.
\end{thm}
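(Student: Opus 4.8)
The plan is to deduce Theorem~\ref{t:pp_sde} from the general persistence criterion in Theorem~\ref{t:p_sde}, applied to the two-dimensional system \eqref{e:PP_SDE} with $n=2$, $(X_1,X_2)=(X,Y)$, $g_1=\alpha$, $g_2=\beta$, and $\Sigma=\diag(1,1)$ (so $\sigma_{11}=\sigma_{22}=1$). The first task is to enumerate the ergodic invariant measures supported on the boundary $\partial\R^2_+$, i.e. the set $\M$. On the predator axis $\{X=0\}$ the prey is absent and the predator solves $dY=Y(-a_2-b_2Y)\,dt+\beta Y\,dB_2$; since the within-axis invasion rate of the predator there is $-a_2-\beta^2/2<0$, the predator is driven to extinction and $\delta_0$ is the only invariant measure on this axis. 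On the prey axis $\{Y=0\}$ the prey solves the logistic SDE $dX=X(a_1-b_1X)\,dt+\alpha X\,dB_1$; under $\lambda_X(\delta_0)=a_1-\alpha^2/2>0$ this one-dimensional diffusion has a unique invariant probability measure $\mu_x$ on $(0,\infty)$, whose density is the Gamma density proportional to $x^{q-1}e^{-ax}$ with $a=2b_1/\alpha^2$ and $q=2a_1/\alpha^2-1$; this is exactly what produces the integral appearing in the statement. Hence $\M=\{\delta_0,\mu_x\}$.

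Next I would compute the invasion rates from \eqref{e:lyapunov_sde}. Applying the generator to $\log x$ and integrating against the stationary measure $\mu_x$ gives the standard identity $\lambda_X(\mu_x)=0$. The remaining rates are read off directly: $\lambda_Y(\mu_x)=-a_2-\beta^2/2+\int_{(0,\infty)}\frac{c_2x}{m_1+m_2x}\,\mu_x(dx)$, which equals the quantity in the hypothesis and is assumed positive; $\lambda_X(\delta_0)=a_1-\alpha^2/2>0$ by hypothesis; and $\lambda_Y(\delta_0)=-a_2-\beta^2/2<0$ automatically. It then remains to verify $\max\{\lambda_X(\mu),\lambda_Y(\mu)\}>0$ for every $\mu\in\Conv(\M)$. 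Writing $\mu=p\,\delta_0+(1-p)\mu_x$ with $p\in[0,1]$ and using linearity of the invasion rates in $\mu$, one gets $\lambda_X(\mu)=p\,(a_1-\alpha^2/2)$ and $\lambda_Y(\mu)=p\,\lambda_Y(\delta_0)+(1-p)\,\lambda_Y(\mu_x)$. For $p>0$ the first rate is strictly positive, while for $p=0$ one has $\lambda_Y(\mu_x)>0$; in both cases the maximum is positive. Equivalently, one can exhibit fixed Hofbauer weights $(p_1,p_2)$ with $p_2/p_1$ small enough that $p_1\lambda_X+p_2\lambda_Y>0$ on all of $\M$, which by linearity forces the maximum to be positive throughout $\Conv(\M)$.

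Finally I would check the technical assumptions of Appendix~\ref{s:a_SDE}. Existence, uniqueness and positivity of solutions are standard since the coefficients are locally Lipschitz on $\R_+^{2,\circ}$ and the axes are invariant. The key structural point is non-explosivity and tightness: because the Beddington--DeAngelis responses $\frac{c_1Y}{m_1+m_2X+m_3Y}$ and $\frac{c_2X}{m_1+m_2X+m_3Y}$ are bounded, the linear combination $V(x,y)=x+\tfrac{c_1}{c_2}y$ cancels the predation cross-terms and yields
\[
\op V = a_1x-b_1x^2-\tfrac{c_1}{c_2}a_2y-\tfrac{c_1}{c_2}b_2y^2,
\]
which tends to $-\infty$ as $x+y\to\infty$, giving the required Foster--Lyapunov drift condition. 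Non-degeneracy of the diffusion on $\R_+^{2,\circ}$ (the covariance $\diag(\alpha^2x^2,\beta^2y^2)$ is positive definite in the interior) supplies the strong Feller property and the irreducibility needed for uniqueness of the interior invariant measure. With $\M$ identified, the invasion-rate condition verified on $\Conv(\M)$, and the technical hypotheses in place, Theorem~\ref{t:p_sde} yields a unique invariant probability measure $\mu^*$ on $\R_+^{2,\circ}$ together with exponential convergence in total variation.

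I expect the main obstacle to be the technical-assumption bookkeeping rather than the invasion-rate algebra: one must confirm that the one-dimensional prey diffusion genuinely has the Gamma stationary law $\mu_x$ with enough integrability for the integral defining $\lambda_Y(\mu_x)$ to converge (this is where $q>0$, equivalently $\lambda_X(\delta_0)>0$, is needed), and one must verify that the Lyapunov and non-degeneracy conditions of Appendix~\ref{s:a_SDE} hold even in the degenerate-looking case $b_2=0$, where tightness in the predator direction must come from the $-\tfrac{c_1}{c_2}a_2y$ term alone.
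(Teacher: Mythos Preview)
Your proposal is correct and follows precisely the approach the paper indicates: the paper does not give a detailed proof of Theorem~\ref{t:pp_sde} but simply states that it follows from Theorem~\ref{t:p_sde} (or \cite{DDY16}), and you have supplied exactly the details one needs---the enumeration $\M=\{\delta_0,\mu_x\}$, the invasion-rate computations (including the Gamma stationary law on the prey axis that produces the integral in the statement), the verification of the $\max$-condition on $\Conv(\M)$, and the Foster--Lyapunov/non-degeneracy bookkeeping for the technical assumptions. Your cancellation with $V(x,y)=x+\tfrac{c_1}{c_2}y$ is the standard trick here and handles the $b_2=0$ case as you note.
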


\subsection{SSDE}
We analyze the SSDE version of \eqref{e:PP_SDE}.
Let $X(t), Y(t)$ denote the prey and predator densities and assume $r(t)$ is an independent irreducible Markov chain with stationary distribution $\nu=(\nu_1,\dots,\nu_{n_0})$. We assume the dynamics is

\begin{equation}\label{e:PP_SDE}
\begin{split}
   dX(t) &= X(t)\left(a_1(r(t))-b_1(r(t))X(t) - \frac{c_1(r(t))Y(t)}{m_1(r(t))+m_2(r(t))X(t)+m_3(r(t))Y(t)}\right)\,dt\\
   &~~~+\alpha(r(t)) X(t) dB_1(t) \\
     dY(t) &=  Y(t)\left(-a_2(r(t))-b_2(r(t))Y(t)+ \frac{c_2(r(t))X(t)}{m_1(r(t))+m_2(r(t))X(t)+m_3(r(t))Y(t)}\right)\,dt\\
     &~~~+\beta(r(t)) Y(t) dB_2(t)
\end{split}
\end{equation}
In this setting species $X$ survives and on its own if
\[
\lambda_x(\delta_0\times \nu) = \sum_{k=1}^{n_0}\nu_i\left(a_1(k)-\frac{\alpha^2(k)}{2}\right)>0.
\]
The process $(X,r)$ converges on $(0,\infty)\times \CN$ to a probability measure $\mu^{x,r}$. The coexistence of the predator and the prey are then determined by the Lyapunov exponent
\[
\lambda_y(\mu^{x,r}) = \sum_{k=1}^{n_0}\nu_i\left(-a_2(k)-\frac{\beta^2(k)}{2}\right) +  \sum_{k=1}^{n_0}\int_0^\infty \frac{c_2(k)x}{m_1(k)+m_2(k)x}\mu^{x,r}(dx,k).
\]
\begin{thm}\label{t:pp_ssde}
Suppose $$\lambda_x(\delta_0\times \nu) = \sum_{k=1}^{n_0}\nu_i\left(a_1(k)-\frac{\alpha^2(k)}{2}\right)>0$$ and
\[
\lambda_y(\mu^{x,r}) = \sum_{k=1}^{n_0}\nu_i\left(-a_2(k)-\frac{\beta^2(k)}{2}\right) +  \sum_{k=1}^{n_0}\int_0^\infty \frac{c_2(k)x}{m_1(k)+m_2(k)x}\mu^{x,r}(dx,k)>0.
\]
Then both species persist and the process $(X(t), Y(t), r(t))$ converges exponentially fast to the unique invariant probability measure $\mu^*$ on $\R_+^{2,\circ}\times \CN$.
\end{thm}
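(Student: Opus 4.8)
The plan is to reduce the statement to the general persistence criterion of Theorem~\ref{t:p_ssde} by enumerating the finitely many ergodic invariant measures on the boundary $\partial\R^2_+\times\CN$ and checking that at least one invasion rate is positive on the entire convex hull $\Conv(\M)$. First I would describe the two boundary faces. On the face $\{X=0\}$ the prey is absent and the predator obeys $dY=Y(-a_2(r)-b_2(r)Y)\,dt+\beta(r)Y\,dB_2$; since the predator's growth rate at the origin is $-a_2(k)-\beta^2(k)/2<0$ in every environment $k$, the predator-only subsystem is driven to extinction, so the only invariant measure supported on $\{X=0\}\times\CN$ is $\mu_0:=\delta_0\times\nu$. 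On the face $\{Y=0\}$ the prey follows a one-dimensional logistic SSDE; because $\lambda_x(\delta_0\times\nu)>0$ by hypothesis, the single-species result of Section~\ref{s:SSDE} guarantees a unique ergodic invariant measure $\mu^{x,r}$ on $(0,\infty)\times\CN$. Hence $\M=\{\mu_0,\mu^{x,r}\}$.

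Second, I would record the four relevant invasion rates. By hypothesis $\lambda_X(\mu_0)=\lambda_x(\delta_0\times\nu)>0$ and $\lambda_Y(\mu^{x,r})>0$. The remaining two are forced by structure: $\lambda_Y(\mu_0)=\sum_k\nu_k(-a_2(k)-\beta^2(k)/2)<0$, since the predator cannot invade the empty community, and crucially $\lambda_X(\mu^{x,r})=0$. This last identity is the standard vanishing of a persisting species' Lyapunov exponent against its own invariant measure: applying It\^o's formula to $\log X$ along the prey-only dynamics yields $\tfrac1t\log X(t)=\tfrac1t\int_0^t\big(a_1(r(s))-b_1(r(s))X(s)-\tfrac{\alpha^2(r(s))}{2}\big)\,ds+\tfrac1t\int_0^t\alpha(r(s))\,dB_1(s)$, and since the prey-only process is positive recurrent on $(0,\infty)\times\CN$ the left-hand side and the martingale term vanish as $t\to\infty$, so the ergodic theorem forces the time average to equal $\lambda_X(\mu^{x,r})=0$.

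Third, I would verify the criterion on $\Conv(\M)$. A general element is $\mu=p\,\mu_0+(1-p)\,\mu^{x,r}$ with $p\in[0,1]$, and since each $\lambda_i(\cdot)$ is linear in $\mu$ we obtain $\lambda_X(\mu)=p\,\lambda_X(\mu_0)$ and $\lambda_Y(\mu)=p\,\lambda_Y(\mu_0)+(1-p)\,\lambda_Y(\mu^{x,r})$. For every $p>0$ the prey rate $\lambda_X(\mu)=p\,\lambda_X(\mu_0)$ is strictly positive, while for $p=0$ the predator rate $\lambda_Y(\mu)=\lambda_Y(\mu^{x,r})$ is strictly positive; hence $\max_i\lambda_i(\mu)>0$ throughout $\Conv(\M)$. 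Theorem~\ref{t:p_ssde} then delivers strong stochastic persistence, a unique invariant probability measure $\mu^*$ on $\R_+^{2,\circ}\times\CN$, and exponential convergence in total variation, provided the nondegeneracy and irreducibility hypotheses hold — and they do here, since each equation is driven by its own Brownian motion (so the diffusion is nondegenerate in the interior) and the switching matrix $q(\bx)$ is assumed irreducible.

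The main obstacle is that everything hinges on correctly enumerating $\M$ and on the identity $\lambda_X(\mu^{x,r})=0$. The enumeration requires ruling out any nontrivial predator-only invariant measure, which is immediate from the negative sign of the predator's growth rate at extinction; the identity is the only genuinely analytic point and relies on the positive recurrence, i.e.\ tightness away from $0$ and $\infty$, of the one-dimensional prey SSDE, itself a consequence of $\lambda_x(\delta_0\times\nu)>0$. Verifying the technical irreducibility conditions needed for uniqueness and the exponential rate is routine but is where the nondegeneracy of the driving noise and the irreducibility of the switching must be explicitly invoked.
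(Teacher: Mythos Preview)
Your proposal is correct and follows exactly the approach the paper intends: the paper's entire proof is the single sentence ``This is an immediate application of Theorem~\ref{t:p_ssde},'' and what you have written is precisely the unpacking of that application---enumerating $\M=\{\mu_0,\mu^{x,r}\}$, computing the four invasion rates (including the key identity $\lambda_X(\mu^{x,r})=0$), and checking the $\max$ criterion across the convex hull. You have supplied the details the paper omits, but the route is the same.
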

\begin{proof}
This is an immediate application of Theorem \ref{t:p_ssde}.
\end{proof}
We note that this result simplifies the one from \cite{BS16} where the authors have a more complicated expression $\bar \lambda$ for the Lyapunov exponent and did not realize that $\bar \lambda = \lambda_y(\mu^{x,r})$.

\subsection{Numerical examples}

We analyze numerically the Beddington-DeAngelis predator-prey
model. Two sets of parameters are considered. The first set of
parameters is $a_{1} = 8, a_{2} = 1, b_{1} = 1.1, b_{2} = 0, c_{1} =
10, c_{2} = 4, m_{1} = 2, m_{2} = 2, m_{3} = 0.5$. The second
set of parameters is $a_{1} = 3.9, a_{2} = 1.2, b_{1 } = 0.5, b_{2} = 0, c_{1} = 4, c_{2}
= 4, m_{1} = 2, m_{2} = 1, m_{3} = 0.3$. It follows from Theorem
\ref{t:pp_ode2} that the deterministic model for each set of parameters admits a unique limit cycle. The
strength of the white noise fluctuations is chosen to be $\alpha = \beta =
0.35$.

1) Figure \ref{fig1BD} Top shows the limit cycles of the
Beddington-DeAngelis ODE model for the two sets of parameter we work with. The invariant probability measures for the SDE models are found on a numerical domain $(0,
8)^{2}$ and shown in Figure \ref{fig1BD} Bottom. The mesh size of our numerical computation is $1200 \times
1200$. The reference solution is obtained from a Monte Carlo simulation
with $10^{8}$ sample points. We can see that although the predator does not go extinct, the invariant probability measure
concentrates significantly near the $Y$-axis.

\textit{\textbf{Biological Interpretation:} The white noise fluctuations perturb the dynamics of the two species. The invariant probability measures still look qualitatively like concentric closed trajectories. The environmental fluctuations create a very interesting phenomenon. The densities of the invariant probability measure are highest close to the $y$ axis. This shows that in this system, the prey can become low and stay low for a long time, while the predator has a significant density and takes a long time to die out in the absence of a food source. When the predator population finally decreases, the prey increases again, which causes after some time the predator to increase. This cycle would then get repeated. Our results highlight that even though the theory implies coexistence, a real system might go extinct: spending a long time close to the boundary will make it more likely to have extinction induced by demographic stochasticity.
}

2) We next consider the PDMP version of the model when there are two
environmental states. Parameter sets are the same as in the SDE case. We
compute the invariant probability measure for three different
switching rates, from slow to fast. The switching rates are
$q_{12} =  q_{21} = 2.5$ for the slow switching rate, $q_{12}
= q_{21} = 5$ for the medium switching rate, and $q_{12} = q_{21} = 10$
for the fast switching rate. We have seen that for each set of parameters we get an ODE that has a limit
cycle. Furthermore, the two limit cycles are different. When the rate of
switching increases, the two marginal invariant probability measures from the two possible environmental states move closer to each other. This new
invariant probability measure does not seem to
concentrate on either of the two limit cycles. The result is
shown in Figure \ref{fig2BD}. The numerical domain is still
$(0, 8)^{2}$ with a mesh size $800 \times 800$. The reference solution is obtained from a Monte Carlo
simulation with $10^{8}$ sample points.

\textit{\textbf{Biological Interpretation:} Switching between the two environments creates a coexistence situation where the invariant measure is qualitatively similar to the occupation measure of a limit cycle. However, in this setting the support of the stationary distribution seems to be bounded away from the extinction set - there is no concentration near the extinction set. As the switching speed increases the invariant probability measures from the two environments become closer and closer. The dynamics with fast switching is close to an ODE system and will have a unique limit cycle that is significantly larger than the limit cycles from the two separate environments. Environmental fluctuations can significantly change the dynamics and make the species densities oscillate at greater amplitudes (more than double the amplitudes from each fixed environment).
}

3) The third numerical simulation involves the SSDE
model. The model parameters (including the switching rates) are the same as in the PDMP model. In addition, we take the environmental fluctuation strengths
to be $\alpha = \beta = 0.35$. Again, we run the simulation
for three different switching rates. The switching rates are
$q_{12} =  q_{21} = 1$ for the slow switching rate, $q_{12}
= q_{21} = 2.5$ for the medium switching rate, and $q_{12} = q_{21} = 10$
for the fast switching rate. We get the existence of a unique
invariant probability measure which is absolutely continuous with
respect to Lebesgue measure - see Figure \ref{fig3BD}.  Again, the
probability density function significantly concentrates near the
$Y$-axis. We cap the heat map at $0.05$ to make the probability
density function visible in the area with lower probability density. The numerical domain is
$(0, 8)^{2}$ with a mesh size $800 \times 800$. The reference solution is obtained from a Monte Carlo
simulation with $10^{8}$ sample points.

\textit{\textbf{Biological Interpretation:} The combination of random switching and white noise make the coexistence measure spread out and concentrate close to the $Y$ axis. As in the SDE example, this shows that in this setting one might not be entitled to neglect demographic stochasticity.
}

The last step is the sensitivity analysis. We use Milstein scheme to
simulate SDE and SSDE. The estimator of the mean finite time
error is still
$$
  I = \frac{1}{N}\sum_{i = 0}^{N - 1}\rho(\hat{X}^{dt}_{i,T},
  \hat{X}^{2dt}_{i,T}) \,.
$$
Parameters in our simulation are $dt = 0.001$ and $N = 10^{6}$. For the SDE
model, we only run simulations using the first parameter set. The time
span is $T = 5$ for SDE, and $T = 6$ for SSDE. The estimator $I$ is
$0.0398$ for SDE, and $0.0404$ for SSDE. We next run coupling method to show the speed
of convergence. The exponential tails of coupling times are demonstrated in
Figure \ref{fig4BD}. We can see that the coupling time distribution has
 exponential tails in each case. This gives us an estimate of
$\alpha \approx 0.395$ for the SDE and $\alpha \approx 0.467$ for SSDE. Therefore, we
have $\mathrm{d}_{w}( \mu^{*}, \hat{\mu}) \approx 0.0658$ for the SDE, and
$\mathrm{d}_{w}( \mu^{*}, \hat{\mu}) \approx 0.0758$ for the
SSDE. These error terms are larger than in the case of the Lotka-Volterra
model, because the coupling time is slower in the presence of a limit
cycle. Two trajectories in the coupled SDE need to chase each other
near the limit cycle in order to get close. Nevertheless, the numerical
results still show that the obtained invariant probability
measures are acceptable in both the Lotka-Volterra and the Beddington-DeAngelis cases.

 \begin{figure}[!htbp]
 	\begin{center}
 		\includegraphics[width = \linewidth]{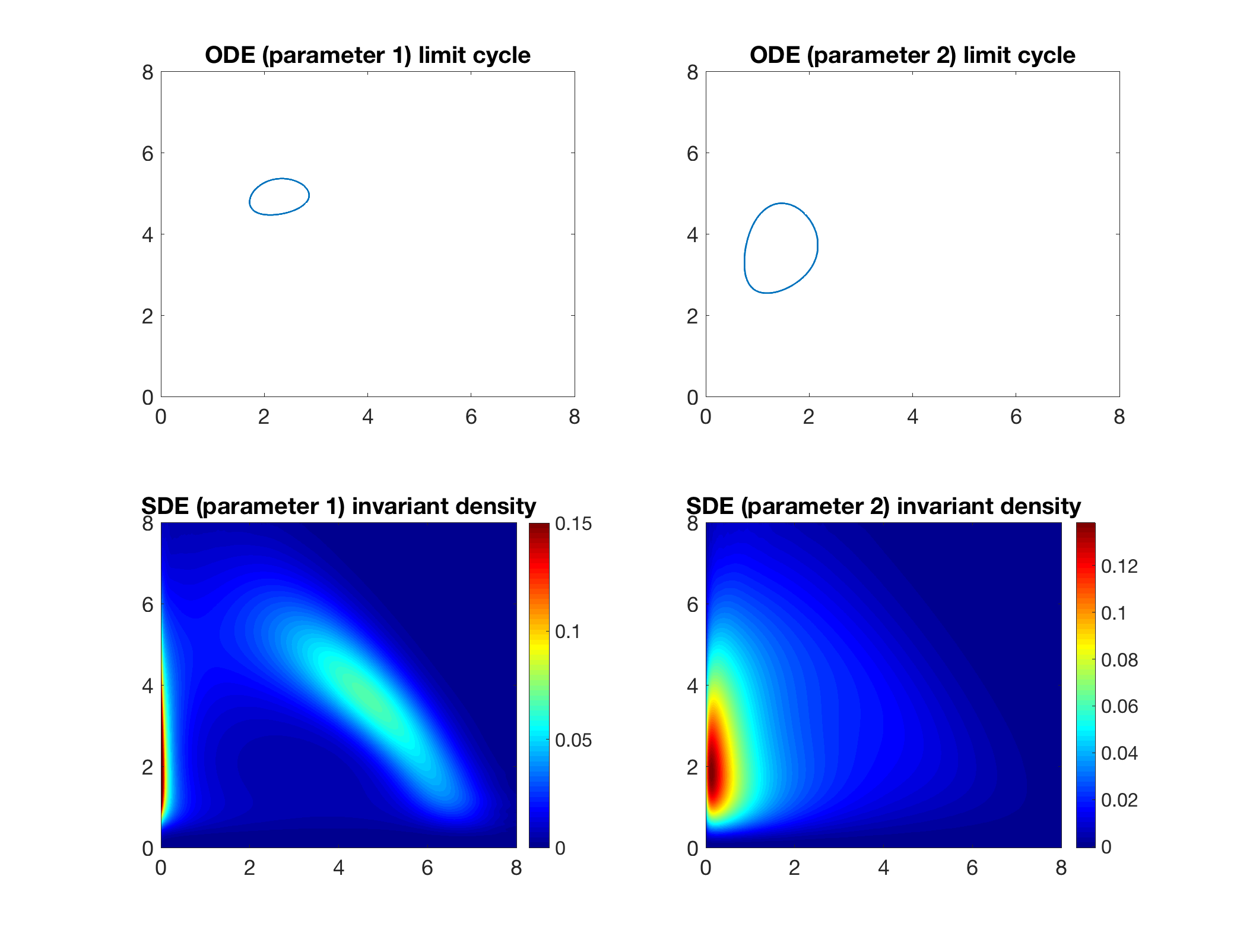}
 		\caption{{\bf Top}: Trajectories of $X_{1}(t)$ and
                  $X_{2}(t)$ for the two parameter sets. {\bf Bottom}:
                  Invariant probability measures for the two parameter
                  sets. }
\label{fig1BD}\end{center}
 \end{figure}

\begin{figure}[!htbp]
 	\begin{center}
 		\includegraphics[width = \linewidth]{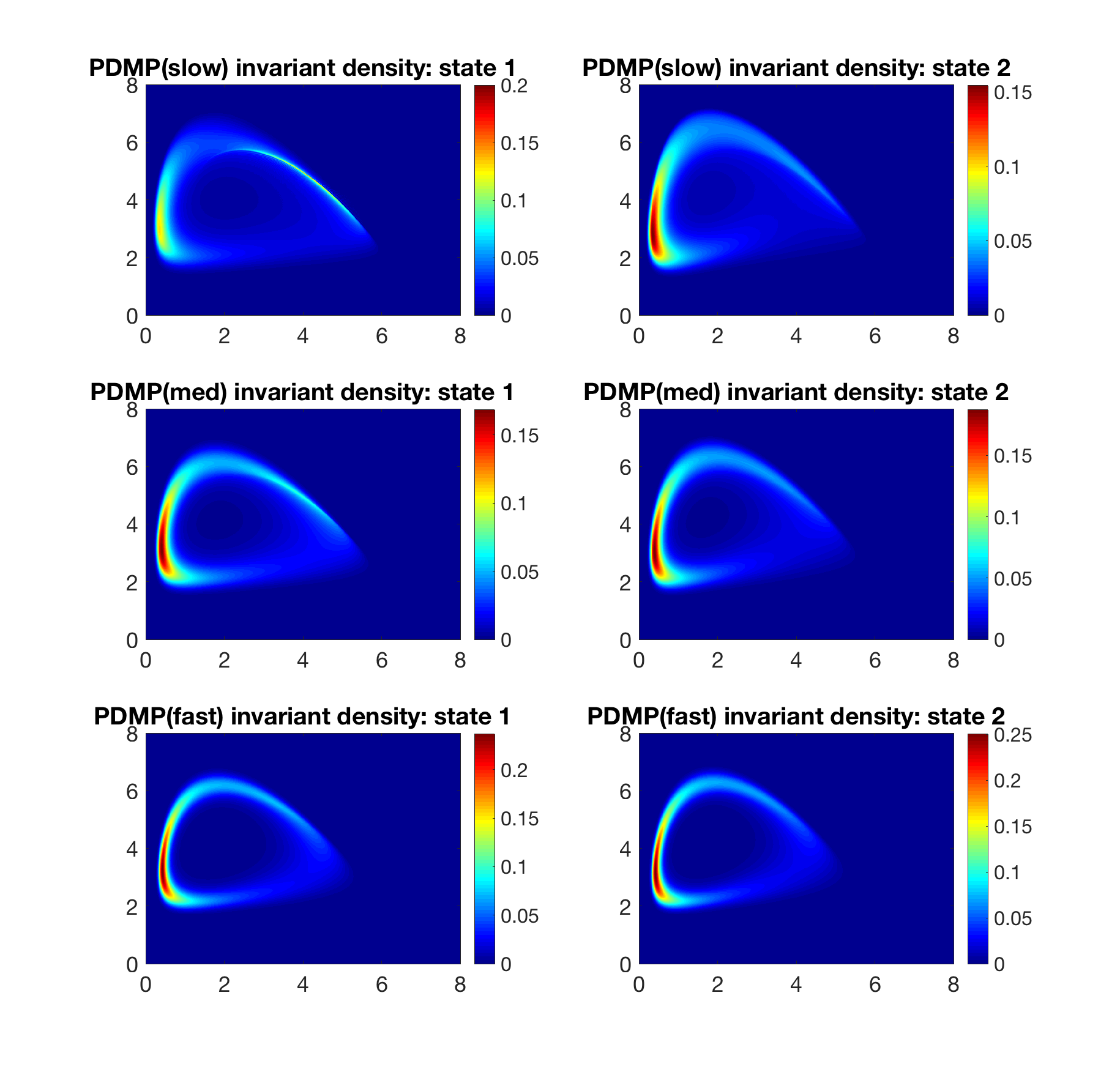}
 		\caption{Top to bottom: Invariant probability measures
                of the PDMP model for three different sets of switching rates. }
\label{fig2BD}\end{center}
 \end{figure}

\begin{figure}[!htbp]
 	\begin{center}
 		\includegraphics[width = \linewidth]{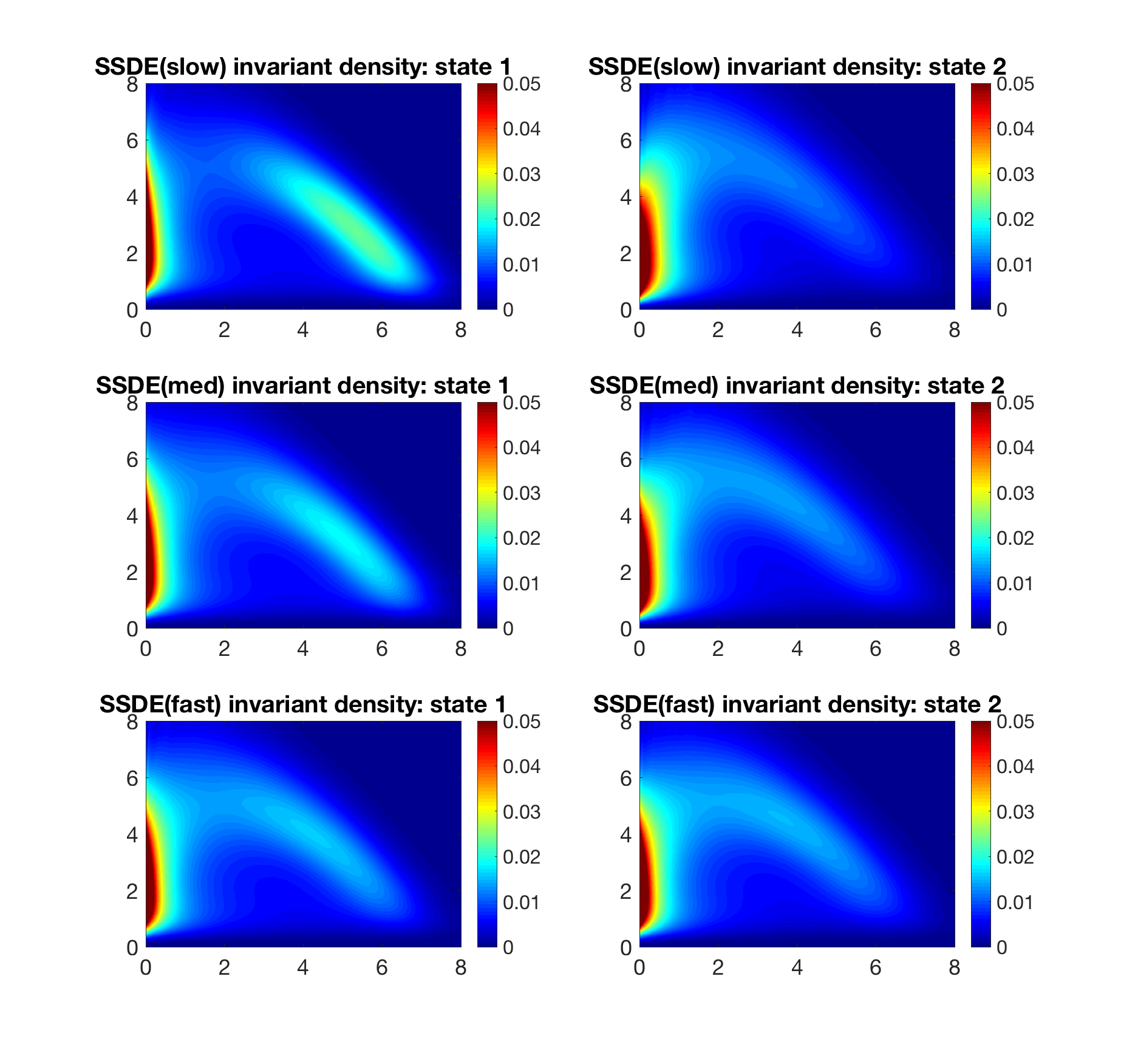}
 		\caption{Top to bottom: Invariant probability measures
                of the SSDE model for three different sets of switching rates. }
\label{fig3BD}\end{center}
 \end{figure}

\begin{figure}[!htbp]
 	\begin{center}
 		\includegraphics[width = \linewidth]{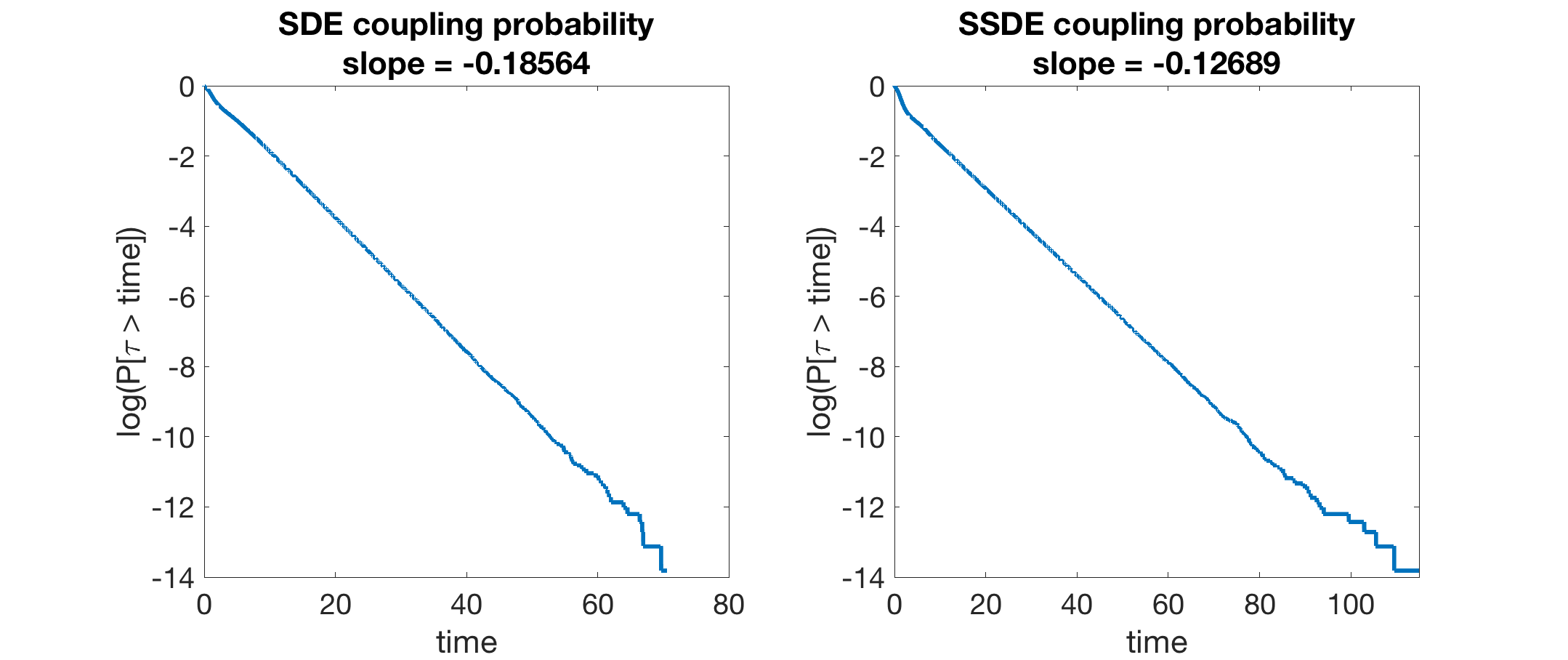}
 		\caption{Exponential tails of the coupling time for the
                  SDE and SSDE models. }
\label{fig4BD}\end{center}
 \end{figure}

\section{Rock-paper-scissors dynamics}\label{s:RPS}
It has been observed in nature that certain three species systems can have rock-paper-scissors dynamics. One such example is the side-blotched lizard \cite{SL96}. There are three different types of lizards. The first type is a highly aggressive lizard that attempts to control a large area and mate with any females within the area. The second type is a furtive lizard, which wins against the aggressive lizard by acting like a female. This way the furtive lizard cand mate without being detected in an aggressive lizard’s territory. The third type is a guarding lizard that watches one specific female for mating. This prevents the furtive lizard from mating. However, the guarding lizard is not strong enough to overcome the aggressive lizard.

Interestingly this type of dynamics creates regimes where one species seems to win, until the species that beats it makes a comeback. This creates subtle technical problems.

\subsection{ODE}
Pick $0<\beta<1<\alpha$ and set
  \begin{equation}\label{e:ODE}
\begin{split}
d\bar X_1(t) &= \bar X_1(t)\left(1-\bar X_1(t)-\alpha \bar X_2(t)-\beta \bar X_3(t)\right)\,dt \\
d\bar X_2(t) &= \bar X_2(t)\left(1-\beta X_1(t)-\bar X_2(t)-\alpha\bar X_3(t)\right)\,dt \\
d\bar X_3(t) &= \bar X_3(t)\left(1-\alpha \bar X_1(t) -\beta \bar X_2(t) - \bar X_3(t)\right)\,dt \\
\end{split}
\end{equation}
This is the model introduced by \cite{ML75}. Let $\Delta =\{\bx\in \R_+^3~:~x_1+x_2+x_3=1\}$ be the unit simplex. One can see that \eqref{e:ODE} has five fixed points. The origin $0$ is a source, the canonical basis vectors $e_1, e_2, e_3$ are saddle points and the equilibrium that is not on the boundary is given by
\[
\bar x = \frac{e_1+e_2+e_3}{1+\alpha+\beta}
\]
Let $W^S(A)$ be the stable manifold of the set $A$ and let $D=\{\bx\in\R_+^3~:~x_1=x_2=x_3\}$ be the diagonal. One can show that
\[
\Omega : = W^S(e_1) \cup W^S(e_2) \cup W^S(e_3)
\]
is a heteroclinic cycle. We have the following description of the dynamics (\cite{HS98, B18}):
\begin{enumerate}
  \item If $\alpha+\beta<2$ the interior fixed point $\bar x$ is a sink and all trajectories starting in $\R_+^{3,\circ}$ converge to $\bar x$.
  \item If $\alpha+\beta>2$ the interior point $\bar x$ is a saddle with stable manifold $D\setminus\{0\}$. Every trajectory starting from $\R_+^{3,\circ}\setminus D$ has $\Omega$ as omega limit cycle.
  \item If $\alpha+\beta=2$ the set $\Delta$ is invariant and attracts all nonzero trajectories, $\Omega=\partial \Delta$ and trajectories starting in $\Delta^\circ\setminus \{\bar x\}$ are periodic.
\end{enumerate}

\subsection{SDE}
Assume the system is given by the stochastic differential equations
\begin{equation}\label{e:LV}
dX_i(t)=X_i(t)\left(\mu_i-\sum_{j=1}^3 a_{ij} X_j(t)\right)\,dt +  \sigma_i X_i(t)\,dB_i(t), X_i(0)=x_i\geq 0
\end{equation}
where $B_1, B_2, B_3$ are independent Brownian motions.
The constant $\mu_i$ is the per-capita growth rate of the the $i$th species, and $a_{ij}>0$ is the coefficient measuring the competition strength of species $j$ on species $i$. Set $$\bar \mu_i:= \mu_i-\frac{\sigma_{i}^2}{2}.$$
Each species persists on its own, so $\bar \mu_i>0$ and there exists a unique invariant probability measure $\mu_i^*$ on $(0,\infty)$.
The Lyapunov exponents can be computed as
\begin{equation}\label{e:lambda}
\lambda_i(\mu^*_j) = \frac{\bar \mu_i a_{jj}-a_{ij}\bar \mu_j}{a_{jj}}.
\end{equation}
We assume that species $2$ outcompetes species $1$, species $3$ outcompetes species $2$ and species $1$ outcompetes species $3$:
$$\lambda_2(\mu_1^*)>0, \lambda_3(\mu_1^*)<0, \lambda_1(\mu_2^*)<0, \lambda_3(\mu_2^*)>0, \lambda_1(\mu_3^*)>0, \lambda_2(\mu_3^*)<0.$$

In order to be in the rock-paper-scissors setting we therefore need

\[
\frac{a_{12}}{a_{22}} > \frac{\bar \mu_1}{\bar \mu_2}, \frac{a_{11}}{a_{21}}> \frac{\bar \mu_1}{\bar \mu_2},
\]

\[
\frac{a_{23}}{a_{33}} > \frac{\bar \mu_2}{\bar \mu_3}, \frac{a_{22}}{a_{32}}> \frac{\bar \mu_2}{\bar \mu_3},
\]

and

\[
\frac{a_{31}}{a_{11}} > \frac{\bar \mu_3}{\bar \mu_1}, \frac{a_{33}}{a_{13}}> \frac{\bar \mu_3}{\bar \mu_1},
\]
The following results quantify when one has persistence or extinction for this dynamics (see \cite{HNS20} for complete proofs).
\begin{thm} [Persistence]\label{t:pers}
If the product of the Lyapunov exponents (invasion rates) pushing the process away from the boundary $|\lambda_2(\mu_1^*)\lambda_3(\mu_2^*)\lambda_1(\mu_3^*)|$ is strictly greater than the product of the Lyapunov exponents attracting the process towards the boundary $|\lambda_3(\mu_1^*)\lambda_1(\mu_2^*)\lambda_2(\mu_3^*)|$ we get persistence and exponential convergence to an invariant probability measure $\mu^*$ on $\R_+^{3,\circ}$.
\end{thm}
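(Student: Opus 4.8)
The plan is to derive the theorem as a direct application of the general persistence criterion in Theorem \ref{t:p_sde}, whose hypothesis asks that $\max_i \lambda_i(\mu) > 0$ for every $\mu \in \Conv(\M)$. The first task is therefore to identify $\M$, the ergodic invariant measures supported on $\partial\R_+^3$. Since each coordinate hyperplane is invariant for \eqref{e:LV} (a density that is $0$ stays $0$, as both the drift and diffusion vanish there), every boundary invariant measure is an invariant measure of the dynamics restricted to a face. On the axes the one-dimensional logistic SDE has the unique stationary law $\mu_i^*$, and the origin carries $\delta_0$. On each two-dimensional face one has a two-species competitive Lotka--Volterra SDE, and the rock-paper-scissors inequalities say precisely that on such a face the mutual-invasibility criterion \eqref{eq7-3} fails (one species has negative invasion rate against the other's equilibrium). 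Hence competitive exclusion holds on every face, there is no face-interior coexistence measure, and I would conclude $\M = \{\delta_0, \mu_1^*, \mu_2^*, \mu_3^*\}$, so that $\Conv(\M)$ is exactly their convex hull.

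Because $\mu \mapsto \lambda_i(\mu)$ is affine, verifying $\max_i \lambda_i(\mu) > 0$ over the whole polytope $\Conv(\M)$ amounts to excluding a single ``bad'' convex combination. Suppose toward a contradiction that $\mu = p_0\delta_0 + p_1\mu_1^* + p_2\mu_2^* + p_3\mu_3^*$ satisfies $\lambda_i(\mu)\le 0$ for $i=1,2,3$. Since $\lambda_i(\delta_0)=\bar\mu_i>0$ (each species persists alone), the $\delta_0$-term only raises each $\lambda_i$, so after discarding it and renormalizing I may assume $p_0=0$ and $p_1+p_2+p_3=1$. Using $\lambda_j(\mu_j^*)=0$ and the fixed signs from the rock-paper-scissors hypotheses, the three inequalities take the cyclic form
\begin{equation*}
p_3\,\lambda_1(\mu_3^*) \le p_2\,|\lambda_1(\mu_2^*)|,\quad p_1\,\lambda_2(\mu_1^*) \le p_3\,|\lambda_2(\mu_3^*)|,\quad p_2\,\lambda_3(\mu_2^*) \le p_1\,|\lambda_3(\mu_1^*)|.
\end{equation*}

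This cyclic system is the crux. If some $p_i$ vanished, one of the inequalities would force a neighbouring coordinate to vanish as well, collapsing the combination to a single $\mu_j^*$, at which the strictly positive off-diagonal invasion rate gives an immediate contradiction; hence all $p_i>0$. Multiplying the three inequalities and cancelling the common factor $p_1p_2p_3>0$ yields
\begin{equation*}
\lambda_2(\mu_1^*)\,\lambda_3(\mu_2^*)\,\lambda_1(\mu_3^*) \le |\lambda_3(\mu_1^*)\,\lambda_1(\mu_2^*)\,\lambda_2(\mu_3^*)|,
\end{equation*}
contradicting the standing hypothesis that the repelling product strictly dominates the attracting one. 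Thus no such $\mu$ exists and $\max_i\lambda_i(\mu)>0$ throughout $\Conv(\M)$. Equivalently, the product hypothesis is exactly what permits the construction of Hofbauer weights $p_i>0$ with $\sum_i p_i\lambda_i(\mu)>0$ for all $\mu\in\M$, obtained by choosing each ratio $p_{i+1}/p_i$ just above the threshold appearing above so that the product of the ratios equals one.

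With the invasion-rate hypothesis of Theorem \ref{t:p_sde} verified, and observing that the standard technical assumptions of Appendix \ref{s:a_SDE} (nondegeneracy of the diffusion on $\R_+^{3,\circ}$, nonexplosivity, and interior irreducibility) hold for the competitive system \eqref{e:LV}, I would invoke that theorem to conclude strong stochastic persistence: a unique invariant probability measure $\mu^*$ on $\R_+^{3,\circ}$ and exponential convergence in total variation. The step I expect to demand the most care is the characterization of $\M$ in the first paragraph, in particular rigorously ruling out any extra boundary invariant measure tied to the deterministic heteroclinic cycle $\Omega$. The resolution is that, in the noisy system, all boundary-supported invariant measures live on the faces and are governed by the lower-dimensional competitive subsystems, so $\Omega$ supports no invariant probability measure beyond convex combinations of $\delta_0,\mu_1^*,\mu_2^*,\mu_3^*$.
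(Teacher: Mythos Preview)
The paper does not supply a proof of Theorem \ref{t:pers}; it simply states the result and points to \cite{HNS20} for complete proofs. Your proposal is correct and is exactly the natural route: identify the boundary ergodic measures as $\M=\{\delta_0,\mu_1^*,\mu_2^*,\mu_3^*\}$ (using the two-species competitive-exclusion classification on each face, which the rock--paper--scissors sign pattern forces), then verify the hypothesis of Theorem \ref{t:p_sde} by the contradiction argument that multiplies the three cyclic inequalities and cancels $p_1p_2p_3>0$. This ``product of repelling exponents vs.\ product of attracting exponents'' computation is precisely the Hofbauer-type criterion for stochastic heteroclinic cycles that underlies the result in \cite{HNS20}, so your argument matches the intended one. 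The only point worth tightening, which you already flag, is the claim that no ergodic invariant measure lives in the interior of a two-dimensional face; this follows from the known extinction dichotomy for two-species competitive Lotka--Volterra SDE (e.g.\ \cite{HN16, HN17}), since on each face one invasion rate is negative and the other positive.
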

\begin{thm}
[Extinction]\label{t:ext} If the product of the Lyapunov exponents (invasion rates) pushing the process away from the boundary $|\lambda_2(\mu_1^*)\lambda_3(\mu_2^*)\lambda_1(\mu_3^*)|$ is strictly smaller than the product of the Lyapunov exponents attracting the process towards the boundary $|\lambda_3(\mu_1^*)\lambda_1(\mu_2^*)\lambda_2(\mu_3^*)|$ we get extinction, in the sense that there exists $\alpha>0$ such that

\[
\PP_\bx\left(\limsup_{t\to\infty} \frac{\dist(\BX(t),\partial\R^3_+)}{t}<\alpha\right)=1
\]
for any $\bx\in\R^{3,\circ}_+$
\end{thm}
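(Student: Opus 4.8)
The plan is to show that under the extinction hypothesis the boundary heteroclinic cycle becomes an attractor for the stochastic flow, so that the process is driven into $\partial\R^3_+$. First I would pin down the relevant invariant measures on the boundary. Because the interaction is of competitive Lotka--Volterra type, each face $\{x_k=0\}$ is invariant, and on the face $\{x_3=0\}$ the signs $\lambda_2(\mu_1^*)>0$, $\lambda_1(\mu_2^*)<0$ show that species $2$ excludes species $1$; the analogous statements hold on the other two faces, so no invariant measure is supported on the interior of a face. Hence the extreme points of $\Conv\M$ are exactly the three axis measures $\mu_1^*,\mu_2^*,\mu_3^*$ together with $\delta_0$, and since $\lambda_i(\delta_0)=\bar\mu_i>0$ the origin is a repeller and plays no role. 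The rock--paper--scissors sign pattern then forces the boundary dynamics to carry the heteroclinic cycle $\mu_1^*\to\mu_2^*\to\mu_3^*\to\mu_1^*$, and the whole question reduces to deciding whether this cycle attracts or repels the interior motion.

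The core of the argument is an excursion analysis in the logarithmic variables $U_i=\log X_i$, which by It\^o's formula satisfy $dU_i=\big(\bar\mu_i-\sum_j a_{ij}e^{U_j}\big)\,dt+\sigma_i\,dB_i$. I would track the itinerary of $\BX$ as it sweeps past the successive vertices. During a sojourn near $\mu_1^*$ the dominant species $1$ equilibrates to its one-dimensional stationary law, so by the ergodic theorem for the axis process together with the strong law for the martingales $\int\sigma_i\,dB_i$, the realized per-capita growth rate of the invader (species $2$) converges to $\lambda_2(\mu_1^*)>0$ while that of the excluded species (species $3$) converges to $\lambda_3(\mu_1^*)<0$. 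Ending the sojourn when the invader reaches a fixed threshold $\eps$, its duration is $\approx\lambda_2(\mu_1^*)^{-1}\log(\eps/X_2^{\mathrm{in}})$, over which $\log X_3$ falls by about $\big(|\lambda_3(\mu_1^*)|/\lambda_2(\mu_1^*)\big)\log(\eps/X_2^{\mathrm{in}})$. Recording this relation at each vertex and composing around the cycle gives an approximate affine recursion for the depth $-\log(\min_i X_i)$ at successive passages, with multiplicative factor
\[
\kappa=\frac{|\lambda_3(\mu_1^*)|}{\lambda_2(\mu_1^*)}\cdot\frac{|\lambda_1(\mu_2^*)|}{\lambda_3(\mu_2^*)}\cdot\frac{|\lambda_2(\mu_3^*)|}{\lambda_1(\mu_3^*)}.
\]
The extinction hypothesis $|\lambda_3(\mu_1^*)\lambda_1(\mu_2^*)\lambda_2(\mu_3^*)|>|\lambda_2(\mu_1^*)\lambda_3(\mu_2^*)\lambda_1(\mu_3^*)|$ is precisely $\kappa>1$ (the stochastic analogue of the Krupa--Melbourne stability criterion), so the depth grows geometrically and the value of the currently excluded species tends to $0$ along the passages. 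This yields $\liminf_{t\to\infty}\min_i X_i(t)=0$, which is the substantive extinction statement; the displayed almost-sure bound on $\dist(\BX(t),\partial\R^3_+)/t$ is then immediate, since the competitive dynamics is dissipative and hence $\dist(\BX(t),\partial\R^3_+)$ stays bounded while $t\to\infty$.

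To convert the heuristic recursion into an almost sure statement I would control the Brownian fluctuations over each excursion. A sojourn near a vertex lasts a time of order $\log(1/\eps)$, so the accumulated martingale increment $\int\sigma_i\,dB_i$ is of order $\sqrt{\log(1/\eps)}$, negligible against the order-$\log(1/\eps)$ drift producing the factor $\kappa$; exponential (Azuma-type) martingale tail bounds combined with Borel--Cantelli then ensure that the realized per-passage factor exceeds $(1+\kappa)/2>1$ for all but finitely many cycles on almost every path. I would also use the a priori dissipativity estimate, which guarantees that $\BX$ almost surely enters and remains in a fixed compact subset of $\R^3_+$, keeping the drift coefficients bounded and legitimizing the ergodic approximations used above.

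The main obstacle is making the itinerary analysis rigorous rather than heuristic, and two points demand real care. First, one must verify that the trajectory genuinely follows the cyclic order and does not escape into the interior during a transition, when two species are simultaneously of order one: this requires analyzing the two-dimensional face dynamics to confirm that the excluded species is driven down before the next invasion begins and that transverse fluctuations cannot repeatedly push the path back into $\R^{3,\circ}_+$. Second, because the sojourn times grow without bound as $\BX$ approaches the cycle, the ergodic and law-of-large-numbers approximations for the realized growth rates must hold uniformly in the shrinking entry data $X_i^{\mathrm{in}}$; obtaining concentration estimates valid uniformly along an almost sure path, and organizing the Borel--Cantelli bookkeeping over excursions of unbounded length, is the technical heart of the argument and where I expect most of the effort to lie. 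The full details are carried out in \cite{HNS20}.
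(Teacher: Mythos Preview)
The paper does not give a proof of this theorem at all: immediately before stating Theorems~\ref{t:pers} and~\ref{t:ext} it writes ``see \cite{HNS20} for complete proofs'' and then moves on, using the result only through its Corollaries. Your proposal therefore already contains strictly more argument than the paper itself, and it ends with the same citation.

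As a sketch of the argument that actually lives in \cite{HNS20}, your outline is on target. The identification of the ergodic boundary measures as $\delta_0,\mu_1^*,\mu_2^*,\mu_3^*$, the passage to the log variables $U_i$, the excursion bookkeeping near each vertex, and the resulting multiplicative factor
\[
\kappa=\frac{|\lambda_3(\mu_1^*)|\,|\lambda_1(\mu_2^*)|\,|\lambda_2(\mu_3^*)|}{\lambda_2(\mu_1^*)\,\lambda_3(\mu_2^*)\,\lambda_1(\mu_3^*)}
\]
with the extinction hypothesis reading $\kappa>1$, are exactly the stochastic Hofbauer/Krupa--Melbourne mechanism one expects. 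You also correctly flag the two genuine technical hurdles (controlling the transitions between vertices so the cyclic itinerary is respected, and obtaining uniform ergodic/concentration estimates over excursions of growing length), which is precisely where the work in \cite{HNS20} lies. Your observation that the displayed inequality on $\dist(\BX(t),\partial\R^3_+)/t$ is trivially implied by dissipativity is apt: the statement as printed appears to be missing a logarithm (one expects $\limsup_{t\to\infty}t^{-1}\log\dist(\BX(t),\partial\R^3_+)\le -\alpha$), and the substantive content is indeed $\liminf_{t\to\infty}\min_i X_i(t)=0$ together with an exponential rate.
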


Using \eqref{e:lambda} we get the following corollaries.
\begin{cor}\label{c:1}
If \[
(\bar \mu_2 a_{11}-a_{21}\bar \mu_1)(\bar \mu_3 a_{22}-a_{32}\bar \mu_2)(\bar \mu_1 a_{33}-a_{13}\bar \mu_3) > |(\bar \mu_1 a_{22}-a_{12}\bar \mu_2)(\bar \mu_2 a_{33}-a_{23}\bar \mu_3)(\bar \mu_3 a_{11}-a_{31}\bar \mu_1)|
\]
then the system persists and converges exponentially fast to the invariant probability measure $\mu^*$ on $\R_+^{3,\circ}$.
\end{cor}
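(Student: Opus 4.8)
The plan is to derive this as an immediate consequence of Theorem~\ref{t:pers} by inserting the explicit invasion rates from \eqref{e:lambda}. Theorem~\ref{t:pers} guarantees persistence and exponential convergence to $\mu^*$ exactly when the product of the outward (repelling) invasion rates dominates the product of the inward (attracting) ones, so the whole task is to rewrite that inequality in terms of the parameters $\bar\mu_i$ and $a_{ij}$.

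First I would list the six relevant exponents obtained from \eqref{e:lambda}. The three outward rates are
\[
\lambda_2(\mu_1^*)=\frac{\bar\mu_2 a_{11}-a_{21}\bar\mu_1}{a_{11}},\quad \lambda_3(\mu_2^*)=\frac{\bar\mu_3 a_{22}-a_{32}\bar\mu_2}{a_{22}},\quad \lambda_1(\mu_3^*)=\frac{\bar\mu_1 a_{33}-a_{13}\bar\mu_3}{a_{33}},
\]
while the three inward rates are
\[
\lambda_3(\mu_1^*)=\frac{\bar\mu_3 a_{11}-a_{31}\bar\mu_1}{a_{11}},\quad \lambda_1(\mu_2^*)=\frac{\bar\mu_1 a_{22}-a_{12}\bar\mu_2}{a_{22}},\quad \lambda_2(\mu_3^*)=\frac{\bar\mu_2 a_{33}-a_{23}\bar\mu_3}{a_{33}}.
\]
Both triple products share the common positive denominator $a_{11}a_{22}a_{33}$, which cancels from the inequality of Theorem~\ref{t:pers}; only the numerators remain to be compared.

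Next I would use the standing rock-paper-scissors sign conditions to clear the absolute values. By hypothesis $\lambda_2(\mu_1^*),\lambda_3(\mu_2^*),\lambda_1(\mu_3^*)>0$, so the outward numerator product is positive and equals its own absolute value, yielding the unsigned left-hand side of the corollary. The inward factors $\bar\mu_3 a_{11}-a_{31}\bar\mu_1$, $\bar\mu_1 a_{22}-a_{12}\bar\mu_2$, $\bar\mu_2 a_{33}-a_{23}\bar\mu_3$ are each negative (since $\lambda_3(\mu_1^*),\lambda_1(\mu_2^*),\lambda_2(\mu_3^*)<0$), so their product retains the absolute value; reordering these three commuting factors reproduces exactly the right-hand side of the corollary. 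Applying Theorem~\ref{t:pers} then gives persistence and exponential convergence to $\mu^*$ on $\R_+^{3,\circ}$.

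There is no genuine analytic difficulty here, since the argument is purely algebraic substitution and cancellation. The only step demanding care is matching indices correctly in \eqref{e:lambda} and tracking the sign of each of the six factors, so that the absolute values are removed consistently; a single transposed index or misread sign would swap the roles of the outward and inward products and invert the inequality.
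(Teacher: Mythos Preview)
Your proposal is correct and follows exactly the paper's approach: the paper simply states that the corollary follows from Theorem~\ref{t:pers} by substituting the explicit Lyapunov exponents from \eqref{e:lambda}, which is precisely the algebraic computation and sign check you carry out. Your added remarks about the common denominator $a_{11}a_{22}a_{33}$ and the handling of signs via the standing rock-paper-scissors assumptions make the substitution fully explicit but do not deviate from the paper's intended argument.
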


\begin{cor}\label{c:2}
If \[
(\bar \mu_2 a_{11}-a_{21}\bar \mu_1)(\bar \mu_3 a_{22}-a_{32}\bar \mu_2)(\bar \mu_1 a_{33}-a_{13}\bar \mu_3) < |(\bar \mu_1 a_{22}-a_{12}\bar \mu_2)(\bar \mu_2 a_{33}-a_{23}\bar \mu_3)(\bar \mu_3 a_{11}-a_{31}\bar \mu_1)|
\]
then the system goes extinct, i.e. with probability one
\[
\BX(t)\to \partial \R_+^{3,\circ}
\]
\end{cor}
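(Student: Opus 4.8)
The plan is to derive this corollary directly from the extinction criterion in Theorem \ref{t:ext} by substituting the closed-form expressions for the invasion rates given in \eqref{e:lambda}. First I would recall that under the rock-paper-scissors assumptions the six relevant Lyapunov exponents split into a ``repelling'' triple $\lambda_2(\mu_1^*), \lambda_3(\mu_2^*), \lambda_1(\mu_3^*)$, all positive, and an ``attracting'' triple $\lambda_3(\mu_1^*), \lambda_1(\mu_2^*), \lambda_2(\mu_3^*)$, all negative. Theorem \ref{t:ext} asserts extinction precisely when $|\lambda_2(\mu_1^*)\lambda_3(\mu_2^*)\lambda_1(\mu_3^*)| < |\lambda_3(\mu_1^*)\lambda_1(\mu_2^*)\lambda_2(\mu_3^*)|$, so the entire task reduces to rewriting this inequality in terms of the model parameters $\bar\mu_i$ and $a_{ij}$.

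Next I would plug in \eqref{e:lambda}. Each exponent $\lambda_i(\mu_j^*)$ equals $(\bar\mu_i a_{jj} - a_{ij}\bar\mu_j)/a_{jj}$, so both the repelling product and the attracting product carry the identical positive denominator $a_{11}a_{22}a_{33}$; cancelling this common factor leaves a comparison of the two numerators. The repelling numerator is $(\bar\mu_2 a_{11} - a_{21}\bar\mu_1)(\bar\mu_3 a_{22} - a_{32}\bar\mu_2)(\bar\mu_1 a_{33} - a_{13}\bar\mu_3)$, and since each factor equals $a_{jj}\lambda_i(\mu_j^*)>0$ this product is positive, which is exactly why no absolute value is needed on the left-hand side of the stated inequality. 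The attracting numerator is $(\bar\mu_1 a_{22} - a_{12}\bar\mu_2)(\bar\mu_2 a_{33} - a_{23}\bar\mu_3)(\bar\mu_3 a_{11} - a_{31}\bar\mu_1)$, whose three factors are each negative, so its absolute value is retained as written.

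Putting these together, the extinction inequality from Theorem \ref{t:ext} becomes precisely the displayed inequality of the corollary, and the conclusion $\BX(t)\to\partial\R_+^{3,\circ}$ then follows immediately from that theorem. I expect no genuine analytic obstacle here: the argument is an algebraic substitution together with an appeal to Theorem \ref{t:ext}. The only point requiring care is the bookkeeping, namely matching each index pair $(i,j)$ to the correct factor $\bar\mu_i a_{jj} - a_{ij}\bar\mu_j$ and tracking signs so that the absolute values land on the correct side. A useful consistency check is that the companion persistence statement, Corollary \ref{c:1}, arises from the same substitution with the reversed inequality, which confirms that the index and sign conventions have been applied correctly.
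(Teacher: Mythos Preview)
Your proposal is correct and matches the paper's approach exactly: the paper derives both corollaries by substituting the explicit formula \eqref{e:lambda} for $\lambda_i(\mu_j^*)$ into the criteria of Theorems \ref{t:pers} and \ref{t:ext}, and the sentence immediately following Corollary \ref{c:2} is precisely the one-line appeal to Theorem \ref{t:ext} that you spell out in detail. Your additional remarks about cancelling the common positive denominator $a_{11}a_{22}a_{33}$ and tracking the signs of the repelling versus attracting factors are the natural bookkeeping that the paper leaves implicit.
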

By Theorem \ref{t:ext} we have extinction and convergence to the boundary when
\[
(\bar \mu_2 a_{11}-a_{21}\bar \mu_1)(\bar \mu_3 a_{22}-a_{32}\bar \mu_2)(\bar \mu_1 a_{33}-a_{13}\bar \mu_3) < |(\bar \mu_1 a_{22}-a_{12}\bar \mu_2)(\bar \mu_2 a_{33}-a_{23}\bar \mu_3)(\bar \mu_3 a_{11}-a_{31}\bar \mu_1)|
\]
\begin{thm}\label{t:rps}
  Let $0<\beta<1<\alpha$ and define the process $(X_1(t), X_2(t), X_3(t))$ by
\begin{equation}\label{e:SDE2}
\begin{split}
dX_1(t) &= X_1(t)\left(\mu_1-X_1(t)-\alpha X_2(t)-\beta X_3(t)\right)\,dt +  \sigma_1 X_1(t) \,dE_1(t)\\
dX_2(t) &= X_2(t)\left(\mu_2-\beta X_1(t)-X_2(t)-\alpha X_3(t)\right)\,dt + \sigma_2 X_2(t)\,dE_2(t)\\
dX_3(t) &= X_3(t)\left(\mu_3-\alpha X_1(t) -\beta X_2(t) - X_3(t)\right)\,dt + \sigma_3 X_3(t) \,dE_3(t).\\
\end{split}
\end{equation}
Suppose furthermore that $\bar \mu_1 = \bar \mu_2 = \bar \mu_3>0$. If $2>\alpha+\beta$ the system persists and converges exponentially fast to a unique invariant probability measure $\mu^*$ on $\R_+^{3,\circ}$. If $2<\alpha+\beta$ then the system goes extinct in the sense that $\BX(t)\to \partial \R_+^3$.
\end{thm}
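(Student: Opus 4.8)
The plan is to recognize Theorem~\ref{t:rps} as a direct specialization of the persistence/extinction dichotomy in Theorems~\ref{t:pers} and \ref{t:ext} (equivalently Corollaries~\ref{c:1} and \ref{c:2}) to the fully symmetric cyclic model, where the apparent complexity of the product criterion collapses because of the symmetry. First I would read off the interaction coefficients from \eqref{e:SDE2}: the diagonal entries are $a_{11}=a_{22}=a_{33}=1$, the ``$\alpha$-cycle'' entries are $a_{12}=a_{23}=a_{31}=\alpha$, and the ``$\beta$-cycle'' entries are $a_{21}=a_{32}=a_{13}=\beta$. With the standing hypothesis $\bar\mu_1=\bar\mu_2=\bar\mu_3=:\bar\mu>0$, I would check that the rock-paper-scissors inequalities preceding Theorem~\ref{t:pers} all hold: after cancelling the common $\bar\mu$ and using $a_{jj}=1$, each one reduces to either $\alpha>1$ or $1/\beta>1$, both of which follow from $0<\beta<1<\alpha$. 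This confirms we are genuinely in the cyclic-competition regime where those theorems apply, and that each species survives on its own since $\bar\mu>0$.

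The core computation then simplifies dramatically. Using formula \eqref{e:lambda} together with $\bar\mu_i\equiv\bar\mu$ and $a_{jj}=1$, every invasion rate becomes $\lambda_i(\mu_j^*)=\bar\mu(1-a_{ij})$. Hence the three invasion rates pushing the process away from the boundary coincide,
\begin{equation*}
\lambda_2(\mu_1^*)=\lambda_3(\mu_2^*)=\lambda_1(\mu_3^*)=\bar\mu(1-\beta)>0,
\end{equation*}
as do the three attracting it toward the boundary,
\begin{equation*}
\lambda_3(\mu_1^*)=\lambda_1(\mu_2^*)=\lambda_2(\mu_3^*)=\bar\mu(1-\alpha)<0.
\end{equation*}
Consequently the two products compared in Theorem~\ref{t:pers} are $\bar\mu^3(1-\beta)^3$ and $\bar\mu^3(\alpha-1)^3$.

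It then remains to compare these products. Since $\bar\mu>0$ and both $1-\beta$ and $\alpha-1$ are positive, dividing out $\bar\mu^3$ and invoking strict monotonicity of $t\mapsto t^3$ on $[0,\infty)$ shows that the persistence inequality $(1-\beta)^3>(\alpha-1)^3$ is equivalent to $1-\beta>\alpha-1$, i.e.\ to $\alpha+\beta<2$; Theorem~\ref{t:pers} then delivers persistence and exponential convergence to a unique $\mu^*$ on $\R_+^{3,\circ}$. Symmetrically, $(1-\beta)^3<(\alpha-1)^3$ is equivalent to $\alpha+\beta>2$, and Theorem~\ref{t:ext} gives $\BX(t)\to\partial\R_+^3$. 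The only point requiring care---the ``obstacle,'' such as it is---is the reduction of the geometric comparison of products to the linear threshold $\alpha+\beta\lessgtr 2$: this hinges precisely on the equal-growth-rate hypothesis $\bar\mu_1=\bar\mu_2=\bar\mu_3$, which is exactly what forces the three positive (resp.\ negative) invasion rates to be equal, so that the product inequality factors as a perfect cube and taking cube roots is legitimate. Without this symmetry one would be left comparing two genuine products with no clean threshold.
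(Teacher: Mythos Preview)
Your proof is correct and follows essentially the same approach as the paper: you specialize the product criterion of Theorems~\ref{t:pers}/\ref{t:ext} (equivalently Corollaries~\ref{c:1}/\ref{c:2}) to the symmetric cyclic coefficients, obtain $\bar\mu^3(1-\beta)^3$ versus $\bar\mu^3(\alpha-1)^3$, and read off the threshold $\alpha+\beta\lessgtr 2$. If anything, you are slightly more careful than the paper in explicitly verifying the rock--paper--scissors sign hypotheses and in justifying why the cube comparison reduces to a linear one.
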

\begin{proof}
If $2>\alpha+\beta$, we have
\begin{equation}\label{e:SDE2}
\begin{split}
(\bar \mu_2 a_{11}-a_{21}\bar \mu_1)(\bar \mu_3 a_{22}-a_{32}\bar \mu_2)(\bar \mu_1 a_{33}-a_{13}\bar \mu_3) &= \bar \mu_1^3(1-\beta)^3\\
&>\bar \mu_1^3(\alpha-1)^3\\
&= |(\bar \mu_1 a_{22}-a_{12}\bar \mu_2)(\bar \mu_2 a_{33}-a_{23}\bar \mu_3)(\bar \mu_3 a_{11}-a_{31}\bar \mu_1)|
\end{split}
\end{equation}
and by Corollary \ref{c:1} there is persistence and exponential convergence to a unique invariant probability measure. If one has instead that $2<\alpha+\beta$ then  Corollary \ref{c:2} implies that with probability one $\BX(t)\to \partial \R_+^3$.
\end{proof}

\subsection{PDMP}
Suppose the environment is modelled by $r(t)$ and switches between the two states $1$ and $2$ with the rate matrix
\[
  \left( {\begin{array}{cc}
   -\tau(1-p) & \tau(1-p) \\
   \tau p & -\tau p \\
  \end{array} } \right)
\]
for some $p\in (0,1)$ and $\tau>0$. This means that if we start in environment 1, we wait for an exponential time with parameter $\tau(1-p)$, switch to environment 2, wait for an independent exponential time with parameter $\tau p$, switch to environment 1 and repeat this process indefinitely.
Let $(\alpha_1, \beta_1)$ and $(\alpha_2,\beta_2)$ be two parameters such that
\[
\alpha_1+\beta_1>2,
\]
and
\[
\alpha_2+\beta_2<2.
\]
In environment $1$ there is extinction while in environment $2$ there is persistence. We are interested in the dynamics of the switching process
  \begin{equation}\label{e:PDMP}
\begin{split}
\frac{dX_1}{dt}(t)&=X_1(t)(1-X_1(t)-\alpha_{r(t)} X_2(t) - \beta_{r(t)} X_3(t))\\
 \frac{dX_2}{dt}(t)&=X_2(t)(1-\beta_{r(t)} X_1(t) - X_2(t) - \alpha_{r(t)} X_3(t))\\
 \frac{dX_3}{dt}(t)&= X_3(t)( 1-\alpha_{r(t)} X_1(t)-\beta_{r(t)} X_2(t)-X_3(t))
\end{split}
\end{equation}
By \cite{B18} we have the following theorem.
\begin{thm}\label{t:rps_pdmp}
Let $(\alpha_1, \beta_1)$ and $(\alpha_2,\beta_2)$ be two parameters such that $\alpha_1+\beta_1>2$ and $\alpha_2+\beta_2<2$.
\begin{enumerate}
\item If
\[
\Lambda_b:=p(2-\alpha_1+\beta_1)+(1-p)(2-\alpha_2+\beta_2)>0
\]
then for $\tau$ small enough there is a unique persistent measure.
  \item If $\Lambda_b<0$ we have that $X(t)\to\partial \R_+^3$ with probability 1.
\end{enumerate}
\end{thm}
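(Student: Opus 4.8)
The plan is to read the dichotomy off the general persistence theory already set up for PDMP, handling the two directions with different tools: the persistence half follows from Theorem \ref{t:p_pdmp} once the invasion rates are computed, while the extinction half needs the finer boundary analysis of \cite{B18}. First I would describe the boundary flow. In either environment $k\in\{1,2\}$ the vertices $e_1,e_2,e_3$ are common fixed points, and on each face $\{x_i=0\}$ the restricted dynamics is a two-species competition in which the rock-paper-scissors ordering drives one species out; hence on $\partial\R_+^3$ the chain-recurrent set is $\{0,e_1,e_2,e_3\}$, the origin being a uniform repeller ($\lambda_i(\delta_0\times\nu)=1$), and the relevant ergodic measures in $\M$ are the vertex measures $\delta_{e_i}\times\nu$, with $\nu=(p,1-p)$ the stationary law of $r(t)$. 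The threshold is therefore governed by the heteroclinic cycle $\Omega=W^S(e_1)\cup W^S(e_2)\cup W^S(e_3)$ and by $\Conv\M$, the simplex spanned by the three vertex measures.

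For persistence I would evaluate the invasion rates \eqref{e:lyapunov_pdmp} at the vertices. Writing $\bar\alpha:=p\alpha_1+(1-p)\alpha_2$ and $\bar\beta:=p\beta_1+(1-p)\beta_2$, one gets for instance $\lambda_2(\delta_{e_1}\times\nu)=\sum_k\nu_k(1-\beta_k)=1-\bar\beta$ and $\lambda_3(\delta_{e_1}\times\nu)=1-\bar\alpha$, and cyclically at $e_2,e_3$. For a general $\mu=\sum_i c_i\,\delta_{e_i}\times\nu$ in $\Conv\M$ the invading and retreating contributions telescope, giving
\[
\lambda_1(\mu)+\lambda_2(\mu)+\lambda_3(\mu)=(1-\bar\beta)-(\bar\alpha-1)=\Lambda_b,
\]
a constant independent of $(c_1,c_2,c_3)$. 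Since the largest of three numbers is at least their mean, $\max_i\lambda_i(\mu)\ge\Lambda_b/3$ for every $\mu\in\Conv\M$; the origin only adds a positive amount to each $\lambda_i$, so it cannot lower this minimax. Hence if $\Lambda_b>0$ the hypothesis of Theorem \ref{t:p_pdmp} holds, and that theorem yields a unique invariant probability measure on $\R_+^{3,\circ}\times\CN$ with exponential convergence, provided the accompanying irreducibility/accessibility conditions hold; verifying the latter for the two distinct interior vector fields is where the hypothesis ``$\tau$ small enough'' enters, the invasion inequalities themselves being $\tau$-independent.

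The extinction statement cannot come from Theorem \ref{t:p_pdmp}: when $\Lambda_b<0$ the symmetric measure $\mu_c=\tfrac13(\delta_{e_1}+\delta_{e_2}+\delta_{e_3})\times\nu$ has $\lambda_i(\mu_c)=\Lambda_b/3<0$ for all $i$, so the invasion criterion fails. Here I would instead show that $\Omega$ is transversally attracting, using $\Phi=\log(X_1X_2X_3)$, whose zero set is $\partial\R_+^3$. Since $\mu_i\equiv1$ and every column sum of the competition matrix equals $1+\alpha_{r}+\beta_{r}$,
\[
\frac{d}{dt}\log\big(X_1X_2X_3\big)=3-\big(1+\alpha_{r(t)}+\beta_{r(t)}\big)\big(X_1+X_2+X_3\big),
\]
and on $\Omega$ the total density equals $1$ at each vertex, so the $\nu$-average of the right-hand side along the cycle is exactly $\Lambda_b$. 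When $\Lambda_b<0$ this makes $\Phi$ a supermartingale-type quantity that drifts to $-\infty$ near $\Omega$, and the extinction criterion of \cite{B18} upgrades this into $\BX(t)\to\partial\R_+^3$ almost surely.

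The hard part is the extinction half together with the ``$\tau$ small'' regime, and both difficulties concentrate at the vertices of $\Omega$. The sojourn times near the vertices diverge as the trajectory approaches the boundary, so one cannot linearise naively: I would have to show that during these long passages the environmental chain equilibrates to $\nu$ (so the occupation measure of $r(t)$ near $\Omega$ is $\nu$ and, to leading order, uncorrelated with the position on the cycle), and that the bounded-time transits along the connecting orbits do not change the sign of the averaged drift of $\Phi$. Slow switching is what legitimises this averaging and rules out resonances between the switching and the cyclic passage times; controlling these long excursions, rather than checking the invasion inequalities, is the genuine obstacle, and it is precisely what the machinery of \cite{B18} is built to handle.
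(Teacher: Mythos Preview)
The paper does not supply a proof of its own for this theorem; it merely attributes the result to \cite{B18}. So there is no in-paper argument to compare against, and your sketch already goes further than the paper does.

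Your persistence half is a correct and clean application of Theorem~\ref{t:p_pdmp}. Once one checks that the only ergodic boundary measures are $\delta_0\times\nu$ and the three vertex measures $\delta_{e_i}\times\nu$ (true because on each face both environments share the same winner under the standing hypothesis $0<\beta_k<1<\alpha_k$), the identity $\sum_i\lambda_i(\mu)=(1-\bar\alpha)+(1-\bar\beta)=2-\bar\alpha-\bar\beta$ for every convex combination of the vertex measures forces $\max_i\lambda_i(\mu)\ge(2-\bar\alpha-\bar\beta)/3$, and admixing $\delta_0\times\nu$ only helps. One remark: your sum is $2-\bar\alpha-\bar\beta$, not the printed $\Lambda_b=2-\bar\alpha+\bar\beta$; comparison with the deterministic threshold $\alpha+\beta<2$ and with Theorem~\ref{t:rps} shows the statement carries a sign typo, and your value is the intended one. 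Where your reasoning goes astray is the role of ``$\tau$ small enough''. Neither the invasion rates (the stationary law of $r(t)$ is $(p,1-p)$ for every $\tau$) nor the bracket and accessibility conditions of Appendix~\ref{s:a_PDMP} (these are properties of the pair of vector fields alone) involve $\tau$; so attributing the restriction to irreducibility, or to ``averaging near the vertices'', is not correct. If your route through Theorem~\ref{t:p_pdmp} closes, it closes for every $\tau>0$, which is in fact stronger than what is stated. The smallness hypothesis in the theorem reflects the particular slow-switching averaging argument used in \cite{B18} for this example, not a genuine limitation of the invasion-rate criterion.

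For the extinction half you end up in the same position as the paper: the Lyapunov functional $\Phi=\log(X_1X_2X_3)$ and its negative averaged drift at the vertices are the right orientation, but upgrading this to almost-sure convergence to $\partial\R_+^3$ requires the heteroclinic-cycle extinction machinery of \cite{B18}, which neither you nor the paper reproduces. Note also that part~(2) of the theorem carries no smallness assumption on $\tau$, so your final paragraph, which ties the slow-switching hypothesis to the extinction analysis, misplaces where that hypothesis is actually invoked.
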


\subsection{SSDE} Let us now look at the SSDE setting. As before, let $(\alpha_1, \beta_1)$ and $(\alpha_2,\beta_2)$ be two parameters such that
\[
\alpha_1+\beta_1>2,
\]
and
\[
\alpha_2+\beta_2<2.
\]

Suppose $r(t)$ is a Markov chain that switches between states $1$ and $2$ and has a stationary distribution $(\nu_1,\nu_2)$. The dynamics will be
  \begin{equation}\label{e:SSDE_RPS}
\begin{split}
dX_1(t)&=X_1(t)(1-X_1(t)-\alpha_{r(t)} X_2(t) - \beta_{r(t)} X_3(t))\,dt +\sigma_1(r(t)) X_1(t)dB_1(t)\\
dX_2(t)&=X_2(t)(1-\beta_{r(t)} X_1(t) - X_2(t) - \alpha_{r(t)} X_3(t))\,dt+\sigma_2(r(t))X_2(t) dB_2(t)\\
dX_3(t)&= X_3(t)( 1-\alpha_{r(t)} X_1(t)-\beta_{r(t)} X_2(t)-X_3(t))\,dt +\sigma_3(r(t)) X_3(t)dB_3(t).
\end{split}
\end{equation}

\subsection{Numerical example}

In this section we look at the numerical solution of the
invariant probability density function for the rock-paper-scissors
model in the SDE and SSDE settings. We use the dynamics given by \eqref{e:LV} with
parameters $\mu  = 1.5, \alpha = 1.1, \beta = 0.4$ for the SDE model. The
magnitude of the white noise is taken to be $\sigma_{1} = \sigma_{2} =
0.5$.

In the SSDE model \eqref{e:SSDE_RPS} we take $\mu(2)  = 1.5, \alpha(2) = 1.1, \beta(2) = 0.4$ and $\mu(1) = 1.5,
\alpha(1) = 1.3, \beta(1) = 0.8$. The rate of random switching is $q_{12} = q_{21} = 1$. By Theorem \ref{t:rps} the three species can only coexist in environment
$2$. We see through simulations that in the SSDE setting all three species persist and
converge to a unique invariant probability measure.

In Figure \ref{fig9}, we analyze the SDE model on the domain $(0, 3)^{3}$ with $600
\times 600\times 600$ grids. The heat maps of six slices of the invariant probability
density function, at $z = 0.25, 0.5, 0.75, 1.0, 1.5,$ and $2$ respectively, are presented in Figure \ref{fig9}.

\textit{\textbf{Biological Interpretation:} As expected from the deterministic result, we can see in Figure \ref{fig9} that at the stationary distribution the species dynamics looks like a noisy heteroclinic cycle. The stationary distribution is supported on a compact set and has high densities close to the boundary. As we increase the density of the third species $z$, the support of the stationary distribution decreases - as $z$ goes from $0.25$ to $2$ the support of the stationary distribution decreases tenfold.
}

In Figure \ref{fig10}, we analyze the SSDE model on the domain $(0, 3)^{3} \times \{ 1, 2\}$ with $500 \times 500
\times 500 \times 2$ grids. The solutions of three slices of the
invariant probability function at both states, at $z = 0.12, 0.48$,
and $0.72$ respectively, are presented in Figure
\ref{fig10}. We use a 3D plot to show how the solution heavily concentrates close to the extinction set.

\textit{\textbf{Biological Interpretation:} If the dynamics switches between persistent and extinction prone rock-paper-scissors systems one can still get overall persistence. Even though the switching is random we expect the same phenomenon to arise when the switching is periodic. This would describe seasonality and annual cycles. In various regions which exhibit strong seasonal variation, organisms have developed survival mechanisms like hibernation and migration in order to survive seasons with resource shortages. Our results show that as long as the bad environment is not too harsh, the species can still persist.
}

We remark that it is very difficult to use the traditional method to solve a linear
system with this number of variables. However, our block
data-driven solver can significantly reduce the computational complexity. We first run a Monte Carlo
simulation with $8 \times 10^{9}$ samples. Then the big numerical
domain is divided into
$8000$ subdomains with $30 \times 30 \times 30$ (resp. $25\times
25\times 25 \times 2$) grids each for the SDE
model (resp. the SSDE model). We
then solve $8000$ optimization problems in parallel and then use the
``half-block shift'' technique proposed in \cite{dobson2019efficient} to reduce the
interface error between blocks. The entire computation takes about one hour on a
laptop.

Similar as the other two models, we carry out a sensitivity analysis
for the Rock-Paper-Scissor model. The Milstein scheme is used in our
simulations of both SDE and SSDE. As before, the estimator of the mean finite time
error equals
$$
  I = \frac{1}{N}\sum_{i = 0}^{N - 1}\rho(\hat{X}^{dt}_{i,T},
  \hat{X}^{2dt}_{i,T}) \,.
$$
Parameters in our simulation are $dt = 0.001$ and $N = 10^{6}$. For the SDE
model, we only run simulations using the first parameter set. The time
span is $T = 10$ for SDE, and $T = 50$ for SSDE. The estimator $I$ is
$0.00701$ for SDE, and $0.00571$ for SSDE. Then we run coupling method to show the speed
of convergence. The exponential tails of coupling times are showcased in
Figure \ref{fig11}. We can see that the coupling time distribution has
an exponential tail in each case. This gives us an estimate of
$\alpha \approx 0.5543$ for SDE and $\alpha \approx 0.4916$ for SSDE. Therefore, we
have $\mathrm{d}_{w}( \mu^{*}, \hat{\mu}) \approx 0.0157$ for the SDE, and
$\mathrm{d}_{w}( \mu^{*}, \hat{\mu}) \approx 0.0112$ for the
SSDE.

Our results show that the speed of convergence of the
rock-paper-scissors model is very slow, especially in the SSDE
setting. This is because a trajectory stays most of the time near
the boundary of the domain, where the magnitude of noise is
small. However, because the vector field does not change as rapidly as
in the other two models, the finite time error is well-controlled for
$T$ up to $50$. As a consequence, the numerical
results still imply that the obtained invariant probability
measures are accurate in both the SDE and SSDE settings.

\begin{figure}[!htbp]
\centerline{\includegraphics[width = \linewidth]{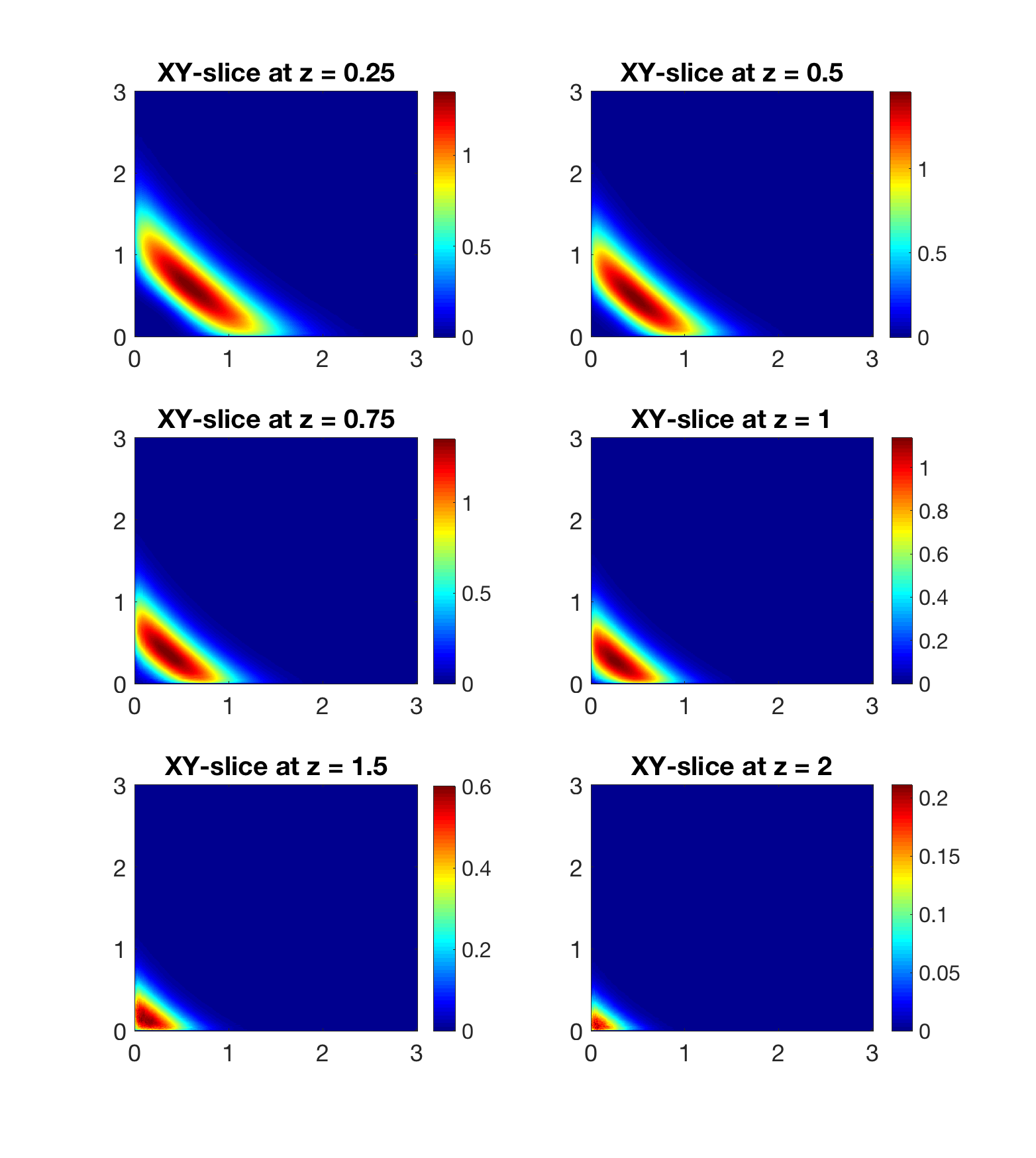}}
\caption{Invariant probability density function of the SDE rock-paper-scissors model. Parameters are $\mu  = 1.5, \alpha
= 1.1, \beta = 0.4, \sigma_{1} = \sigma_{2} =
0.5$. Top left to bottom right: Invariant
probability density function restricted on planes $z = 0.25, 0.5,
0.75, 1.0, 1.5,$ and $2$, respectively.}
\label{fig9}
\end{figure}

\begin{figure}[!htbp]
\centerline{\includegraphics[width = \linewidth]{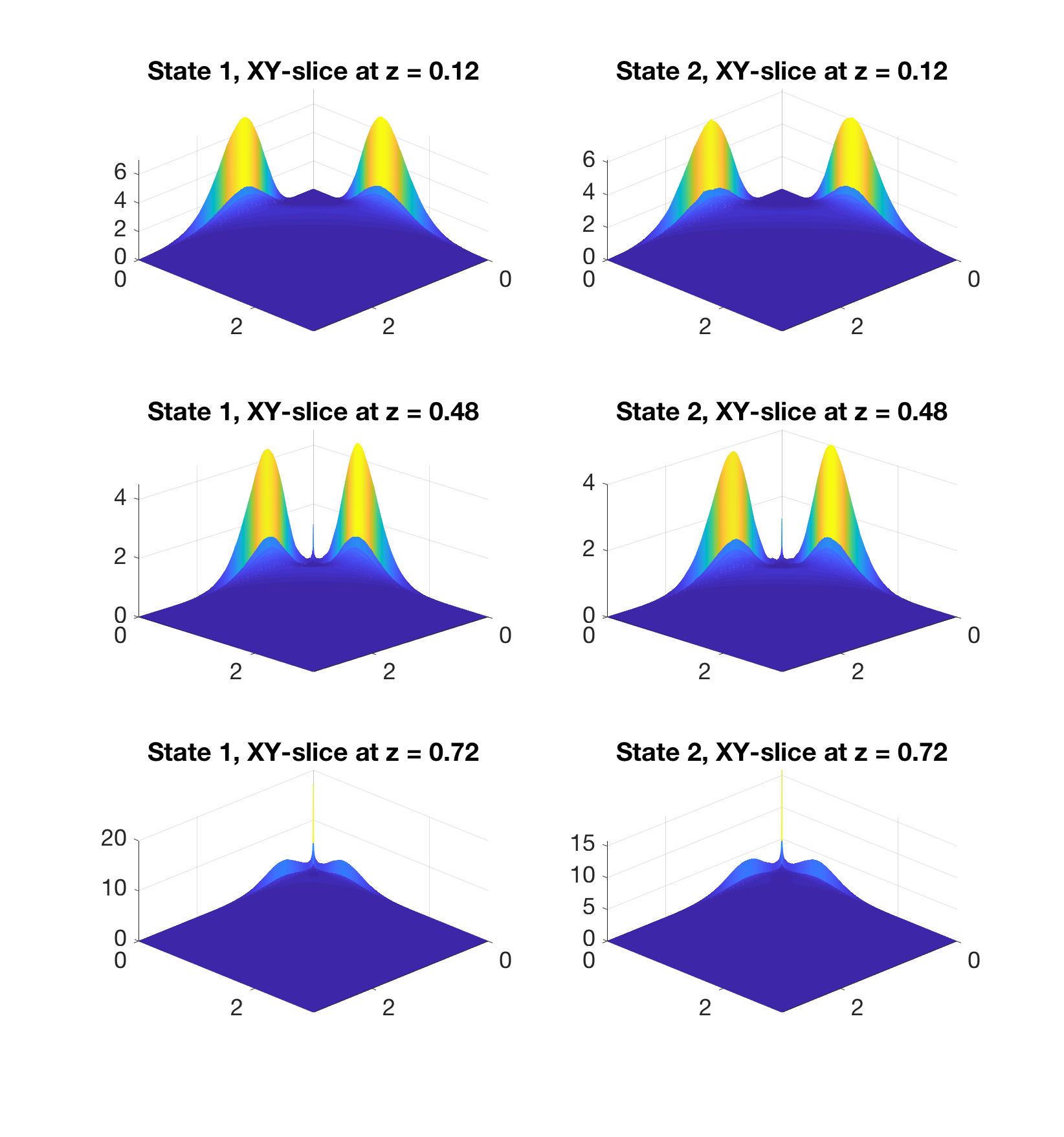}}
\caption{Invariant probability density function for the SSDE rock-paper-scissors model. Parameters are $\mu_{1} = \mu_{2} = \mu  = 1.5, \alpha(1) = 1.3, \alpha(2)
= 1.1, \beta(1) = 0.8, \beta(2) = 0.4, \sigma_{1} = \sigma_{2} =
0.5$, and $q_{12} = q_{21} = 1$. {\bf Top Left}: Trajectories of $(X_{1}(t),
X_{2}(t), X_{3}(t))$ up to $T  = 100$. {\bf Top}: Invariant
probability density function in both environmental states, restricted to the plane $\{ z =
0.12\}$. {\bf Middle}: Invariant
probability density function in both environmental states, restricted to the plane $\{ z =
0.48\}$.  {\bf Bottom}: Invariant
probability density function in both environmental states, restricted to the plane $\{ z =
0.72\}$.}
\label{fig10}
\end{figure}
\begin{figure}[!htbp]
 	\begin{center}
 		\includegraphics[width = \linewidth]{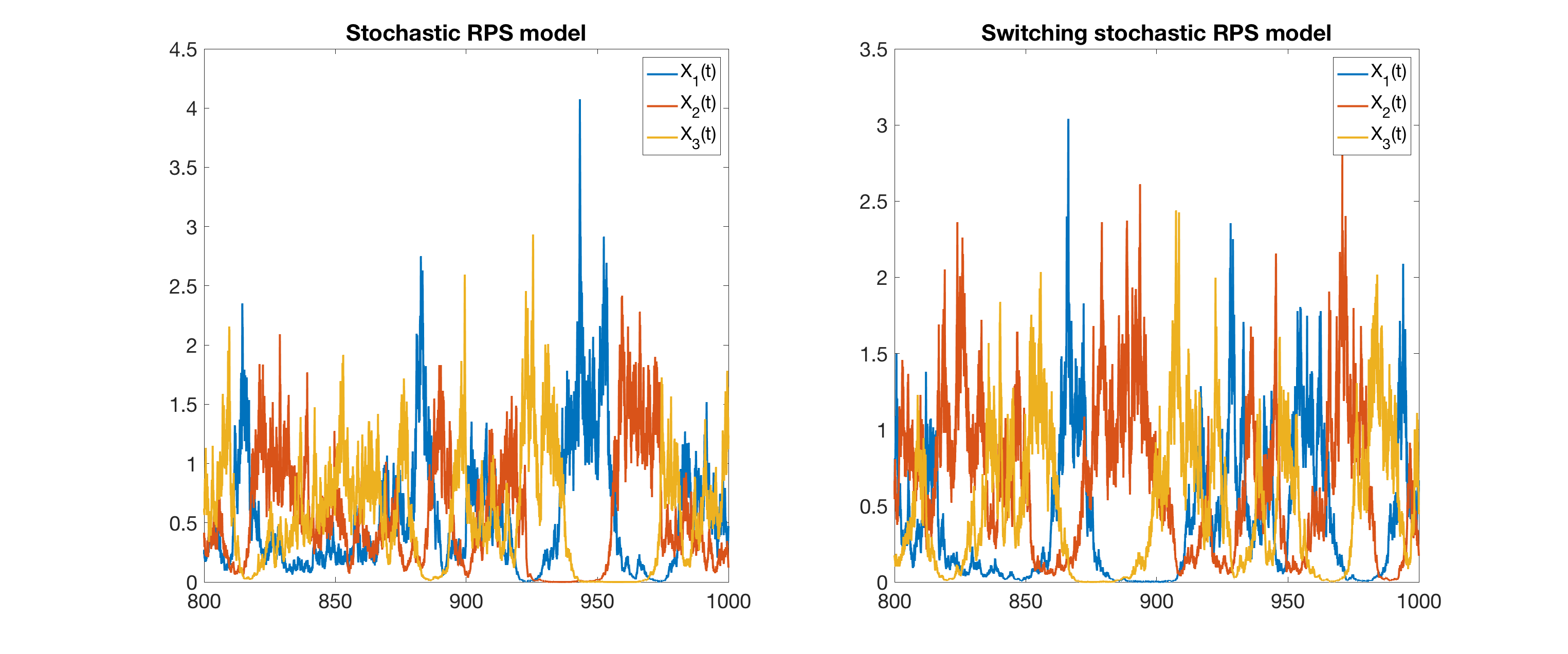}
 		\caption{Rock-paper-scissors model. Left: A sample
                  trajectory of the SDE model. Right: A sample
                  trajectory of the SSDE model. Note how in both settings the populations spend long times close to extinction. }
\label{fig12}\end{center}
\end{figure}
\begin{figure}[!htbp]
 	\begin{center}
 		\includegraphics[width = \linewidth]{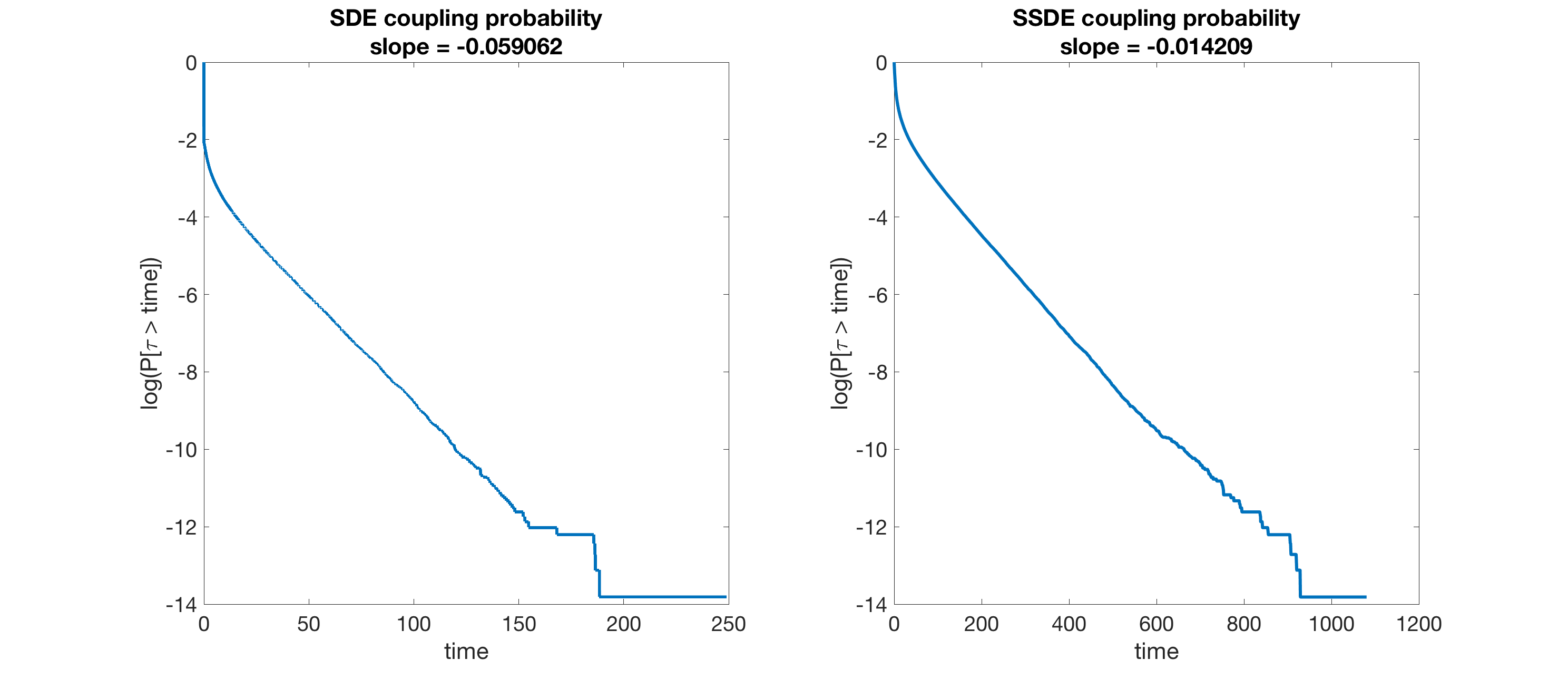}
 		\caption{Rock-paper-scissors model. One can see the exponential tails of the coupling time for the
                  SDE and SSDE models. }
\label{fig11}\end{center}
\end{figure}

\section{Discussion}\label{s:disc}

We have studied three different classes of models for stochastic population dynamics. The first one involves the usual framework of stochastic differential equations -- this setting has been used extensively in the literature \citep{SBA11,EHS15,HNY16, HN16, B18}. The second class of models we look at has at its core piecewise deterministic Markov processes \citep{D84}. Here the environment can switch randomly between a finite number of environmental states, and each environmental state has its own system of differential equations for the dynamics. These environmental switches can model seasonal or daily changes and can significantly change the long term dynamics \citep{BL16, HN20,HS19}. The last class of models we look at is that of stochastic differential equations with switching (SSDE), which combines white noise fluctuations in each fixed environment and random switches between a finite number of environments. The switching dynamics for PDMP and SSDE are governed by a process $r(t)$ which jumps from one state to another according to
\begin{equation}\label{e:trann}\begin{array}{ll}
&\disp \PP\{r(t+\Delta)=j~|~r(t)=i, \BX(s),r(s), s\leq t\}=q_{ij}(\BX(t))\Delta+o(\Delta) \text{ if } i\ne j \
\hbox{ and }\\
&\disp \PP\{r(t+\Delta)=i~|~r(t)=i, \BX(s),r(s), s\leq t\}=1+q_{ii}(\BX(t))\Delta+o(\Delta)
\end{array}
\end{equation}
where $q_{ij}(\bx)$ is the transition rate matrix. Our general framework allows eco-environmental feedback loops \citep{SL12,JLS94,CWH09,MLPS16}. The simplest case is when the switching rates are density independent, $q_{ij}(\bx)=q_{ij}$ and $r(t)$ becomes an independent Markov chain. In this special case, one has independent exponentially distributed waiting times in each environmental state.

\textbf{Persistence results.} We give powerful persistence results in all three settings. Our results show that a multidimensional version of the invasibility criterion \citep{C82, CE89, C94, C18, ESAH19} guarantees coexistence: If each subcommunity of species that persists, and is characterized by an invariant probability measure $\mu$, can be invaded by at least one
species that is not part of the subcommunity, then the full community of $n$ species persists. We compute the invasion rate of species $i$ into the subcommunity characterized by the invariant probability measure $\mu$ by averaging the per-capita growth rate of $i$ according to $\mu$. For SDE the averaging is done over the densities of the species. For PDMP and SSDE the averaging is over species densities as well as over all the environmental states.
The intuition behind this result is the following: if the process gets close to the extinction boundary, the species which are close to extinction will grow/decay exponentially fast according to their invasion rates. Since at least one invasion rate is positive, the process gets pushed away from the boundary and extinction is not possible.

If the ecosystem has only one species we show, in both the PDMP and SSDE settings, that coexistence is possible if the weighted average of the per capita growth rates at $0$ is positive. The weights are given by the stationary distribution $(\nu_1,\dots,\nu_{n_0})$ of the Markov chain $r(t)$ which governs the dynamics of the environment. This implies that coexistence is possible even if in some of the environmental states there is extinction (sinks) as long as, loosely speaking, more time is spent in the environments which exhibit coexistence (sources).

A key novelty is that we are able to characterize explicitly the `random equilibrium distribution' of a persistent ecological system. We accomplish this by developing rigorous numerical approximation methods for finding the invariant probability measure, or stationary distribution, of a stochastic process. This new method is significantly more accurate than previous approaches. The numerical methods combined with the analytic results give us a better understanding of what stochastic coexistence really looks like.

\textbf{Quasistationary distributions.} Our work highlights some important caveats to coexistence theory. Sometimes the coexistence criteria are satisfied and the species converge to a stationary distribution on $\R_+^{n,\circ}$. However, this stationary distribution might concentrate close to the extinction set $\partial \R_+^n$. This effect shows that, even though the theory implies persistence, the species might spend a long time at very low densities and go extinct due to demographic stochasticity. If this is the case for a specific model, one should not neglect demographic stochasticity.  If a population goes extinct almost surely in finite time, one can condition it on not going extinct at time $t$ and then seeing what this conditional distribution looks like as $t\to \infty$ \citep{SE04, MV12, KS12,CV16, HK19, HQSY20}. This limiting distribution is called the \textit{quasistationary distribution}. The invariant probability measures for both the Beddington-DeAngelis and the rock-paper-scissors models suggest that more realistic approaches should include the effects of demographic stochasticity and look into numerical and analytical approaches to finding quasistationary distributions.

We explore 3 different models in the SDE/PDMP/SSDE settings.

\textbf{1) Lotka--Volterra Competition.}

In his very influential paper \cite{H61} studied the competitive exclusion principle, which says in its simplest form that two species competing for one resource cannot coexist, and gave the following possible solution: Changing environmental conditions prevent species from reaching equilibrium conditions, and this can promote coexistence. Hutchinson's explanation was that if only few species would coexist at an equilibrium but we see many in nature, this would imply that there is no equilibrium. Fluctuating environmental conditions would favor different species at different times, so that no species can dominate. A key component of Hutchinson's explanation was that the environmental fluctuations need to be on the correct time scale: slow or fast fluctuations would not work. We show through an example in the SSDE setting that fast fluctuations between two environments in which species $1$ goes extinct, can rescue this species from extinction and facilitate coexistence.

Our numerical analysis shows in the SDE setting that the white noise environmental fluctuations turn the stable equilibrium of the two species into stationary distributions that concentrate around the deterministic equilibria. How spread the invariant measure is seems to be related to the intraspecific competition rates. We find that the higher the intraspecific competition rate is for species $i$, the smaller the spread of the stationary distribution is in direction $i$ (Figure \ref{fig1LV}). High intraspecific competition makes it harder for the species to leave the region that is close to the deterministic stable equilibria.

In the PDMP setting we find the following behavior. If the switching is slow the species spend a long time in each environment. Because of this they have time to get close the equilibria. This is in line with the explanation for the paradox of the plankton by \cite{H61}. As switching becomes faster, the dynamics spends less time in a fixed environment, and the species do not have time to get close to the stable equilibrium from that environment. Our numerical simulations (see Figure \ref{fig2LV} Middle) reflect this fact. As we increase the switching rate and we go from intermediate switching to fast switching, the stationary distribution once again becomes more singular. This is because, as explained above, for very fast switching the dynamics approaches the one given by a mixed deterministic ODE and therefore the mass of the stationary distribution concentrates around the equilibrium of this mixed ODE.

\textbf{2) Beddington--DeAngelis predator-prey dynamics.}
In the deterministic setting, if there is no intraspecific competition among predators, there are situations when the dynamics has a limit cycle. If one adds white noise fluctuations  one gets interesting behaviour. The level sets of the invariant probability density look qualitatively like concentric closed trajectories. The density of the invariant probability measure is highest close to the $y$ axis. This shows that in this system, the prey can become low and stay low for a long time, while the predator has a significant density and takes a long time to die out in the absence of a food source. When the predator population finally decreases, the prey increases again, which causes the predator to increase after some time . This cycle would then get repeated. Our results highlight that even though the theory implies coexistence, a more realistic system might go extinct. If one spends a long time close to the boundary this demographic stochasticity might induce extinction.

In the PDMP model, switching between the two environments creates a coexistence situation where the invariant measure is qualitatively similar to the occupation measure of a limit cycle. However, in this setting the support of the stationary distribution seems to be bounded away from the extinction set -- there is no concentration near the extinction set. As the switching speed increases the invariant probability measures from the two environments become closer and closer. The dynamics with fast switching will have a unique limit cycle that is significantly larger than the limit cycles from the two deterministic environments. This is another example of how environmental fluctuations significantly change the dynamics by making the species densities oscillate at greater amplitudes (more than double the amplitudes from each fixed environment).

\textbf{3) Rock-paper-scissors dynamics.}
In the SDE setting the stationary distribution looks like a noisy heteroclinic cycle. The stationary distribution is supported on a compact set and has high densities close to the boundary. As we increase the density of the third species $z$, the support of the stationary distribution decreases -- as $z$ goes from $0.25$ to $2$ the support of the stationary distribution decreases tenfold.

If the dynamics switches between persistent and extinction prone rock-paper-scissors systems one can still get overall persistence. Even though the switching is random, we expect the same phenomenon to arise when the switching is periodic. This would describe seasonality and annual cycles. In various regions which exhibit strong seasonal variation, organisms have developed survival mechanisms like hibernation and migration in order to survive seasons with resource shortages. Our results show that as long as the bad environment is not too harsh, the species can still persist.

\textbf{Acknowledgements:} The authors acknowledge support from the NSF through the grants DMS-1853463 for Alexandru Hening and DMS-1813246 for Yao Li.

\bibliographystyle{agsm}
\bibliography{LV}

\appendix
\section{Technical assumptions}
\subsection{SDE}\label{s:a_SDE} We need to make some technical assumptions for Theorem \ref{t:p_sde} to hold.
\begin{asp}\label{a.nonde}  The coefficients of \eqref{e:system} satisfy the following conditions:
\begin{itemize}
\item[(1)] $\diag(g_1(\bx),\dots,g_n(\bx))\Gamma^\top\Gamma\diag(g_1(\bx),\dots,g_n(\bx))=(g_i(\bx)g_j(\bx)\sigma_{ij})_{n\times n}$
is a positive definite matrix for any $\bx\in\R^{n}_+$.
\item[(2)] $f_i(\cdot), g_i(\cdot):\R^n_+\to\R$ are locally Lipschitz functions for any $i=1,\dots,n.$
\item[(3)] There exist $\bc=(c_1,\dots,c_n)\in\R^{n,\circ}_+$ and $\gamma_b>0$ such that
\begin{equation}\label{a.tight2}
\limsup\limits_{\|x\|\to\infty}\left[\dfrac{\sum_i c_ix_if_i(\bx)}{1+\bc^\top\bx}-\dfrac12\dfrac{\sum_{i,j} \sigma_{ij}c_ic_jx_ix_jg_i(\bx)g_j(\bx)}{(1+\bc^\top\bx)^2}+\gamma_b\left(1+\sum_{i} (|f_i(\bx)|+g_i^2(\bx))\right)\right]<0.
\end{equation}
\end{itemize}
\end{asp}

Parts (2) and (3) of Assumption \ref{a.nonde} guarantee the existence and uniqueness of strong solutions to \eqref{e:system}.
Part (1) ensures that the solution to \eqref{e:system} is a non degenerate diffusion - this means that the noise is truly $n$ dimensional.
Condition (3) also implies the tightness of the family of transition probabilities of the solution to \eqref{e:system}. Most common ecological models satisfy condition \eqref{a.tight2}. This condition is a requirement that there is a strong drift towards zero when the population size is large. This usually holds if the intraspecific competition is strong enough.
\begin{itemize}
\item In the logistic model
$$dX(t)=X(t)[a-bX(t)]dt+\sigma X(t)dB(t), b>0$$ the condition \eqref{a.tight2} is satisfied for any $c>0$.

\item In the competitive Lotka-Volterra model
$$dX_i(t)=X_i(t)\left[a_i-\sum_{j}b_{ji}X_j(t)\right]dt+X_i(t)g_i(\BX(t))dE_i(t),$$
with $b_{ji}> 0, j,i=1,\dots,n$,  if $$\sum_{i=1}^n g_i^2(\bx)<K(1+\|\bx\|+\bigwedge_{i=1}^n g_i^2(\bx))$$ then
 \eqref{a.tight2} is satisfied with $\bc=(1,\dots,1)$.
\item In the predator-prey Lotka-Volterra model
$$
\begin{cases}
dX(t)=X(t)[a_1-b_1X(t)-d_1Y(t)]dt+X(t)g_1(X(t), Y(t))dE_1(t)\\
dY(t)=Y(t)[-a_2-b_2Y(t)+d_2X(t)]dt+Y(t)g_2(X(t), Y(t))dE_2(t),\\
\end{cases}
$$
with $b_1,b_2>0, d_1,d_2\geq 0, a_2\geq 0$, if $$\sum_{i=1}^2 g_i^2(x,y)<K(1+x+y+g_1^2(x,y)\wedge g_2^2(x,y)),$$
then one can show that
 \eqref{a.tight2} is satisfied with $\bc=(d_2,d_1)$.
\end{itemize}
\subsection{PDMP}\label{s:a_PDMP} It is well-known that a process $(\BX(t),r(t))$ satisfying \eqref{e1-pdm} and \eqref{e:tran}
is a Markov process with generator $\Lom$ acting on functions $G:\R_+^n\times\CN\mapsto\R$ that are continuously differentiable in $\bx$ for each $k\in\CN$ as
\begin{equation}
\Lom G(\bx, k)=\sum_{i=1}^n x_if_i(\bx,k)\frac{\partial G}{\partial x_i}(\bx,k)+\sum_{l\in\CN}q_{kl}(\bx)G(\bx,l).
\end{equation}

In order to have a well behaved process $(\BX(t), r(t))$ we make the following assumptions:
\begin{itemize}
  \item  There is a vector $\bc\in\R^{n,\circ}_+$ such that $\Lom (1+\bc^\top\bx)=\sum_ic_ix_if(x_i,k)<0$. Then $\BX(t)$ eventually enters a compact set and never leaves it. This makes it possible to reduce the dynamics to compact sets. Most models will satisfy this assumption. If it is violated one needs a different assumption which implies that  $\BX(t)$ returns fast to compact subsets of $\R_+^{n,\circ}$.
  \item Let $\gamma^+(\bx)$ denote the orbit set, that is, the set of all points from $\R_+^{n,\circ}$ that are reachable by some possible trajectory of \eqref{e1-pdm} with $\BX(0)=\bx$. Let $$\Gamma=\bigcap_{\bx\in\R^{n,\circ}_+}\bar{\gamma^+(\bx)}$$
      be the set of all points that are weakly accessible i.e. lie in the intersection of all the closures of orbit sets. We assume that $\Gamma\neq \emptyset$.  This implies that there is at least one point which can be close to trajectories started from any initial point from $\R_+^{n,\circ}$.
  \item There exists a point $\bx_0\in\Gamma$ which satisfies the strong bracket condition (see \cite[Definition 4.3]{BBMZ15} as well as \cite{B18, BHS18}). This condition ensures that the process is not too degenerate and is, in a sense, well behaved.
\end{itemize}
\subsection{SSDE} Many of the well-known facts about SDE carry over to SSDE. However, the proofs can be quite technical when the generator $q(\bx)$ of the switching process depends on the density $\bx$ of the species. We refer the reader to \cite{YZ09, NYZ17} for conditions and proofs of the existence and uniqueness of solutions, the Feller property, recurrence, transience, and ergodicity. In our models, as long as all the involved coefficients are smooth enough and $q(\bx)$ is bounded, continuous and irreducible the only additional ingredients are:
\begin{enumerate}
  \item The process returns quickly to compact subsets of the state space. This requires a boundedness/tightness assumption like \eqref{a.tight2} or the assumptions of Proposition 4.13 from \cite{B18}.
  \item Just like for PDMP, in order to get the existence of a stationary distribution, we need conditions which ensure some type of irreducibility and non-degeneracy - see the discussion on PDMP and \cite{B18}.
\end{enumerate}

\section{The numerical scheme}\label{s:a_conv}

\subsection{Convergence of the algorithm}
Let us prove the convergence of our algorithm. Let $u^{*}$ be
the density function of the invariant probability measure on $(0,
\infty)^{\infty} \times \mathcal{N}$. We let $\mathcal{N} = \{1\}$ for the
case of SDE. Without loss of generality,
assume the domain $\mathcal{D} = [a_{1}, b_{1}] \times \cdots \times
[a_{n} b_{n}]$ is divided into $N^{n}$ bins, with a step size $h =
(b_{1} - a_{1})/N = \cdots = (b_{n} - a_{n})/N$. Let $\mathbf{u}^{*}
\in \mathbb{R}^{N^{n}n_{0}}$ be the discrete probability density
function of $u^{*}$, such that $\mathbf{u}^{*}_{i_{1}, \cdots, i_{n},
  k} = u(x_{i_{1}}, \cdots, x_{i_{n}}, k)$, where $(x_{i_{1}}, \cdots,
x_{i_{n}})$ is the coordinate of the center of the $(i_{1}, \cdots,
i_{n})$-th box in the grid, and $1 \leq k \leq n_{0}$ is a discrete
state. Since a numerical solver of SDE, SSDE, or PDMP has some error,
the invariant probability measure of the Monte Carlo sampler is
usually different from $\mu^{*}$. We denote the invariant probability
measure of the Monte Carlo sampler by $\hat{\mu}$, and define the
corresponding probability density function $\hat{u}$ and discrete
probability distribution $\mathbf{\hat{u}}$ in a similar
way.

The idea of proof is straightforward. The main source of the error terms is $\mathbf{v} -
\mathbf{\hat{u}}$, which is ``noisy'' enough to be treated as a
centralized random vector. The optimization problem projects the error
term $\mathbf{e} = \mathbf{v} - \mathbf{\hat{u}}$ to a lower dimensional space. If
in addition $\mathbf{\hat{u}}$ is sufficiently close
to $\mathbf{u}^{*}$, then we can control the error of ${\bm \ell}$.

Similar as in  \cite{dobson2019efficient}, we need the following assumptions to prove
the convergence result.

\begin{itemize}
  \item[(a)] Entries of $\mathbf{e} = \{ e_{i_{1}, \cdots,
      i_{n},k}\}_{1 \leq i_{1}, \cdots, i_{n} \leq N, 1 \leq k \leq
      n_{0}}$ are uncorrelated random variables with expectation $0$
    and variance no greater than $\zeta^{2}$.
\item[(b)] The boundary value problem for equation \eqref{e:FPS} (or
  equation \eqref{switchFPE} for SSDE) is well-posed and has a unique
  solution.
\item[(c)] The finite difference scheme for equation \eqref{e:FPS} (or
  equation \eqref{switchFPE} for SSDE) is convergent for a boundary
  value problem on $\mathcal{D} \times \mathcal{N}$ with $L^{\infty}$
  error $O(h^{p})$.
\end{itemize}

The $L^{2}$ error $\|\ell_{*} - u^{*}\|$ is measured by $h^{n/2}
\mathbb{E}[\| {\bm \ell} - \mathbf{u}^{*}\|]$, which is the numerical
integration of the error term over the grid. We have the following theorem.

\begin{thm}
\label{convergence}
Assume (a)-(c) holds. We have
$$
  h^{n/2} \mathbb{E}[\| {\bm \ell} - \mathbf{u}^{*} \|_{2}] \leq
  O(h^{1/2} \zeta) + O(h^{p}) + \| \hat{u} - u^{*}\|_{2} \,.
$$
\end{thm}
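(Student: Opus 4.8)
The plan is to recognize the computed solution as an orthogonal projection and then split the error into a deterministic discretization part and a random sampling part. First I would observe that the constrained problem \eqref{optimization} has the explicit solution ${\bm \ell} = P\mathbf{v}$, where $P$ is the orthogonal projection of $\mathbb{R}^{N^n n_0}$ onto the linear subspace $\mathrm{Ker}(A)$: minimizing $\|{\bm \ell}-\mathbf{v}\|_2$ subject to $A{\bm \ell}=0$ is precisely orthogonal projection onto that subspace. This already encodes the heuristic stated after the algorithm, namely that the optimization discards the component of the error lying outside the (low-dimensional) kernel.

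Next I would introduce an auxiliary deterministic vector $\mathbf{w}$, the solution of the discretized Fokker--Planck boundary value problem whose boundary values are set equal to the exact values of $\mathbf{u}^*$ on $\partial\mathcal{D}$. By construction $\mathbf{w}$ satisfies every interior difference equation, so $A\mathbf{w}=0$, i.e. $\mathbf{w}\in\mathrm{Ker}(A)$ and $P\mathbf{w}=\mathbf{w}$; and by the convergence assumption (c) it obeys $\|\mathbf{w}-\mathbf{u}^*\|_\infty = O(h^p)$. Writing $\mathbf{v}=\mathbf{\hat{u}}+\mathbf{e}$ and using $P\mathbf{w}=\mathbf{w}$, I would derive the decomposition
\begin{equation*}
{\bm \ell}-\mathbf{u}^* = P\mathbf{e} + P(\mathbf{\hat{u}}-\mathbf{u}^*) + P(\mathbf{u}^*-\mathbf{w}) - (\mathbf{u}^*-\mathbf{w}),
\end{equation*}
so that, taking norms and expectations and using $\|P\,\cdot\,\|_2\le\|\cdot\|_2$,
\begin{equation*}
\mathbb{E}\|{\bm \ell}-\mathbf{u}^*\|_2 \le \mathbb{E}\|P\mathbf{e}\|_2 + \|\mathbf{\hat{u}}-\mathbf{u}^*\|_2 + 2\,\|\mathbf{u}^*-\mathbf{w}\|_2 .
\end{equation*}

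The three terms are then estimated separately. For the random term, since $P$ is symmetric idempotent with $\mathrm{tr}\,P=\dim\mathrm{Ker}(A)$ and assumption (a) gives $\mathbb{E}[\mathbf{e}\mathbf{e}^\top]\preceq \zeta^2 I$, I obtain $\mathbb{E}\|P\mathbf{e}\|_2^2 = \mathrm{tr}\!\left(P\,\mathbb{E}[\mathbf{e}\mathbf{e}^\top]\right) \le \zeta^2\dim\mathrm{Ker}(A)$, whence by Jensen $\mathbb{E}\|P\mathbf{e}\|_2\le \zeta\sqrt{\dim\mathrm{Ker}(A)}$. The discretization term is controlled by the $L^\infty$ bound, $\|\mathbf{u}^*-\mathbf{w}\|_2 \le \sqrt{N^n n_0}\,\|\mathbf{u}^*-\mathbf{w}\|_\infty = O(h^{-n/2})\,O(h^p)$, and $\|\mathbf{\hat{u}}-\mathbf{u}^*\|_2$ is, up to the quadrature factor $h^{n/2}$, the $L^2$ distance $\|\hat{u}-u^*\|_2$. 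Multiplying the displayed inequality by $h^{n/2}$ and substituting these bounds gives the claim, provided $\dim\mathrm{Ker}(A)=O(h^{-(n-1)})$, since then $h^{n/2}\zeta\sqrt{\dim\mathrm{Ker}(A)}=O(h^{1/2}\zeta)$ and $h^{n/2}\|\mathbf{u}^*-\mathbf{w}\|_2=O(h^p)$.

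The main obstacle is exactly the kernel-dimension estimate $\dim\mathrm{Ker}(A)=O(h^{-(n-1)})$, which is what converts the projection into a genuine reduction of the sampling error: the scaling works only because the kernel dimension matches the number of boundary nodes rather than the full $O(h^{-n})$ count of grid points. Establishing it requires that $A$, whose rows are the interior difference equations, have full row rank, so that $\dim\mathrm{Ker}(A)$ equals the number of boundary grid nodes, of which there are $O(h^{-(n-1)})$. This full-rank property is precisely the well-posedness assumption (b) — that prescribing boundary data determines a unique discrete interior solution — and it is also what makes $\mathbf{w}$ well defined and guarantees $\mathbf{w}\in\mathrm{Ker}(A)$. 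I would therefore devote most of the work to verifying that (b) yields this rank condition; the remaining projection, trace, and quadrature estimates are the routine bounds sketched above.
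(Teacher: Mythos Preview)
Your proposal is correct and follows essentially the same route as the paper's proof. The paper also introduces an auxiliary vector $\mathbf{u}^{lin}\in\mathrm{Ker}(A)$ (your $\mathbf{w}$) coming from the discretized boundary value problem with exact boundary data, decomposes ${\bm\ell}-\mathbf{u}^{lin}=P(\mathbf{v}-\hat{\mathbf{u}})+P(\hat{\mathbf{u}}-\mathbf{u}^{lin})$, bounds the random term by $\sqrt{\dim\mathrm{Ker}(A)}\,\zeta$ via an orthonormal-basis computation equivalent to your trace argument, and then invokes the triangle inequality; the only cosmetic difference is that the paper simply asserts $\dim\mathrm{Ker}(A)=N^n-(N-2)^n=O(N^{n-1})$ as a grid count, whereas you correctly trace this back to the full row rank of $A$ implied by the well-posedness assumption (b).
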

\begin{proof}
The proof is generalized from Theorem 2.1 of \cite{dobson2019efficient}. Consider
equation \eqref{e:FPS} (resp. \eqref{switchFPE}) on the extended
domain $\tilde{\mathcal{D}} = [a_{1} - h, b_{1} + h] \times \cdots \times [a_{n}
-h, b_{n} + h] \times \mathcal{N}$ with boundary condition
\begin{equation}
\label{bvp}
  \mathcal{L} w = 0 \quad \mbox{ in } \tilde{\mathcal{D}}, \quad w =
  u^{*} \quad \mbox{ on } \partial \tilde{\mathcal{D}}
\end{equation}
By assumption (b), $u^{*}|_{\tilde{\mathcal{D}}}$ solves equation
\eqref{bvp}.

Solving this boundary value problem by a finite difference method, we
have a linear system
$$
\begin{bmatrix}
\mathbf{A}& \mathbf{0}\\ \mathbf{B} & \mathbf{C}
\end{bmatrix}
\begin{bmatrix}
\mathbf{u}^{lin}\\ \mathbf{u}_{0}
\end{bmatrix}
=
\begin{bmatrix}
\mathbf{0} \\ \mathbf{0}
\end{bmatrix} \,,
$$
where the matrix $\mathbf{A}$ is same as the linear constraint in the
optimization problem \eqref{optimization}, and the matrices $\mathbf{B}$ and
$\mathbf{C}$ come from the boundary condition. Therefore,
$\mathbf{u}^{lin}$ satisfies the linear constraint $\mathbf{A}
\mathbf{u} = \mathbf{0}$. In addition, by assumption (c), we have
$$
  \| \mathbf{u}^{lin} - \mathbf{u}^{*} \|_{\infty} = O(h^{p}) \,.
$$

Let $\mathbf{u}$ be the solution to the optimization problem
\eqref{optimization}. Let $P$ be the projection matrix to
$\mathrm{Ker}( \mathbf{A})$. Since $\mathbf{u} \in \mathrm{Ker}(
\mathbf{A})$, we have
$$
  \mathbf{u} - \mathbf{u}^{lin} = P \mathbf{v} - \mathbf{u}^{lin} - P(
  \mathbf{v} - \mathbf{u}^{lin}) = P( \mathbf{v} - \mathbf{\hat{u}}) +
  P( \mathbf{\hat{u}} - \mathbf{u}^{lin}) \,.
$$
This implies
$$
  \mathbb{E}[\| \mathbf{u} - \mathbf{u}^{lin}\|_{2} ]\leq  \mathbb{E}[\|
  P(\mathbf{v} - \mathbf{\hat{u}})\|_{2}] +  \| P( \mathbf{\hat{u}} -
  \mathbf{u}^{lin})\|_{2} \,.
$$
The second term has bound
\begin{align*}
  \| P( \mathbf{\hat{u}} -
  \mathbf{u}^{lin})\|  &\leq \| \mathbf{\hat{u}} -
  \mathbf{u}^{lin}\|_{2} \leq \| \mathbf{\hat{u}} -
  \mathbf{u}^{*}\|_{2} + \| \mathbf{u}^{*} - \mathbf{u}^{lin}\|_{2} \\
&\leq O(h^{p})O(N^{n/2}) + \| \hat{u} - u^{*} \|_{2} O(N^{n/2})\,.
\end{align*}
By assumption (a), $\mathbf{e} = \mathbf{v} - \mathbf{\hat{u}}$ is a
random vector with uncorrelated entries. Note that the dimension of
$\mathrm{Ker}(\mathbf{A})$ is $d(N) := N^{n} - (N-2)^{n}
=O(N^{n-1})$. Let $S $ be an orthogonal matrix such that the first $d(N)$ columns of $S^{T}$
form an orthonormal basis of $\mathrm{Ker}(\mathbf{A})$. Then $S$ is a
change-of-coordinate matrix that changes $\mathrm{Ker}( \mathbf{A})$
to the subspace generated by the first $d(N)$ coordinates.

Let $S^{T} = [ \mathbf{s}_{1}, \cdots, \mathbf{s}_{N^{n}}]$ and $S \mathbf{e} = [\hat{e}_{1}, \cdots, \hat{e}_{N^{n}}]^{T}$. The
projection gives
$$
  P \mathbf{e} = \sum_{i = 1}^{d(N)} \hat{e}_{i} \mathbf{s}_{i}
$$
This implies
$$
  \mathbb{E}\left[ \| P \mathbf{e} \|_{2}\right] = \mathbb{E}\left[ \left ( \sum_{i =
      1}^{d(N)} \hat{e}_{i}^{2} \right )^{1/2}\right] \leq \left (\mathbb{E}\left[  \sum_{i =
      1}^{d(N)} \hat{e}_{i}^{2} \right] \right )^{1/2}
$$
because $S$ is an orthonormal matrix. By assumption (a), the entries of $\mathbf{e}$
are uncorrelated and have expectations $0$. This implies
$$
  \mathbb{E}[ \hat{e}_{i}^{2}] = \sum_{i = 1}^{N^{d}}S_{ji}^{2}
  \mathbb{E}[ e_{j}^{2}] \leq \zeta^{2}
$$
because $S$ is an orthogonal matrix. This gives
$$
  \mathbb{E}[ \| \mathbf{v} - \mathbf{\hat{u}}\|_{2}] \leq \sqrt{d(N)
  } \zeta \,.
$$

Finally, the triangle inequality yields
$$
  \mathbb{E}[\| \mathbf{u} - \mathbf{u}^{*} \|_{2} ]\leq \mathbb{E}[\|
  \mathbf{u} - \mathbf{u}^{lin} \|_{2} ] + \| \mathbf{u}^{lin} -
  \mathbf{u}^{*}\|_{2} \leq \sqrt{d(N)}\zeta + O(N^{n/2}h^{p}) + \|
  \hat{u} - u^{*}\|_{2} O(N^{n/2}) \,.
$$
Since $h = O(N^{-1})$ and $d(N) = O(N^{n-1})$, the proof is completed
by multiplying both sides by $h^{n/2}$.

\end{proof}

We remark that the empirical accuracy of $\mathbf{u}$ is much better
than the theoretical bound given in Theorem \ref{convergence}. This is
because the projected error term $P\mathbf{e}$ tends to concentrate at the
boundary of the domain. This can be justified by computing principal
angles between $\mathrm{Ker}( \mathbf{A})$ and the subspace spanned by
the boundary of the domain. See our discussion in \cite{dobson2019efficient}.

\subsection{Sensitivity analysis}

It follows from Theorem \ref{convergence} that the sample quality
plays a key role in the accuracy of the data-driven solver. If the
difference between $\hat{\mu}$ and $\mu^{*}$ is too large, the
accuracy of the numerical solution ${\bm \ell}$ could be
questionable. In other words, it is crucial to know the sensitivity of
the invariant probability measure against the perturbation caused by
the time-discretization when simulating an SDE (or SSDE) numerically.

Let $d$ be a distance between probability measures. Let $P^{t}$ and
$\hat{P}^{t}$ be the transition probability kernel of the SDE (or
SSDE) and its numerical scheme, respectively.  It follows from
\cite{dobson2019using} that the sensitivity of $\mu^{*}$ can be controlled by the following triangle inequality
$$
  d(\mu^{*}, \hat{\mu}) \leq d(\mu^{*} P^{T} ,\mu^{*} \hat{P}^{T} ) + d(\mu^{*}
  \hat{P}^{T} , \hat{\mu} \hat{P}^{T}) \,,
$$
where $T > 0$ is a fixed finite time span. If $P^{T}$ is contracting in the metric space $(\mathcal{P}, d)$,
where $\mathcal{P}$ is the collection of all probability measures on
$(0, \infty)^{n} \times \mathcal{N}$ then $$d(\mu^{*}\hat{P}^{T} , \hat{\mu} \hat{P}^{T}) \leq \alpha d(\mu^{*},
  \hat{\mu})$$ for some $\alpha < 1$. If in addition $d(\mu^{*}
  P^{T} ,\mu^{*} \hat{P}^{T} ) \leq \epsilon$ for some $\epsilon > 1$, we have $d(\mu^{*}, \hat{\mu})
  \leq \epsilon (1 - \alpha)^{-1}$.

In \cite{dobson2019using}, we let $d = d_{w}$, where $d_{w}$ is the
1-Wasserstein distance induced by the metric $\rho(x,y) = \min\{1, |x
- y| \}$. Then we use linear extrapolation of the truncation error
to estimate $d_{w}(\mu^{*}P^{T} ,\mu^{*} \hat{P}^{T} ) $, and use
the coupling method to compute the rate of contraction $\alpha$.

Let $dt$ be
the time step size of the numerical SDE (or SSDE) solver. Let
$\hat{X}^{dt}_{n}$ be the state of $n$-th step of the numerical
trajectory with a time step $dt$. Let $T = Kdt$ for some even integer $K$. The
estimator of $d_{w}(\mu^{*}P^{T} ,\mu^{*} \hat{P}^{T} ) $ is given by
$$
  d_{w}(\mu^{*}P^{T} ,\mu^{*} \hat{P}^{T} )  \approx
  \frac{c}{N}\sum_{i = 0}^{N - 1}\rho(\hat{X}^{dt}_{i,T},
  \hat{X}^{2dt}_{i,T})  \,,
$$
where $\hat{X}^{dt}_{i, T} = \hat{X}^{dt}_{K(i+1)}$, and
$\hat{X}^{2h}_{i, T}$ is the terminal value of a ``parallel'' time-$2dt$ trajectory
with length $T$. In other words, after each time $T$, we reset
$\hat{X}^{2h}_{0} = \hat{X}^{h}_{iT}$ and compute $\hat{X}^{2h}_{n}$
using the same random term as in $\hat{X}^{h}_{n}$, and denote $\hat{X}^{2h}_{i,T} =
\hat{X}^{2h}_{K/2}$ after $K/2$ steps. The parameter $c$ comes from the
linear extrapolation. For example, $c = 1$ if the numerical scheme has
strong convergence with order $1$.

The idea is that $\hat{X}^{h}_{iT}$ for $i = 0, \cdots, N$ are sampled from
$\hat{\mu}$. Starting from each $\hat{X}^{h}_{iT}$, $\rho(X_{T},
\hat{X}^{dt}_{K})$, the distance between true SDE trajectory and its
numerical approximation, can be obtained from a linear extrapolation of $\rho(\hat{X}^{dt}_{i,T},
  \hat{X}^{2dt}_{i,T})$. A calculation shows that
$$
  \frac{c}{N}\sum_{i = 0}^{N - 1}\rho(\hat{X}^{dt}_{i,T},
  \hat{X}^{2dt}_{i,T})
$$
gives an approximate upper bound of $d_{w}(\mu^{*}P^{T} ,\mu^{*}
\hat{P}^{T} )$. See our discussion in \cite{dobson2019using}.

The contraction rate $d(\mu^{*}\hat{P}^{T} , \hat{\mu} \hat{P}^{T})$
can be approximated by using the coupling method. It is well known
that $( \mathcal{X}_{t}, \mathcal{Y}_{t}) \in \mathbb{R}^{2n}$ is a
coupling of $X_{t} \in \mathbb{R}^{n}$ if two marginal distributions
have the same law of $X_{t}$.  Let $( \mathcal{X}_{t}, \mathcal{Y}_{t})$ be a coupling such
    that if $\mathcal{X}_{t} = \mathcal{Y}_{t}$, then $\mathcal{X}_{s}
    = \mathcal{Y}_{s}$ for all $s > t$. Let $\tau_{c} = \inf_{t} \{ \mathcal{X}_{t} = \mathcal{Y}_{t} \}$ be
  the coupling time. Then it follows from the coupling inequality that
$$
  \mathrm{d}_{w}( \mu \hat{P}^{t}, \nu \hat{P}^{t}) \leq \mathbb{P}_{\mu, \nu}[ \tau_{c} > t] \,.
$$

It is usually very difficult to sharply estimate $\mathbb{P}[ \tau_{c}
> t]$ using rigorous methods. Instead, we can run Monte Carlo simulation
to obtain the exponential tail of $\mathbb{P}[ \tau_{c} > t]$
numerically. If numerical simulation gives $\mathbb{P}[ \tau_{c} > t]
\approx e^{-c t}$, then the rate of contraction at time $T$ is
approximately $\alpha = e^{-cT}$. This gives an estimator
$$
  d_{w}(\mu^{*},\hat{\mu})  \approx
  \frac{c}{N(1 - e^{-cT})}\sum_{i = 0}^{N - 1}\rho(\hat{X}^{dt}_{i,T},
  \hat{X}^{2dt}_{i,T})  \,.
$$

It remains to comment regarding the suitable construction of the coupling. The
simplest coupling is to run to independent numerical trajectories of
$\hat{X}^{dt}_{n}$ until they meet. This coupling has very low
efficiency. Also it is hard to have trajectories meet in
a continuous space. Instead, we can run two numerical trajectories
using either the same noise terms (synchronous coupling), or reflected
noise terms (reflection coupling), until they are close enough. Then
we can compare the probability density of the next step of the
numerical trajectory. Assume the probability density functions of
$\mathcal{X}_{t+1}$ and $\mathcal{Y}_{t+1}$ are $f_{X}$ and $f_{Y}$
respectively. Then with probability
$$
  \int_{\mathbb{R}^{n}} \min \{ f_{X}(\mathbf{x}), f_{Y}( \mathbf{x})
  \} \mathrm{d} \mathbf{x} \,,
$$
we can couple $\mathcal{X}_{t+1}$ and $\mathcal{Y}_{t+1}$ at step
$t+1$. We refer to work by \cite{dobson2019using, li2020numerical} for the full details.

\end{document}